\title{Symbolic Execution Game Semantics}
\author{Yu-Yang Lin}{Queen Mary University of London, UK}{}{}{}
\author{Nikos Tzevelekos}{Queen Mary University of London, UK}{}{}{}
\authorrunning{Lin et al.} 
\keywords{game semantics,
symbolic execution,
higher-order open programs} 
\newcommand\nin{\not\in}
\newcommand\cutout[1]{}
\newcommand\sints{\mathtt{SInts}}
\newcommand\Tunit{\atype{unit}}
\newcommand\Tint{\atype{int}}
\newcommand{\refR}{\mathtt{global}~}
\newcommand\comp{\text{\bf ;}}
\newcommand\textilde[1]{$\widetilde{\text{#1}}$}
\newcommand\shorten[2]{#2}
\newcommand{\ret}{ret} 
\newcommand{\assert}{\mathtt{assert}} 
\newcommand{\refs}{\mathtt{Refs}}
\newcommand{\meths}{\mathtt{Meths}}
\newcommand{\vars}{\mathtt{Vars}}
\newcommand{\ints}{\mathtt{Int}}
\newcommand{\globalL}{\mathtt{global}~}
\newcommand{\publicL}{\mathtt{public}~}
\newcommand{\abstractL}{\mathtt{abstract}~}
\newcommand\atype[1]{\mathtt{#1}}
\newcommand\symto{\to_s}
\newcommand\xsymto[1]{\xrightarrow{#1}_s}
\newcommand\compsemto{\to_{1,2}}
\newcommand\xgamto[1]{\xrightarrow{#1}_G}
\newcommand\ite[3]{\mathtt{if~}#1\mathtt{~then~}#2\mathtt{~else~}#3}
\newcommand\letm[2]{\mathtt{let~}#1\mathtt{~in~}#2}
\newcommand\letrec[2]{\mathtt{letrec~}#1\mathtt{~in~}#2}
\newcommand\ifm[3]{\mathtt{if}~#1~\mathtt{then}~#2~\mathtt{else}~#3}
\newcommand{\xtwoheadrightarrow}[2][]{%
  \xrightarrow[#1]{#2}\mathrel{\mkern-14mu}\rightarrow
}
\newcommand{\holik}{\text{HOLiK}}
\newcommand{\holib}{\text{HOLi}}
\newcommand{\coneqct}{\text{Coneqct}}
\newcommand{\mochi}{\text{MoCHi}}
\newcommand{\klee}{\text{KLEE}}
\newcommand{\jcute}{\text{jCUTE}}
\newcommand{\cbmc}{\text{CBMC}}
\newcommand{\reguard}{\text{ReGuard}}
\newcommand{\oyente}{\text{Oyente}}
\newcommand{\majan}{\text{Majan}}
\newcommand{\scv}{\text{SCV}}
\newcommand{\racket}{\text{Racket}}
\newcommand{\kfw}{\mathbb{K}}
\newcommand\boldemph[1]{\textbf{\emph{#1}}}
\newcommand\dom{\mathrm{dom}}
\newcommand\sem[1]{\llbracket #1\rrbracket}
\newcommand{\pair}[1]{\langle #1 \rangle}
\newcommand{\eval}{\mathcal{E}}
\newcommand{\pub}{\mathcal{P}}
\newcommand{\abs}{\mathcal{A}}
\DeclarePairedDelimiter\sizeof{\lvert}{\rvert}%
\newcommand\evalbox[1]{\llparenthesis{#1}\rrparenthesis}
\newcommand\evaltag[2]{\evalbox{#1}^{\pair{#2}}}
\newcommand\preG{} 
\newcommand{\syncomp}{\check{;}}
\newcommand{\confcomp}{\curlywedge}
\newcommand{\semcomp}{\oslash}
\newcolumntype{L}{>{$}l<{$}} 
\newcolumntype{R}{>{$}r<{$}} 
\newcolumntype{C}{>{$}c<{$}} 
\DeclareRobustCommand\bigop[1]{%
	\mathop{\vphantom{\sum}\mathpalette\bigop@{#1}}\slimits@
}
\newcommand{\bigop@}[2]{%
	\vcenter{%
		\sbox\z@{$#1\sum$}%
		\hbox{\resizebox{\ifx#1\displaystyle1.8\fi\dimexpr\ht\z@+\dp\z@}{!}{$\m@th#2$}}%
	}%
}
\begin{document}
\maketitle

\begin{abstract}
We present a framework for symbolically executing and model checking higher-order programs with external (open) methods.
We focus on the client-library paradigm and in particular we aim to check libraries with respect to any definable client.
We combine traditional symbolic execution techniques with operational game semantics to build a symbolic execution semantics that captures arbitrary external behaviour.
We prove the symbolic semantics to be sound and complete.
This yields a bounded technique by imposing bounds on the depth of recursion and callbacks.
We provide an implementation of our technique in the $\kfw$ framework and showcase its performance on a custom benchmark based on higher-order coding errors such as reentrancy bugs.
\end{abstract}

\section{Introduction}
{Two important challenges in program verification} are {state-space explosion}  and the {environment problem}.
The former refers to the need to investigate infeasibly many states, while the latter concerns cases where the code depends on an environment that is not available for analysis.
%
%
State-space explosion has been approached with a range of techniques, which have led to verification tools being nowadays routinely used on industrial-scale code (e.g.~\cite{ASTREE,INFER,CBMC}).
The environment problem, however, remains largely unanswered: {verification techniques often require} the whole code to be present for the analysis and, in particular, cannot analyse components like libraries where parts of the code are missing (e.g.\ the client using the library).
This problem is particularly acute in higher-order programs, where the interaction between a program and its environment can be intricate and e.g.\ involve callbacks or reentrant calls.
In this paper we address this latter problem by combining \emph{game semantics}, a semantics theory for higher-order programs, with \emph{symbolic execution}, a technique that uses \emph{symbolic values} to explore multiple execution paths of a program.

To showcase the importance and challenges of the environment problem, following is a simple example of a library written in a sugared version of \holib, the vehicle language of this paper. The example is a simplified implementation of ``The DAO'' smart contract, a failed decentralised autonomous organisation on the Ethereum blockchain platform~\cite{dao}. As with
\begin{wrapfigure}[9]{l}{.51\linewidth}\vspace{-4mm} 
\begin{lstlisting}[label=dao]
  import send:(int -> unit)
  int balance := 100;

  public withdraw (m:int) :(unit) = 
    if (not (!balance < m)) then 
         send(m);
         balance := !balance - m;
         assert(not(!balance < 0))
    else ();
\end{lstlisting}\end{wrapfigure}
libraries, the challenge in analysing smart contracts is that the client code is not available. We must thus generate all possible contexts in which the contract can be called. In this case, the error is caused by a reentrant call from the \texttt{send()} method, which is provided by the environment. When this method is called, the environment takes control and is allowed to call any method in the library. If a client were to call \texttt{withdraw()} within its \texttt{send()} method, the recursive call would drain all the funds available, which is simulated in this example by a negative balance. This happens because the method is manipulating a global state, and is updating it after the external call.
We can see that an analysis capturing this error would need to be able to predict an intricate environment behaviour. Moreover, such an analysis should ideally  only predict realisable environment behaviours.

Symbolic execution~\cite{symbolic:select,symbolic:dissect,symbolic:king}  explores all paths of a program using symbolic values instead of concrete input values. Each symbolic path holds a path condition (a SAT formula) that is satisfiable if and only if the path can be concretely executed.
While the resulting analysis is unbounded in general, by restricting our focus to bounded paths we can soundly catch errors, or affirm the absence thereof up to the used bound.
Game semantics~\cite{GS,GS2}, on the other hand, models higher-order program phrases in isolation as 2-player games: sequences of computational \emph{moves} (method calls and returns) between the program and its hypothetical environment. The power of the technique lies in its use of combinatorial conditions to precisely allow those game plays that can be realised by including the program in an actual environment. Moreover, the theory can be formulated operationally in terms of a trace semantics for open terms~\cite{OGS1,OGS2,OGS3} which, in turn,
lends itself to a symbolic representation.
The latter yields a symbolic execution technique that is \emph{sound and complete} in the following sense: given an open program, its symbolic traces match its concrete traces, which match its realisable traces in some environment.

Returning to the DAO example,
we can model the ensuing interaction as a sequence of moves, alternating between the environment and the library. Any finite sequence of moves (that leads to an assertion violation) is a trace defining a counterexample. 
Running the example in {\holik}, our implementation of the symbolic semantics in the $\kfw$ Framework~\cite{k:framework}, the following minimal symbolic trace is automatically found:
\begin{align*}\\[-6mm]
&call\pair{withdraw,x_1} \cdot call\pair{send,x_1} \cdot call\pair{withdraw,x_2}\\
&\qquad\cdot call\pair{send,x_2} \cdot ret\pair{send,()} \cdot ret\pair{withdraw,()} \cdot ret\pair{send,()}\\[-5.75mm]
\end{align*}
where $x_1$ is the original call parameter, and $x_2$ is the parameter for the reentrant call, satisfiable with values $x_1 = 100$ and $x_2 = 1$. A fix would be to swap line 6 and 7, to update internal state before passing control.

In Appendix~\ref{app:examples} we look at a few more examples of libraries that exhibit errors due to high-order behaviours. We provide three examples: a file lock example, a double deallocation example, and an unsafe implementation of flat-combining.

Overall, this paper contributes a novel symbolic execution technique based on game semantics to precisely model the behaviour of higher-order stateful programs. Specifically:
\begin{inparaitem}
\item We present a symbolic trace semantics for higher-order libraries that captures the behaviour of an unknown environment, and prove it sound and complete: i.e.\ it produces no spurious error traces, and is able to produce the complete execution tree of any library.
\item By bounding the depth of nested calls and the \emph{insistence} of the environment in calling library methods, we derive a sound and bounded-complete technique to check higher-order libraries for errors. 
\item We implement the latter in the $\kfw{}$ semantical framework~\cite{k:framework} to produce a sound and bounded-complete tool for higher-order libraries as a proof of concept. We test our implementation with benchmarks adapted from the literature.
\end{inparaitem}
Some material has been delegated to an Appendix. \shorten{Full proofs can be found in an extended version of the paper~\cite{FULL}.}{}


\section{A Language for Higher-Order Libraries: \holib}

We introduce \holib, a language for higher-order libraries which define methods to be used by an external client, and in turn require external methods (provided by the client).
We give in \holib\ an operational semantics for terms that {integrates a counter for the depth of nested calls that a program phrase can make}.
We then extend {this counting semantics} to open terms by means of a trace semantics. We show that the trace semantics of libraries is sound and complete for reachability of errors under any external client.


\subsection{Syntax and operational semantics}

\newcommand\qweqwe{\ \ \,}

\begin{figure}[t]\small
\[
\begin{array}{@{\!\!\!\!\!\!\!\!\!\!}l@{\!\!\!\!\!\!\!\!\!\!\!\!\!\!\!\!\!\!\!\!}l}
\begin{array}{@{}rrl}
  & \mathit{Libraries} \ \ L ::=& B \mid \abstractL m; L
  \\[.75mm]
  & \mathit{Blocks}\ \ B ::=& \varepsilon \mid \publicL m = \lambda x. M; B\\[.5mm]
&&\mid  m = \lambda x. M; B
   \mid \refR r := i; B\\[.5mm]
  && \mid \refR r := \lambda x.M; B
\\[.75mm]
& \mathit{Clients} \ \ C \ ::=&\ L;\, \mathtt{main} = M
\end{array}
  &
    \begin{array}{rrl}
	& \mathit{Terms} \  \ M ::=& m \mid i \mid ()\mid x\mid \lambda x.M
	\mid r:=M 
	\mid {!r}\\[.5mm] 
	&& \mid M \oplus M 
	\mid \langle M,M \rangle 
	\mid \pi_1 M 
	\mid \pi_2 M \\[.5mm] &&
	\mid MM
	\mid \ite{M}{M}{M} \\[.5mm]
	&&
           \mid  \letrec{x = \lambda x.M}{M}\\[.5mm]
      && \mid \letm{x=M}{M}
           \mid  \assert(M)
    \end{array}\\[-1mm]
\end{array}\]
\hrule
\begin{gather*}\\[-8mm]
{\infer{ \preG () : \atype{unit} }{}}
\qweqwe
{\infer{ \preG i : \atype{int} }{}}
\qweqwe
\infer{ \preG x : \theta }{ x\in\vars_\theta }
\qweqwe
\infer{ \preG m : \theta \rightarrow \theta' }{ m \in \meths_{\theta,\theta'} }
\qweqwe
\infer{  M \oplus M' : \atype{int} }{  M, M' : \atype{int} }
\qweqwe		
\infer{ \preG \ifm{M}{M_1}{M_0} : \theta }{ \preG M : \atype{int}\qquad  \preG M_1,M_0 : \theta }
\\[1mm]
\infer{ \preG \langle M , M' \rangle : \theta_1 \times \theta_2 }
  { \preG M : \theta_1 \quad
	\preG M' : \theta_2 }
\qweqwe
\infer{ \preG \pi_i \langle M , M' \rangle : \theta_i}{ \preG \langle M , M' \rangle : \theta_1 \times \theta_2 }
\qweqwe
\infer{ \preG {!r} : \theta }{ r \in \refs_{\theta} }
\qweqwe
\infer{\preG r := M : \atype{unit} }{ r \in \refs_{\theta} \quad \preG M : \theta }
\qweqwe 
\infer{ \preG M'\,M : \theta' }
{ M': \theta \rightarrow \theta' \quad \preG M : \theta }
\\[1mm]
\infer{ \preG \lambda x . M : \theta \rightarrow \theta' }{ M : \theta'\quad  x : \theta }
\qweqwe
\infer{ \preG \letm{x = M}{M'} : \theta' }{ \preG x,M : \theta \quad  M' : \theta' }
\qweqwe 
\infer{ \preG \letrec{x = \lambda y . M}{M'} : \theta'}{ x,\lambda y.M : \theta \to \theta'' \quad
M' : \theta' }
\qweqwe
\infer{ \preG \assert(M) : \atype{unit} }{ M : \atype{int} }\\[-9mm]
\end{gather*}
\caption{Syntax and typing rules of \holib.}\label{fig:holib}
\end{figure}

A library in  \holib\ is a collection of typed higher-order methods. A client is simply a library with a main body. Types are given by the grammar:
\[\theta\ ::= \ \atype{unit}\mid \atype{int} \mid \theta\times\theta\mid \theta\to\theta\]

We use countably infinite sets $\meths$, $\refs$ and $\vars$ for method, global reference and variable names, ranged over by $m$, $r$ and $x$ respectively, and variants thereof; while
$i$ is for ranging over the integers.
We use $\oplus$ to range over a set of binary integer operations, which we leave unspecified.
Each set of names is typed, that is, it can be expressed as a disjoint union as follows: \
$
\meths = \biguplus\nolimits_{\theta,\theta'} \meths_{\theta,\theta'}, \;\;
\refs = \biguplus\nolimits_{\theta\not=\theta_1\times\theta_2} \refs_\theta,\;\;
\vars = \biguplus\nolimits_\theta \vars_\theta.
$

The full syntax and typing rules are given in Figure~\ref{fig:holib}.
%
%
%
Thus, a library consists of
abstract method declarations, followed by blocks of public and private method and reference definitions. 
A method is considered private unless it is declared $\mathtt{public}$.
Each public/private method and reference is defined once.
Abstract methods are not given definitions: these methods are external to the library. Public, private and abstract methods are all disjoint.

Libraries are well typed if all their method and reference definitions are well typed (e.g.\ $\publicL m=\lambda x.M$ is well typed if $m:\theta$ and $\lambda x.M:\theta$ are both valid for the same type $\theta$) and only mention methods and references that are defined or abstract.
A client $L;\, \mathtt{main} = M$ is well typed if
$M:\Tunit$ is valid and
$L;\, m = \lambda x.M$ is well typed for some fresh $x,m$.
A library/client is \emph{open} if it contains abstract methods. This is different to open/closed terms: we call a term \emph{open} if it contains free variables.

\begin{remark}
By typing variable, reference and method names, we do not need to provide a context in typing judgements. 
Note that the references we use are of non-product type and, more importantly, \textbf{global} to the library: a term can use references but not create them locally or pass them as arguments (we discuss how to include such references in Appendix~\ref{general:refs}). 
\end{remark}

\begin{example}
The DAO-attack example from the Introduction can be written in \holib\ as:
\begin{align*}
  & \abstractL send;\;\;
   \globalL bal := 100;\\
  & \publicL wdraw =\\
    & \qquad\quad \lambda x.\ \ifm{{!bal} \geq x} 
   {({send}(x); bal := {!bal}-x; \assert({!bal}\geq0))}{()}
\end{align*}
where $send,wdraw \in\meths_{\Tint,\Tunit}$, $bal\in\refs_{\Tint}$, and $M;M'$ stands for $\letm{\_=M}{M'}$.
\end{example}

%


\begin{figure}[t]\small
\[\begin{array}{ll}
(E[\letm{x=v}{M}],R,S,k)\to(E[M\{v/x\}],R,S,k) &
(E[\pi_j\pair{v_1,v_2}],R,S,k)\to(E[v_j],R,S,k)\\[1mm]
(E[r:=v],R,S,k)\to(E[()],R,S[r \mapsto v],k) &
(E[!r],R,S,k)\to(E[S(r)],R,S,k) \\[1mm]
(E[\ifm{i}{M_1}{M_0}],R,S,k)\to(E[M_j],R,S,k)\;\; (1) 
& (E[i_1 \oplus i_2],R,S,k)\to(E[i],R,S,k)\;\;(2) \\[1mm]
    (E[\lambda x.M],R,S,k)\to(E[m],R\uplus\{m\mapsto\lambda x.M\},S,k)
& (E[\assert(i)],R,S,k)\to(E[()],R,S,k)\;\; (3) 
\\[1mm]
(E[mv],R,S,k)\to(E[\evalbox{M\{v/x\}}],R,S,k+1)\;\; (4) &
    (E[\evalbox{v}],R,S,k+1)\to(E[v],R,S,k)\\[1mm]
\multicolumn{2}{l}{(E[\letrec{f=\lambda x.M}{M'}],R,S,k)\to(E[M'\{m/f\}],R\uplus\{m \mapsto \lambda x.M\{m/f\}\},S,k)}\\[1mm]
    \multicolumn{2}{l}{\text{Conditions: }
    (1): j=1\text{ iff }i\not=0, \quad
    (2): i=i_1\oplus i_2,\quad (3): i\neq 0,\quad
(4): R(m)=\lambda x.M.
    }\\[-2mm]
\end{array}\]\hrule
\begin{flalign*}\\[-7mm]
  \mathit{Values} \ \  v \ &::=\ m \mid i \mid () \mid \pair{v,v}
  \qquad\qquad\quad
  \mathit{Terms\ (extended)} \ \ M ::= \ \dots \mid \evalbox{M}\\
\mathit{Eval.\, Contexts} \ \ E &::= ~~{\bullet} \mid \assert(E) \mid r:=E \mid E \oplus M \mid v \oplus E \mid \pair{E,M} \mid \pair{v,E} \mid \pi_j E \\
&\qquad\mid mE\mid \letm{x=E}{M} \mid \ifm{E}{M}{M} \mid \evalbox{E} \\[-6.5mm]
\end{flalign*}
\hrule
\begin{flalign*}
  \\[-8mm]
(\abstractL m; L,R,S,\pub,\abs) &\xrightarrow{bld}(L,R,S,\pub,\abs\uplus\{m\}) \\[-1mm]
(\publicL m=\lambda x.M;B,R,S,\pub,\abs) &\xrightarrow{bld} (B,R\uplus\{m \mapsto \lambda x.M\},S,\pub\uplus\{m\},\abs)\\[-1mm]
(m=\lambda x.M;B,R,S,\pub,\abs) &\xrightarrow{bld} (B,R\uplus\{m \mapsto \lambda x.M\},S,\pub,\abs)\\[-1mm]
 (\refR r:=i;B,R,S,\pub,\abs) &\xrightarrow{bld} (B,R,S\uplus\{r\mapsto i\},\pub,\abs)\\[-1mm]
(\refR r:=\lambda x.M;B,R,S,\pub,\abs) &\xrightarrow{bld} (B,R\uplus\{m \mapsto \lambda x.M\},S\uplus\{r \mapsto m\},\pub,\abs) \\[-8mm]
\end{flalign*}
\caption{Operational semantics (top); values and evaluation contexts (mid); library build (bottom).\!\!}\label{fig:libs}\label{fig:opsems}
\end{figure}

A library contains public methods that can be called by a client. On the other hand, a client contains a main body that can be executed. These two scenarios constitute the operational semantics of \holib. Both are based on evaluating (closed) terms, which we define next.
Term evaluation requires: the closed term being evaluated;
method definitions, provided by a method repository;
reference values, provided by a store;
and a {call-depth counter} (a natural number).
Since method application is the only source of infinite behaviour in \holib, bounding the depth of nested calls is enough to guarantee {termination in program analysis. Hence we provide a mechanism to keep track of call depth.}

The operational semantics is given in Figure~\ref{fig:opsems}. The evaluation of terms (top part) involves configurations of the form
$(M,R,S,k)$, where:
\begin{itemize}
\item $M$ is a closed term 
which may contain \emph{evaluation boxes}, i.e.\ points inside a term where a method call has been made and has not yet returned, and is taken from the syntax 
extending the one of Figure~\ref{fig:holib} with the rule: \
$M ::= \ \dots \mid \evalbox{M}$
\item $R$ is a \emph{method repository}, i.e.\ a partial map from method names to their bodies
\item $S$ is a \emph{store}, i.e.\ a partial map from reference names to their stored values
\item $k$ is a \emph{counter}, i.e.\ a natural number.
\end{itemize}
Most of the rules are standard, but
it is worth noting that lambdas are not values themselves but, rather, evaluate to method names that are freshly stored in the repository. Moreover,
evaluation boxes interplay with the {counter} $k$ in the semantics:
they mark places where the {depth} has increased because of a nested call.
%
The penultimate line of rules in the operational semantics keeps track of call depth, and illustrates the utility of evaluation boxes: making a call {increases the counter} and leaves behind an evaluation box; returning form the call removes the box and {decreases the counter} again.


A library $L$ \emph{builds} into a configuration of the form $(\varepsilon,R,S,\pub,\abs)$, which includes its public methods according to the rules in Figure~\ref{fig:libs} (bottom).
More precisely, $R$ and $S$ are as above, while  $\pub,\abs\subseteq\meths$ are (disjoint) sets of \emph{public} and \emph{abstract} method names. 
We say that (a well typed) $L$ builds to $(\varepsilon,R,S,\pub,\abs)$ if  $(L,\emptyset,\emptyset,\emptyset,\emptyset)\xrightarrow{bld}^*(\varepsilon,R,S,\pub,\abs)$.
If $L$ builds to $(\varepsilon,R,S,\pub,\abs)$
then the client $L;\, \mathtt{main} = M$ builds to $(M,R,S,\pub,\abs)$.
Moreover, we can link libraries to clients and evaluate them, as in the following definition.

\begin{definition}
\begin{enumerate}
\item
Library $L$ and client $C$ are \emph{compatible} if
$L$ builds to some $(\varepsilon,R,S,\pub,\abs)$ and $C$ builds to some 
$(M,R',S',\pub',\abs')$ such that:
$\pub=\abs'$ and $\abs=\pub'$ (\emph{complementation});  
  $\dom(S)\cap\dom(S')=\emptyset$ (\emph{disjoint state}); and 
  $\dom(R)\cap\dom(R')=\emptyset$ (\emph{method ownership}).
  \item
For a library $L$, we let 
$\hat L$ be $L$ with all its abstract method declarations and $\mathtt{public}$ keywords removed; and similarly for $\hat C$.
Given compatible library $L$ and client $C$, we let their \emph{composition} be the client:
$L\comp C = \hat L ; \hat C$.
\item   Given compatible $L,C$, the 
  semantics of $L\comp C$ is:
  \[
    \sem{L\comp C} = \{ \rho \mid L\comp C\text{ builds to }(M,R,S,\emptyset,\emptyset)\land
    (M,R,S,{0})\to^*\rho\}
    \]
We  say that $\sem{L\comp C}$ \emph{fails} if it contains some $(E[\assert(0)],\cdots)$.
\end{enumerate}
\end{definition}

\begin{example}
To illustrate how libraries and clients are used, consider the DAO example again as a library $L_\mathtt{DAO}$. We can define a client
$C_\mathtt{atk}$:
\begin{align*}
  & \abstractL wdraw; \; \globalL wlet := 0;\\
  & \publicL send = \lambda x. wlet := {!wlet} + x; 
  \;  \ifm{!wlet < 100}{wdraw(x)}{()}  ; \\
  & \mathtt{main} = wdraw(1)
\end{align*}
to produce the following linked client ${L}_\mathtt{DAO} \comp {C}_\mathtt{atk}$ (modulo re-ordering):
\begin{align*}
  & \globalL bal := 100;\;   \globalL wlet := 0; \\ 
  & wdraw = \lambda x.\ \ifm{{!bal} \geq x}
   {({send}(x);
bal := {!bal}-x; \assert({!bal}>0))}
{()} ;\\
  & \publicL send = \lambda x. wlet := {!wlet} + x; 
  \;  \ifm{!wlet < 100}{wdraw(x)}{()}  ; \\
  &\mathtt{main} = wdraw(1)
\end{align*}
We can see how $L_\mathtt{DAO}$ is vulnerable to an attacker such as $C_\mathtt{atk}$ after linking them. The aim is thus to use bounded analysis to find counterexamples that define clients such as this one.
\end{example}

\subsection{Trace Semantics}\label{sec:traces}

The semantics we defined only allows us to evaluate terms, and only so long as their method applications only involve methods that can be found in the repository $R$. We next extend this semantics to encompass libraries and terms that can also call abstract methods.
The approach we follow is based on operational game semantics~\cite{OGS1,OGS2,OGS3} and in particular the semantics is given by means of traces of method calls and returns (called \emph{moves} in game semantics jargon), between the library and its client. In between such moves, the semantics evolves as the operational semantics we already saw.

To maintain a {terminating analysis}, we need to {keep track of} an added source of infinite execution, namely endless {consecutive calls from an external component}: a library will never terminate if its
client keeps calling its methods. This leads us to a semantics with two {counters}, $k$ and $l$, where $k$ {keeps track of} internal nested method calls and $l$ {records the number of consecutive calls made from the external component}. This counter $l$ is orthogonal to $k$ and is refreshed at every call to the external context.

When computing the semantics of a library, the library and its methods are the \emph{Player (P)} of the computation game, while the (intended) client is the \emph{Opponent (O)}. As the semantics is given in absence of an actual client, $O$ actually represents every possible client. When computing the semantics of a client, the roles are reversed. In both cases, though, the same sets of rules is used and there is no need to specify who is $P$ and $O$ in the semantics.

\begin{figure}[t]\small
\begin{tabular}{LL}
\begin{tabular}{l}\!\!\!(\textsc{INT})\\[1mm]\text{} \end{tabular} &
\infer{(\eval,M,R,S,\pub,\abs,k)_p
                \rightarrow (\eval,M',R',S',\pub,\abs,k')_p}
                {(M,R,S,k)\to(M',R',S',k')}
\\[-1mm]
(\textsc{PQ})&
(\eval,E[mv],R,S,\pub,\abs,k)_p
\xrightarrow{\mathtt{call}(m,v)} ((m,E)::\eval,0,R,S,\pub',\abs,k)_o
\\[1mm]
(\textsc{OQ})&
(\eval,l,R,S,\pub,\abs,k)_o
\xrightarrow{\mathtt{call}(m,v)} ((m,l+1)::\eval,mv,R,S,\pub,\abs',k)_p\text{ if }R(m)=\lambda x.M
  \\[1mm]
%
(\textsc{PA})&
((m,l)::\eval,v,R,S,\pub,\abs,k)_p
\xrightarrow{\mathtt{ret}(m,v)} (\eval,l,R,S,\pub',\abs,k)_o
  \\[1mm]
(\textsc{OA})&
((m,E)::\eval,l,R,S,\pub,\abs,k)_o
\xrightarrow{\mathtt{ret}(m,v)} (\eval,E[v],R,\pub,\abs',k)_p
    \\[-2mm] \text{}
    \end{tabular}\hrule\vspace{-3mm}
  \[
\!\!\!\!(\textbf{PC}):\; m\in\abs\land \pub' = \pub \cup (\meths(v) \cap dom(R)),\;
(\textbf{OC}):\; m\in\pub\land\abs' = \abs \cup (\meths(v) \setminus dom(R)).
\]\vspace{-5mm}
\caption{Trace semantics rules. Rules (PQ),\,(PA) assume the condition (\textbf{PC}), and similarly for (OQ),(OA) and (\textbf{OC}). $\meths(v)$\! contains all method names appearing in $v$. INT stands for \emph{internal} transition; PQ for \emph{$P$-question} (i.e.\ call);
PA for \emph{$P$-answer} (i.e.\ return). Similarly for OQ and OA.}\label{fig:games}\label{fig:traces}
\end{figure}

The trace semantics uses \emph{game configurations}, which are divided into \emph{$P$-configurations} and \emph{$O$-configurations} given respectively as:
\[
(\eval,M,R,S,\pub,\abs,k)_p \quad\text{ and }\quad (\eval,l,R,S,\pub,\abs,k)_o\,.
\]
In a $P$-configuration, a term $M$ is being evaluated\,--\,this is $P$'s role.
In an $O$-configuration, an external call has been made and the semantics waits for $O$ to either return that call, or reply itself with another call.
The components $M,R,S,\pub,\abs,k,l$ are as above, while $\eval$ is an \emph{evaluation stack}:
\[
\eval \ ::=\ \varepsilon \mid (m,E)::\eval \mid (m,l)::\eval
\]
which keeps track of the computations that are on hold due to external calls. The trace semantics is generated by the rules given in Figure~\ref{fig:games}.

The formulation follows closely the operational game semantics technique. For example, from a  $P$-configuration $(\eval,M,R,S,\pub,\abs,k)_p$, there are 3 options:
\begin{enumerate}
\item If $M$ can make an internal reduction, i.e.\ in the operational semantics in context $(R,S,k)$, then $(\eval,M,R,S,\pub,\abs,k)_p$ performs this reduction (via (INT)).
\item If $M$ is stuck at a method application for a method that is not in the repository $R$, then that method must be abstract (i.e.\ external) and needs to be called
  externally. This is achieved be issuing a call move and moving to an $O$-configuration (via (PQ)). The current evaluation context and the called method name are stored, in order to resume once the call is returned (via (OA)).
\item If $M$ is a value and the evaluation stack is non-empty, then $P$ has completed a method call that was issued by $O$ (via (OQ)) and can now return (via (PA)).
\end{enumerate}
On the other hand, from an $O$-configuration $(\eval,l,R,S,\pub,\abs,k)_o$, there are 2 options: 
\begin{enumerate}
\item either return the last open method call (made by $P$) via (OA), or 
\item call one of the public methods (from $\pub$) using (OQ).
\end{enumerate}

The role of conditions (PC) and (OC) is to ensure that each player calls the methods owned by the other, or returns their own, and update the sets of public and abstract names according to the method names passed inside $v$. 

\begin{remark}
  The novelty of Figure~\ref{fig:games} with respect to previous work on trace semantics for open libraries (e.g.~\cite{DBLP:journals/corr/MurawskiT16a}) lies in the use of $l$ in order to bound the ability of $O$ to ask repeated questions for finite analysis. The way rules (OQ) and (PA) are designed is such that any sequence of consecutive  $O$-calls and $P$-returns has maximum length $2n$ if we bound $l$ to $n$ (i.e. $l \leq n$), as each such pair of moves increases $l$ by 1. On the other hand, each $P$-call supplies to $O$ a fresh counter ($l = 0$) to be used in contiguous (OQ)-(PA)'s. Thus, $l$ can be seen as keeping track of the \emph{insistence} of $O$ in calling.
\end{remark}


Finally, we can define the trace semantics of libraries. 

\begin{definition}\label{def:tracesem}
  Let $L$ be a library. The semantics of $L$ is :
  \begin{align*}
    \sem{L} &= \{ (\tau,\rho) \mid (L,\emptyset,\emptyset,\emptyset,\emptyset)\xrightarrow{bld}{\!\!}^*\ (\varepsilon,R,S,\pub,\abs)\land
    (\varepsilon,0,R,S,\pub,\abs,0)_o\xrightarrow{\tau}\rho\}
  \end{align*}
We say that $\sem{L}$ \emph{fails} if it contains some $(\tau,(\eval,E[\assert(0)],\cdots))$.
\end{definition}

\begin{example}\label{example:concrete}
Consider the DAO example as library $L_\mathtt{DAO}$ once again. Evaluating the game semantics we know the following sequence is in $\sem{L_\mathtt{DAO}}$. For economy, we hide $R,\pub,\abs$ and show only the top of the stack in the configurations. We also use $m(v)?$ and $m(v)!$ for calls and returns. We write $S_i$ for the store $[bal \mapsto i]$.
\begin{align*}
(\varepsilon,2&,S_{100},0)_o 
\xrightarrow{wdraw(42)?}((wdraw,1),wdraw(42),S_{100},0)_p\\
&\xrightarrow{}^*((wdraw,1),E[send(42)],S_{100},1)_p
\xrightarrow{send(42)?}((send,E),2,S_{100},1)_o\\
&\xrightarrow{wdraw(100)?}((wdraw,1),wdraw(100),S_{100},1)_p\\
&\xrightarrow{}^*((wdraw,1),E'[send(100)],S_{100},2)_p
\xrightarrow{send(100)?}((send,E),2,S_{100},2)_o\\
&\xrightarrow{send(())!}((wdraw,1),E'[()],S_{100},2)_p
\xrightarrow{}^*((wdraw,1),(),S_0,2)_p\\
&\xrightarrow{wdraw(())!}((send,E),1,S_0,2)_o
\xrightarrow{send(())!}((wdraw,1),E[()],S_0,1)_p\\
&\xrightarrow{}^*((wdraw,1),E[\assert(-42 \geq 0)],S_{-42},1)_p
\end{align*}
This transition sequence is an instance of the symbolic trace provided in the Introduction. Here, a call is made with parameter 42, and a reentrant call with 100, which leads to the assertion violation $\assert(-42 \geq 0)$. {Note that a bound of $k \leq 2$ is sufficient to find this assertion violation.}
\end{example}

We next establish two focal properties of the trace semantics: {bounding $k$ and $l$ ensures termination} (Theorem~\ref{termination}\shorten{}{, see Appendix~\ref{app:bounds}}), and that it is sound and complete with respect to library errors (Theorem~\ref{SandC}). 


\begin{theorem}[Boundedness]\label{termination}
 {For any game configuration $\rho$, provided an upper bound $k_0$ and $l_0$ for call counters $k$ and $l$, the labelled transition system starting from $\rho$ is strongly normalising.}
\end{theorem}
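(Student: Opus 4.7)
The plan is to show that no infinite transition sequence exists from any starting configuration $\rho$, by exhibiting a well-founded measure that strictly decreases with each step. The two bounds $k \leq k_0$ and $l \leq l_0$ control the two possible sources of divergence in the LTS: unbounded internal method recursion (handled by $k$), and unbounded external call/return interaction (handled by $l$, together with $k$).

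First I would handle internal reductions in isolation: show that the operational reduction relation lifted via rule (INT) is strongly normalising on configurations with $k \leq k_0$. Among all internal rules, only method application $E[mv] \to E[\evalbox{M\{v/x\}}]$ can strictly grow the term, and firing it requires incrementing $k$ by one; hence at $k = k_0$ it is disabled. An induction on $k_0 - k$ then yields termination: at $k = j < k_0$, any method application enters a body which, by IH, normalises at $k = j{+}1$; the corresponding box eventually returns, decrementing $k$ back to $j$ and strictly shrinking the outer term. The base case $k = k_0$ follows from a simple term-size measure, since all remaining rules are size-decreasing. This yields an internal termination measure $\mu_{\text{int}}$.

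Second I would bound external interaction. The remark following Figure~\ref{fig:traces} already establishes that each consecutive (OQ)/(PA) pair strictly increments the $l$ stored on the top stack frame, so maximal (OQ)/(PA) subsequences against a fixed frame have length at most $l_0$. For (PQ) pushes, I would observe the following crucial invariant: $k$ equals the total number of evaluation boxes in the current term plus those in the stored contexts on the stack, and this quantity is preserved by (PQ), (OQ), (PA), and (OA). Therefore any consecutive push cycle (PQ)(OQ) forces an internal method-application step on the newly introduced $mv$ subterm, which strictly increments $k$; hence at most $k_0$ such consecutive pushes can occur before a return must fire.

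Finally I would glue these together with a lexicographic measure on configurations, ordered by $(k_0 - k,\; |\eval|,\; l_0 - l,\; \mu_{\text{int}})$, and check that every transition strictly decreases it. (INT) reduces $\mu_{\text{int}}$ except for box-removals, which reduce $k_0 - k$; (PA) and (OA) strictly reduce $|\eval|$; (OQ) reduces $l_0 - l$ while preserving $k$ and stack-potential appropriately. The main obstacle is (PQ), which pushes the stack without touching $k$ or $l$; I would address this by refining the second component of the measure to count not raw stack size but a budget derived from step two, namely the remaining capacity for pushes before the $k$-invariant forces a pop. Combined with the internal termination of step one, this gives strong normalisation of the full LTS.
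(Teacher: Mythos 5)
Your steps 1 and 2 are sound and match observations the paper also relies on: internal reduction terminates at bounded $k$, consecutive (OQ)/(PA) rounds pump the $l$ of a fixed frame, and consecutive (PQ)(OQ) chains pump $k$ because the box opened after each (OQ) cannot close before the next (PQ). The gap is in step 3, and it is exactly the crux of the theorem. A lexicographic measure needs its second component $\beta$ to decrease strictly on the push moves (PQ)/(OQ), but then it must \emph{also} decrease on the pop moves (PA)/(OA), since pops leave $k$ unchanged and so nothing earlier in the tuple can absorb them (and (OA) hands back a possibly large term $E[v]$, so the later components do not help either). A $\beta$ that strictly decreases on every call and every return is a bound on the total number of remaining moves, which is essentially what the theorem asserts; defining it non-circularly is the whole difficulty. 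Moreover, the specific budget you propose is not derivable from the $k$-invariant: $k$ bounds only \emph{consecutive} push chains. A method body $\letm{x_1=f(1)}{\letm{x_2=f(2)}{\cdots}}$ with $f$ abstract performs (PQ);(OA) arbitrarily many times at the same $k$, the same $l$, and the same stack depth --- the number of such rounds is controlled by the term, not by $k_0-k$ or $l_0-l$ --- so no function of the counters alone can serve as the second component. The decreases in this system are inherently \emph{matched}: the configuration after a (PQ)$\cdots$(OA) or (OQ)$\cdots$(PA) round is smaller than the configuration \emph{before} the push, not smaller than its immediate predecessor.

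The paper resolves this not with an everywhere-decreasing measure but by classifying each configuration in a run as either (A) smaller than its immediate predecessor or (B) smaller than a matching earlier configuration (the one preceding the corresponding push), under the measure $(k_0-k,\sizeof{M},l_0-l)$; it then uses well-bracketing of evaluation boxes to show that (B)-matchings cannot interleave, so an infinite run would yield an infinite descending chain of top-level (B)-configurations, a contradiction. To repair your argument you need either that device, a nested induction extending your step 1 to the full LTS (outer induction on the counters of the relevant frame, inner induction on the sub-computation between a push and its matching pop), or a genuinely amortized additive potential assigning each stack frame a weight large enough to pay for the entire sub-computation above it. Each of these is substantially more than ``refining the second component,'' and as written your step 3 does not go through.
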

\shorten{\begin{proof}
For any transition sequence $\rho=\rho_0 \to \dots \to \rho_i \to \dots$ and each $i>0$, we set the following two classes of configurations:
\[
(A) = \{\rho_i \mid \sizeof{\rho_{i}} < \sizeof{\rho_{i-1}}\}\qquad
(B) = \{\rho_i \mid \exists j < i-1.\ \sizeof{\rho_i} < \sizeof{\rho_j}\}
\]
where $\sizeof{\rho} = (k_0 - k,\sizeof{M},l_0 - l)$ is the \emph{size} of $\rho$, and $\sizeof{\rho} < \sizeof{\rho'}$ is defined by the lexicographic ordering of the triple $(k_0 - k,\sizeof{M},l_0 - l)$, with bounds $k_0$ and $l_0$ such that $k \leq k_0$ and $l \leq l_0$ for semantic transitions to be applicable. If not present in the configuration, we look at the evaluation stack $\eval$ to find the top-most missing component. In other words, opponent configurations will have size $(k_0 - k,\sizeof{E},l_0 - l)$ where $E$ is the top-most one in $\eval$, whereas proponent configurations will have size $(k_0 - k,\sizeof{M},l_0 - l)$ where $l$ is the top-most one in $\eval$.

We approach the proof two steps: (1)~we show that, for any transition sequence out of $\rho$, each reachable configuration belongs to (at least) one of the above classes; and (2)~prove that the classes form a terminating sequence.
For~(1), we do a case analysis on every transition in the game semantics and observe that the target configuration must belong to a class (see~\cite{FULL}). For~(2), let us assume there is an infinite sequence
\[\rho_0 \to \dots \to \rho_j \to \dots \to \rho_i \to \dots\]
Since all reachable configurations fall into either (A) or (B) class, we know that the sequence must comprise only (A) and (B) configurations. In this infinite sequence, we know that all sequences of (A) configurations are in descending size, so (A) sequences cannot be infinite. We also observe that (B) configurations are padded with (A) sequences. For instance, if $\rho_i$ is a (B) configuration, and $\rho_j$ is its matching configuration, there may exist     nested (B) configurations between $\rho_j$ and $\rho_i$, as well as (A) sequences padding these.

Additionally, these (B) configurations can only occur as a return to a call, so we know they only occur together with the introduction of evaluation boxes $\evalbox{\bullet}$. Since these brackets occur in pairs and are introduced in a nested fashion, we know $\eval$ can only contain evaluation contexts with well-bracketed evaluation boxes, meaning that there cannot be interleaved sequences of (B) configurations where their target configurations intersect. More specifically, the sequence 
\[\rho_0 \to \dots \to \rho_j \to \dots \to \rho_j' \to \dots \to \rho_i \to \dots \to \rho_i' \to \dots\]
where $\rho_i'$ matches $\rho_j'$ and $\rho_i$ matches $\rho_j$ is not possible.

Now, ignoring all (A) and nested (B) sequences, we are left with an infinite stream of top-level (B) sequences which are also in descending order. Since starting size is finite, we cannot have an infinite stream of (B) sequences. Thus, the assumption that the sequence is infinite does not hold, meaning our semantics is strongly normalising.
\end{proof}}{}

\begin{theorem}[S and C]\label{SandC}
{We call a client \emph{good} if it contains no assertions.
  For any library $L$, the following are equivalent:
  \begin{enumerate}
  \item $\sem{L}$ fails (reaches an assertion violation)
  \item there exists a good client $C$ such that $\sem{L \comp C}$ fails
  \end{enumerate}}
\end{theorem}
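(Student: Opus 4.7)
The plan is to reduce both directions to a standard composition--decomposition correspondence between the operational semantics of $L \comp C$ and the trace semantics of $L$ and $C$ run separately, following the usual template of operational game semantics.

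First I would introduce a composition operator $\semcomp$ on game configurations that pairs a $P$-configuration $\rho_L$ of the library with a dual $O$-configuration $\rho_C$ of the client (and vice versa), requiring complementary $(\pub,\abs)$-labels, disjoint stores and repositories whose union gives the closed operational configuration, and evaluation stacks that match up position by position (library frames $(m,E)$ on one side against client tokens $(m,l)$ on the other). I would then prove a composition--decomposition lemma:
\begin{itemize}
\item[(C)] If $(\tau,\rho_L)\in\sem{L}$ and $(\bar\tau,\rho_C)\in\sem{C}$ are produced by dual traces (the $P$-moves of one are the $O$-moves of the other) and the endpoint configurations are composable, then $\rho_L\semcomp\rho_C$ is reachable in the operational semantics of $L\comp C$.
\item[(D)] Every operational run of $L\comp C$ from its initial configuration decomposes as a sequence of composed configurations $\rho_L\semcomp\rho_C$ witnessed by a unique common trace $\tau$ with $(\tau,\rho_L)\in\sem{L}$ and $(\bar\tau,\rho_C)\in\sem{C}$.
\end{itemize}
Both parts are proved by induction on the length of the transition sequence, matching (INT) steps against the corresponding operational reductions, and matching each cross-boundary exchange of control against a $P$-move on one side and its dual $O$-move on the other, using the conditions (\textbf{PC})/(\textbf{OC}) to keep the $(\pub,\abs)$-labels aligned.

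The soundness direction $(2)\Rightarrow(1)$ then falls out of (D): if $\sem{L\comp C}$ reaches $(E[\assert(0)],R,S,\emptyset,\emptyset)$ and $C$ is good, every $\assert$ in the program text lives in $L$, so the decomposition places $E[\assert(0)]$ in the $L$-side configuration $\rho_L$, exhibiting $(\tau,\rho_L)\in\sem{L}$ that witnesses failure. For completeness $(1)\Rightarrow(2)$, given a failing $(\tau,\rho)\in\sem{L}$ I would synthesise a \emph{good} client $C_\tau$ that realises the dual trace $\bar\tau$. Concretely, $C_\tau$ maintains a global integer reference $pc$ indexing into $\bar\tau$; its $\mathtt{main}$ and each public method it exposes inspect $pc$, perform the prescribed next $O$-action (either calling the appropriate abstract method of $L$ or returning the prescribed value), then advance $pc$. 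Whenever an incoming $P$-move from $L$ carries a fresh method name that $O$ must later invoke, $C_\tau$ provides a new wrapper method that, when $L$ calls it, simply consults $pc$ and continues the script. By construction $C_\tau$ contains no $\assert$, and applying (C) with $\rho_L$ and the endpoint of $C_\tau$ gives $\sem{L\comp C_\tau}$ failing.

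The main obstacle is the higher-order bookkeeping in the completeness construction: the trace semantics in Figure~\ref{fig:games} invents fresh method names each time a $\lambda$-value crosses the boundary, and $C_\tau$ must expose matching named callbacks whose future behaviour continues $\bar\tau$ consistently regardless of the order in which $L$ invokes them. This is handled by the invariant tying the trace's $(\pub,\abs)$-sets to $C_\tau$'s method table and by using $pc$ together with the stack structure of $\eval$ to resolve which suspended continuation of $\bar\tau$ each callback resumes; the evaluation stack discipline of (PQ)/(OA) guarantees the requisite well-bracketing, so no ambiguity arises. The remaining cases in the composition lemma are routine once the naming invariant is fixed.
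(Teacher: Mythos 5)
Your proposal is correct and follows essentially the same route as the paper: the paper likewise factors the theorem into an L--C compositionality lemma (proved via a semantic composition operator $\semcomp$ on dual configurations, matched against a merged composite operational semantics) and a definability theorem that synthesises a good, assertion-free client $C_{\tau}$ driven by a global move counter and an oracle method dispatching the scripted $O$-moves, with wrapper methods for freshly disclosed names. The only cosmetic difference is that the paper's client additionally checks each incoming $P$-move against the script and diverges on mismatch, a detail your sketch omits but which does not change the argument.
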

\begin{proof}
    1 to 2:~ Suppose now that $(\tau,\rho)\in\sem{L}$ for some trace $\tau$ and failed $\rho$. By Theorem~\ref{thm:definability}, we have that there is a good client $C$ realising the trace $\tau$. So then, by Lemma~\ref{simsim}, we have that $\sem{L\comp C}$ fails.
    
	2 to 1:~ Suppose $\sem{L \comp C}$ fails for some good client $C$. Then, by Lemma~\ref{simsim}, there are $\tau,\rho,\rho'$ such that $(\tau,\rho)\in\sem{L}$, $(\tau,\rho')\in\sem{C}$, and $\rho$ is failed (i.e.\ is of the shape $(\eval,E[\assert(0)],\cdots)$).
\end{proof}

The latter relies on an auxiliary lemma (well-composing of libraries and clients, \shorten{see~\cite{FULL}}{see Appendix~\ref{apx:SC}}), and a definability result akin to game semantics definability arguments (see Appendix~\shorten{\ref{app:def}}{\ref{app:def:full}}).

\begin{lemma}[L-C Compositionality]\label{simsim}
For any library $L$ and compatible good client $C$, $\sem{L \comp C}$ fails if and only if there exist $(\tau_1,\rho_1) \in \sem{L}$ and $(\tau_2,\rho_2) \in \sem{C}$ such that $\tau_1=\tau_2$ and $\rho_1=(\eval,E[\assert(0)],\cdots)$.
\end{lemma}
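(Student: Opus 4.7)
The plan is to establish a bisimulation correspondence between the closed operational semantics of $L \comp C$ and a synchronised product of the trace semantics of $L$ and $C$, mirroring the composition/decomposition arguments standard in operational game semantics. The first step is to define \emph{composable pairs} of game configurations: a $P$-configuration of one side together with an $O$-configuration of the other, where (i)~the $\pub/\abs$ sets are complementary, (ii)~the repositories $R_L,R_C$ and stores $S_L,S_C$ have disjoint domains, and (iii)~the evaluation stacks are dual, in the sense that whenever one stack carries a pending call frame $(m,E)$ the other carries at the same position a frame $(m,l)$. From such a composable pair one recovers a unique closed configuration $(M,R_L\uplus R_C,S_L\uplus S_C,k)$ of $L\comp C$ by plugging the running term into the waiting evaluation contexts of both stacks in alternating order, and setting $k=k_L+k_C$.

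Next I would prove the key bisimulation lemma: every reduction step of the composed configuration is matched by either an (INT) transition of $\rho_L$ or of $\rho_C$, and every crossing step (a call into a method owned by the other component, or a return thereof) corresponds to a synchronised pair of matching transitions (PQ)/(OQ) or (PA)/(OA) carrying the same label, with the composability invariants preserved. The argument is a routine but delicate case analysis over the rules of Figures~\ref{fig:opsems} and~\ref{fig:traces}: for each operational rule one identifies whether the redex lies in $\dom(R_L)$ or $\dom(R_C)$ and reads off the corresponding game transitions, checking that the side conditions (PC)/(OC) track exactly the method names that become shared through the transmitted value.

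With this bisimulation in place, both directions of the lemma reduce to straightforward inductions on the length of reductions/traces. For the left-to-right direction, a failing reduction of $L\comp C$ decomposes into a sequence of matched steps whose projections yield traces $\tau_1=\tau_2=\tau$ with $(\tau,\rho_1)\in\sem{L}$ and $(\tau,\rho_2)\in\sem{C}$; since $C$ is good, no $\assert$ syntactically occurs in $C$'s code, so the offending $\assert(0)$ must lie inside the $L$-projection, forcing $\rho_1=(\eval,E[\assert(0)],\cdots)$. For the converse, matched traces $\tau_1=\tau_2$ together with the bisimulation let one assemble synchronously a reduction of $L\comp C$ that reaches the configuration composing $\rho_1$ and $\rho_2$; the $\assert(0)$ in $\rho_1$ lifts to an $\assert(0)$ in the composed term, so $\sem{L\comp C}$ fails.

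The main obstacle is the bookkeeping for method names that are passed as values between the two components: conditions (PC) and (OC) extend $\pub$ and $\abs$ whenever a transmitted value contains new method names, and these must already be defined in the other side's repository. I need to verify that whenever an $L$-step exposes some $m'\in\dom(R_L)$ to $O$, the composed operational step is exactly a call/return whose value reaches $C$'s code with $m'$ newly callable there, and symmetrically. A related subtlety concerns the counters $k$ and $l$: although $\sem{\cdot}$ imposes no a priori bound, the decomposition must still show that the refreshing of $l$ at each $P$-call on one side is mirrored by the $O$-call received on the other, and that the $k_L+k_C$ sum is consistent with the depth evolution of the composed term across boxed frames $\evalbox{\cdot}$.
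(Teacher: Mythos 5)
Your proposal follows essentially the same route as the paper: the paper likewise defines compatible pairs of opposite-polarity configurations with complementary $\pub/\abs$ sets, disjoint stores/repositories and dual evaluation stacks, forms a synchronised product ($\semcomp$) of the two trace semantics and an internal composition ($\confcomp$) that plugs the running term into the alternating stack of waiting contexts, proves these bisimilar (Lemma~\ref{bisim}) by the same case analysis on internal versus crossing steps, and then derives the compositionality statement in both directions. The only notable difference is that the paper mediates through an explicitly decorated \emph{composite semantics} carrying separate counter components $(k_1,k_2,l_1,l_2)$ and tagged frames $\evaltag{E}{i,l}$, rather than relating the product directly to the plain single-counter semantics of $L\comp C$ with $k=k_L+k_C$ as you do; this is exactly the counter bookkeeping you flag as a subtlety, and it is harmless in your formulation because the individual counters and $l$-values remain recorded on the game-configuration side of the relation.
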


\begin{theorem}[Definability]\label{thm:definability}
	Let $L$ be a library and $(\tau,\rho)\in\sem{L}$. There is a good client $C$ compatible with $L$ such that $(\tau,\rho')\in\sem{C}$ for some $\rho'$.
\end{theorem}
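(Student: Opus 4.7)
The plan is to construct, directly from the trace $\tau$ (and the terminal configuration $\rho$), a good client $C$ whose composition with $L$ realises $\tau$. The client will implement the ``strategy'' dictated by $O$'s moves in $\tau$ by means of a simple state machine driven by a global counter.

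First I would resolve the compatibility skeleton: since $C$ must be compatible with $L$, the set of abstract methods of $C$ must coincide with the set of public methods of $L$ at the build stage (plus any method names that $L$ exports to $O$ later in $\tau$ via higher-order arguments), and symmetrically the public methods of $C$ must include exactly the method names that $O$ introduces or owns along $\tau$. Because $\tau$ is finite, I can pre-scan it to collect the finite sets of fresh names on either side and pre-declare matching public-method slots in $C$ and matching global reference slots to store names incoming from $L$.

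Next I would define the dispatch mechanism. Equip $C$ with a global reference $\texttt{cnt}$ initialised to $0$. The \texttt{main} body of $C$ and each of its public methods starts with a large case-split on $\texttt{cnt}$ (built from nested $\ifm{\cdot}{\cdot}{\cdot}$ using $=$ on integers). At branch $i$, $C$ performs the $O$-action at position $i$ in $\tau$: either it calls $L.m(v')$ where $v'$ is reconstructed from literals, $()$, pairs, and dereferences of the name-storage references (for higher-order components); or it returns a reconstructed value, terminating the current method frame. Before firing each move, $C$ updates $\texttt{cnt}$ and stores into its name slots any method names arriving as sub-values of the preceding $P$-move. Because $\tau$ contains no internal $P$-data that $C$ has to compute, the right-hand sides are straightforward sequential compositions with no assertions, so $C$ is automatically good.

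Correctness is by induction on prefixes of $\tau$. For each prefix $\tau'$ one shows that $(L\comp C)$ built and evaluated reaches a configuration whose components are the $P$-halves and $O$-halves of the game configurations produced along $\tau'$ by $\sem{L}$ and $\sem{C}$ respectively, glued via Lemma~\ref{simsim}; the inductive step matches an INT-sequence followed by one of (PQ)/(PA)/(OQ)/(OA) on $L$'s side against the dual transition on $C$'s side, and the counter-driven branch ensures $C$ emits precisely the scripted move. The final $\rho'$ obtained for $C$ is the $O$-image of $\rho$. The main obstacle is the coherent treatment of the higher-order name-passing: ensuring that every fresh $P$-name that $L$ supplies to $O$ can be stored in and later retrieved from a reference cell of $C$, and symmetrically that every fresh $O$-name that $\tau$ prescribes can be produced by $C$ as a genuine public method already installed in its repository with the right dispatch body. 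Once the pre-scan of $\tau$ fixes the required finite pool of method and reference slots, the simulation becomes routine bookkeeping.
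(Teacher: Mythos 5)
Your construction is essentially the paper's: the paper builds a client $C_{\tau,\pub,\abs}$ with a global counter \texttt{cnt}, global reference slots for incoming $P$-names and values, and an oracle method that case-splits on the counter to emit the scripted $O$-move, proving correctness by induction on the remaining suffix of $\tau$. One small point of care (which the paper handles and your sketch glosses over): the method names that $O$ introduces only \emph{later} in $\tau$ must be declared as \emph{private} definitions of $C$ and disclosed via moves, not as public methods at build time, or else the complementation condition $\pub'=\abs$ for compatibility would fail.
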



\section{Symbolic Semantics}
{Checking} libraries for errors using the semantics of the previous section is infeasible, even {when} the traces are bounded in length, as ground values are concretely represented.
In particular, integer values provided by $O$ as arguments to calls or return values range over all integers.
The typical way to mitigate this limitation is to execute the semantics symbolically, using symbolic variables for integers and path conditions to bind these variables to plausible values.
We use this technique to devise a symbolic version of the trace semantics, corresponding to a symbolic execution which will enable us in the next sections to introduce a practical method and implementation for {checking libraries for errors}. The symbolic semantics is fully formal, closely following the developments of the previous section, and allows us to prove a strong form of correspondence between concrete and symbolic semantics (a bisimulation).

Apart from integers, another class of concrete values provided by $O$ are method names. For them, the semantics we defined is symbolic by design: all method names played by $O$ are going to be fresh and therefore picking just one of those fresh choices is sufficient (formally speaking, the semantics lives in nominal sets~\cite{nom2}).
The reason why using fresh names for methods played by $O$ is sound is that the effect of $O$ calling a higher-order public method with an argument $m$ (where $m$ is another public method), and calling $\lambda x.mx$, is equivalent as far as reachability of an error is concerned. In the latter case, the client semantics would create a fresh name $m'$, bind it to $\lambda x.mx$, and pass $m'$ as an argument.
We therefore just focus on this latter case.


The symbolic semantics involves terms that may contain symbolic values for integers. We therefore extend the syntax for values and terms to include such values, and abuse notation by continuing to use  $M$ to range over them. 
We let $\sints$ be a set of symbolic integers ranged over by $\kappa$ and variants, and define:
\begin{align*}
  \mathit{Sym. Values}\quad \tilde{v}\ &::=\ m\mid i\mid ()\mid\kappa\mid \tilde v\oplus\tilde v\mid\pair{\tilde v,\tilde v}\\
  \mathit{Sym. Terms}\quad {M}\ &::=\ \cdots \mid\kappa
\end{align*}
where, in $\tilde v\oplus\tilde v$, not both $\tilde v$ can be integers.
We moreover use a symbolic environment to store symbolic values for references, but also to keep track of arithmetic performed with symbolic integers. More precisely, we let $\sigma$ be a finite partial map from the 
set $\mathtt{SInts} \cup \refs$ to symbolic values. Finally, we use $pc$ to range over program conditions, which will be quantifier-free first-order formulas
with variables taken from $\sints$, and with $\top,\bot$ denoting true and false respectively.

The semantics for closed symbolic terms involves configurations of the form $(M,R,\sigma,pc,k)$. Its rules include copies of those from Figure~\ref{fig:holib} (top) where the $pc$ and $\sigma$ are simply carried over. For example:
\[(E[\lambda x.M],R,\sigma,pc,k) \symto (E[m],R\uplus\{m\mapsto \lambda x.M\},\sigma,pc,k)\]
where $m$ is fresh.
On the other hand, the following rules directly involve symbolic reasoning:
\begin{flalign*}
&(E[\assert(\kappa)],R,\sigma,pc,k) \symto (E[\assert(0)],\sigma,pc \land (\kappa = 0),k)\\
&(E[\assert(\kappa)],R,\sigma,pc,k) \symto (E[()],R,\sigma,pc \land (\kappa \neq 0),k)\\
&(E[!r],R,\sigma,pc,k) \symto (E[\sigma(r)],R,\sigma,pc,k)\\
&(E[r:= \tilde v],R,\sigma,pc,k) \symto(E[()],R,\sigma[r\mapsto \tilde v],pc,k)\\
&(E[\tilde v_1 \oplus \tilde v_2],R,\sigma,pc,k) \symto (E[\kappa],R,\sigma\uplus\{\kappa \mapsto \tilde v_1 \oplus \tilde v_2\},pc,k)\quad\text{where $\kappa$ is fresh}\\
&(E[\ifm{\kappa}{M_1}{M_0}],R,\sigma,pc,k) \symto (E[M_0],R,\sigma, pc \land (\kappa = 0),k)\\
&(E[\ifm{\kappa}{M_1}{M_0}],R,\sigma,pc,k) \symto (E[M_1],R,\sigma, pc \land (\kappa \neq 0),k)
\end{flalign*}
and where $\tilde v_1\oplus\tilde v_2$ is a symbolic value (for $i_i\oplus i_2$ the rule from Figure~\ref{fig:holib} applies).

\begin{figure}[t]\small
\begin{tabular}{LL}
\begin{tabular}{C}(\widetilde{\textsc{INT}})\\[1mm]\text{} \end{tabular} &
\infer{(\eval,M,R,\pub,\abs,\sigma,pc,k)_p
                \symto (\eval,M',R',\pub,\abs,\sigma',pc',k')_p}
                {(M,R,\sigma,pc,k)\symto\ (M',R',\sigma,pc',k')}
\\[-1mm]
(\widetilde{\textsc{PQ}})&
(\eval,E[m\tilde v],R,\pub,\abs,\sigma,pc,k)_p
\xsymto{\mathtt{call}(m,\tilde v)} ((m,E)::\eval,0,R,\pub',\abs,\sigma,k)_o
\\[1.5mm]
(\widetilde{\textsc{OQ}})&
(\eval,l,R,\pub,\abs,\sigma,pc,k)_o
\xsymto{\mathtt{call}(m,\tilde v)} ((m,l+1)::\eval,m\tilde v,R,\pub,\abs',\sigma,pc,k)_p 
  \\[1.5mm]
%
(\widetilde{\textsc{PA}})&
((m,l)::\eval,\tilde v,R,\pub,\abs,\sigma,pc,k)_p
\xsymto {\mathtt{ret}(m,\tilde v)} (\eval,l,R,\pub',\abs,\sigma,pc,k)_o
  \\[1.5mm]
(\widetilde{\textsc{OA}})&
((m,E)::\eval,l,R,\pub,\abs,\sigma,pc,k)_o
\xsymto {\mathtt{ret}(m,\tilde v)} (\eval,E[\tilde v],R,\pub,\abs',\sigma,pc,k)_p
    \\[-2mm] \text{}
    \end{tabular}\hrule
  \begin{tabular}{L@{\;\;\;}L}\\[-3mm]
(\widetilde{\textbf{PC}})& \text{$m\in\abs$ and $\pub' = \pub \cup (\meths(\tilde v) \cap dom(R))$.}
\\[1mm]
    (\widetilde{\textbf{OC}})& \text{$m\in\pub$ and $(\tilde v',\abs')\in\textsf{symval}(\theta,\abs)$ where $\theta$ is the expected type  of $\tilde v$. Moreover:}\\
&\mathtt{symval}(\theta,\abs) = 
\begin{cases} 
\{((),\abs)\} & \text{if $\theta = \mathtt{unit}$}\\
\{(\kappa,\abs\uplus\{\kappa\})\mid  \text{ $\kappa$ is fresh in $dom(\sigma) \uplus \abs$}\} & \text{if $\theta = \mathtt{int}$}\\
\{(m,\abs\uplus\{m\})\mid \text{$m$ is fresh in $dom(R) \uplus \abs$}  \} & \text{if $\theta = \theta_1 \to \theta_2$} \\
\{(\pair{\tilde v_1, \tilde v_2},\abs_2)\mid
(\tilde v_1,\abs_1)\in\mathtt{symval}(\theta_1,\abs)
& \text{if $\theta = \theta_1 \times \theta_2$}\\
\qquad\qquad\quad\qquad ( \tilde v_2,\abs_2)\in\mathtt{symval}(\theta_2,\abs_1)\}
\end{cases}\\[-3mm]
  \end{tabular}
\caption{Symbolic trace semantics rules. Rules (\textilde{PQ}),\,(\textilde{PA}) assume the condition (\textilde{\textbf{PC}}), and similarly for (\textilde{OQ}),(\textilde{OA}) and (\textilde{\textbf{OC}}).}\label{fig:symgames}
\end{figure}

We now extend the symbolic setting to the trace semantics.
We define symbolic configurations for $P$ and $O$ respectively as:
\[(\eval,M,R,\pub,\abs,\sigma,pc,k)_p\\
  \qquad\quad (\eval,l,R,\pub,\abs,\sigma,pc,k)_o
  \]
  with evaluation stack $\eval$, proponent term $M$, counters $k,l \in \mathbb{N}$, method repository $R$, public method name set $\pub$, $\sigma$ and $pc$ as previously.
The abstract name set $\abs$ is now a finite subset of $\meths\cup\sints$, as we also need to keep track of the symbolic integers introduced by $O$ (in order to be able to introduce fresh such names).
The rules for the symbolic trace semantics are given in Figure~\ref{fig:symgames}. {Note that $O$ always refreshes names it passes. This is a sound overapproximation of all names passed for the sake of analysis.}

Similarly to Definition~\ref{def:tracesem}, we can define the symbolic semantics of libraries.
\begin{definition}\label{def:bmc}
  Given library $L$, the symbolic semantics of $L$ is:
  \begin{align*}
    \sem{L}_s = \{ (\tau,\rho) \mid & (L,\emptyset,\emptyset,\emptyset,\emptyset)\xrightarrow{bld}{\!\!}^*\ (\varepsilon,R,S,\pub,\abs)\\
    &\land
    (\varepsilon,0,R,\pub,\abs,S,\top,0)_o\xrightarrow{\tau}_s\rho \ \land \
    \exists \mathcal{M}.\, \mathcal{M}\vDash \rho(\sigma)^\circ \land \rho(pc) \}
  \end{align*}
  where $\rho(\chi)$ is component $\chi$ in configuration $\rho$.
We say that $\sem{L}_s$ \emph{fails} if it contains some $(\tau,(\eval,E[\assert(0)],\cdots))$.
\end{definition}

The symbolic rules follow those of the concrete semantics, the biggest change being the treatment of symbolic values played by $O$. Condition ($\widetilde{\textbf{OC}}$) stipulates that $O$ plays distinct fresh symbolic integers as well as fresh method names, in each appropriate position in $\tilde v$, and all these names are included in the set $\abs$.

\begin{example}
As with Example~\ref{example:concrete}, we consider the DAO attack. Running the symbolic semantics, we find the following minimal class of errors. We write $\sigma_{\tilde v}$ for a symbolic environment $[bal \mapsto \tilde v]$.
\begin{align*}
(\varepsilon,2,\sigma&_{100},k_0)_o \xrightarrow{wdraw(\kappa_1)?}((wdraw,1),wdraw(\kappa_1),\sigma_{100},2)_p\\
&\xrightarrow{}^*((wdraw,1),E[send(\kappa_1)],\sigma_{100},1)_p
\xrightarrow{send(\kappa_1)?}((send,E),2,\sigma_{100},1)_o\\
&\xrightarrow{wdraw(\kappa_2)?}((wdraw,1),wdraw(\kappa_2),\sigma_{100},1)_p\\
&\xrightarrow{}^*((wdraw,1),E'[send(\kappa_2)],\sigma_{100},0)_p
\xrightarrow{send(\kappa_2)?}((send,E),2,\sigma_{100},0)_o\\
&\xrightarrow{send(())!}((wdraw,1),E'[()],\sigma_{100},0)_p\\
&\xrightarrow{}^*((wdraw,1),(),\sigma_{100-\kappa_2},0)_p\xrightarrow{wdraw(())!}((send,E),1,\sigma_{100-\kappa_2},0)_o\\
&\xrightarrow{send(())!}((wdraw,1),E[()],\sigma_{100-\kappa_2},1)_p\\
&\xrightarrow{}^*((wdraw,1),E[\assert(!bal \geq 0)],\sigma_{100-\kappa_2-\kappa_1},1)_p
\end{align*}
For this to be a valid error, we require $(\kappa_1,\kappa_2 \leq 100) \land (100 - \kappa_2 - \kappa_1 < 0)$ to be satisfiable. Taking assignment $\{\kappa_1 \mapsto 100, \kappa_2 \mapsto 1\}$, we show the path is valid.
\end{example}

\subsection{Soundness}
{
The main result of this section is establishing the soundness of the symbolic semantics: a trace and a specific configuration can be achieved symbolically iff they can be achieved concretely as well. In fact, we will need to quantify this statement as, by construction, the symbolic semantics requires $O$ to always place fresh method names, whereas in the concrete semantics $O$ is given the freedom to play old names as well.
What we show is that the symbolic semantics corresponds (via \emph{bisimilarity}) to a restriction of the concrete semantics where $O$ plays fresh names only. This restriction is sound, in the sense that it is sufficient for identifying when a configuration can fail.
We make this precise below.}

A \boldemph{model} $\cal M$ is a finite partial map from symbolic integers to concrete integers. Given such an $\cal M$ and a formula $\phi$, we define $\mathcal{M}\models\phi$ using a standard first-order logic interpretation with integers and arithmetic operators (in particular, we require that all symbolic integers in $\phi$ are in the domain of $\cal M$).
Moreover, for any symbolic term $M$ (or trace, move, etc.), we denote by $M\{\mathcal{M}\}$ the concrete term we obtain by substituting any symbolic integer $\kappa$ of $M$ with its corresponding concrete integer $\mathcal{M}(\kappa)$. 
Finally, given a symbolic environment $\sigma$, we define its formula representation $\sigma^\circ$ recursively by:
\[
  \emptyset^\circ=\top,\quad
  (\sigma\uplus\{r\mapsto v\})^\circ = \sigma^\circ,\quad
(\sigma\uplus\{\kappa\mapsto v\})^\circ = \sigma^\circ\land (\kappa = v).
\]

We now define notions for equivalence between symbolic and concrete configurations. Let $\cal M$ be a model. For any concrete configuration $\rho=(\eval,\chi,R,S,\pub,\abs,k)$ and symbolic configuration $\rho_s=(\eval',\chi',R',\pub',\abs',\sigma,pc,k')$, we say they are \emph{equivalent in $\cal M$}, written $\rho =_{\mathcal{M}} \rho_s$, if:
\begin{itemize}
\item $(\eval,\chi,R)=(\eval',\chi',R')\{\mathcal{M}\}, \pub=\pub',\abs=\abs'\cap\meths$ and  $S=(\sigma\upharpoonright\refs)\{\mathcal{M}\}$;
\item $\dom(\mathcal{M})=(\abs'\cup\dom(\sigma))\cap\sints$ and $\mathcal{M}\vDash pc \land \sigma^\circ$.
\end{itemize}
The notion of equivalence we require between concrete configurations and their symbolic counterparts is behavioural equivalence, {modulo $O$ playing fresh names}.

More precisely,
{a transition $\rho\xrightarrow{\chi}\rho'$ is called \emph{O-refreshing} if, when $\rho$ is an $O$-configuration and $\chi=\mathtt{call}/\mathtt{ret}(m,v)$ then all names in $v$ are fresh and distinct.}
A finite set 
 $\mathcal{R}$ with elements of the form $(\rho,\mathcal{M},\rho_s)$ is a \boldemph{bisimulation} if, whenever $(\rho,\mathcal{M},\rho_s)\in\cal R$, written $\rho\,\mathcal{R_M}\,\rho_s$ then
 $\rho =_\mathcal{M} \rho_s$ and, using $\chi$ to range over moves and $\varepsilon$ (i.e.\ no move):
\begin{itemize}
\item if $\rho \xrightarrow{\chi} \rho'$
{is $O$-refreshing}
  then there exists $\mathcal{M}' \supseteq \mathcal{M}$ such that $\rho_s\xsymto{\chi_s}\rho_s'$, with $\chi=\chi_s\{\mathcal{M}'\}$, and $\rho'\mathcal{R_{M'}}\rho_s'$;
\item if $\rho_s \xsymto{\chi} \rho_s'$ then there exists $\mathcal{M}' \supseteq \mathcal{M}$ such that $\rho\xgamto{\chi\{\mathcal{M}'\}}\rho'$ and $\rho'\mathcal{R_{M'}}\rho_s'$.
\end{itemize}
We let $\sim$ be the largest bisimulation relation: $\rho\sim_{\cal M}\rho_s$ iff there is bisimulation $\cal R$ such that $\rho\mathcal{R_M}\rho_s$. 

We can show that concrete and symbolic configurations are bisimilar.
\begin{lemma}\label{sound_bisim}
Given $\rho,\rho_s$ a concrete and symbolic configuration respectively, and $\cal M$  a model such that $\rho=_{\cal M}(\rho')$, we have $\rho\sim_{\cal M}\rho_s$.
\end{lemma}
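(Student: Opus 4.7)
The plan is to exhibit a candidate bisimulation $\mathcal{R} = \{(\rho,\mathcal{M},\rho_s) \mid \rho =_{\mathcal{M}} \rho_s\}$ and verify the two simulation conditions by case analysis on transitions. The hypothesis of the lemma places the given triple in $\mathcal{R}$, so it suffices to show $\mathcal{R}$ satisfies the closure properties for every rule of Figures~\ref{fig:opsems}, \ref{fig:traces} and \ref{fig:symgames}.

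For the concrete-to-symbolic direction (restricted to $O$-refreshing transitions), I would proceed rule by rule. For the purely structural reductions (let-binding, projection, lambda allocation, reference read/write with non-symbolic values, and concrete $i_1\oplus i_2$), the corresponding symbolic rule applies with $\mathcal{M}' = \mathcal{M}$ and equivalence is trivially preserved. The move rules $(\textsc{PQ}),(\textsc{PA})$ transport a concrete $v$ to $\tilde v$ with $v = \tilde v\{\mathcal{M}\}$, again with $\mathcal{M}$ unchanged. The interesting cases are $(\textsc{OQ}),(\textsc{OA})$: by $O$-refreshing, the concrete value $v$ carries fresh method names and fresh integers at leaf positions; the symbolic rule via $\mathtt{symval}$ produces fresh method names and fresh symbolic integers $\kappa_1,\dots,\kappa_n$ in exactly those positions, and I extend $\mathcal{M}$ to $\mathcal{M}' = \mathcal{M}\cup\{\kappa_j \mapsto i_j\}$, where $i_j$ are the corresponding concrete integers. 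Since the $\kappa_j$ are fresh in $\dom(\sigma)\cup\abs\cap\sints$, the extension preserves $\mathcal{M}'\models pc\land\sigma^\circ$.

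For the symbolic-to-concrete direction, the branching rules on symbolic integers (symbolic $\assert(\kappa)$ and $\ifm{\kappa}{\cdot}{\cdot}$) require $\mathcal{M}\models pc'$ where $pc'$ is the updated condition; by definition of $=_\mathcal{M}$ exactly one branch is consistent with $\mathcal{M}$, and the concrete side reduces along that branch with $\mathcal{M}'=\mathcal{M}$. For the symbolic arithmetic rule $E[\tilde v_1\oplus\tilde v_2]\symto E[\kappa]$ with $\sigma' = \sigma\uplus\{\kappa\mapsto \tilde v_1\oplus\tilde v_2\}$, I set $\mathcal{M}' = \mathcal{M}\cup\{\kappa\mapsto (\tilde v_1\oplus\tilde v_2)\{\mathcal{M}\}\}$, which is well defined as $\kappa$ is fresh; the concrete side applies the integer rule to reach the matching configuration, and $\mathcal{M}'\models (\sigma')^\circ$ follows by construction. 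The move rules are handled symmetrically to the other direction, extending $\mathcal{M}$ on the fresh $\kappa_j$ introduced by $\mathtt{symval}$ with arbitrary integers, which is always realisable concretely since the concrete $(\textsc{OQ})/(\textsc{OA})$ rules impose no constraint on the integer components of $v$.

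The main obstacle I anticipate is the bookkeeping around freshness and cumulative model extensions: one must check that every newly introduced $\kappa$ is genuinely fresh against $\dom(\mathcal{M})$ (guaranteed because $\dom(\mathcal{M}) = (\abs\cup\dom(\sigma))\cap\sints$ in $=_\mathcal{M}$) and that the updated $\abs'$ on the symbolic side correctly records the new symbolic integers so that subsequent freshness conditions in $\mathtt{symval}$ remain valid. Once freshness is tracked consistently, the two simulation clauses follow uniformly from the structural correspondence between the two rule sets.
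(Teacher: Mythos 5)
Your proposal is correct and follows essentially the same route as the paper: the paper also exhibits the candidate relation $\mathcal{R}=\{(\rho,\mathcal{M},\rho_s)\mid \rho=_{\mathcal{M}}\rho_s\}$, checks it is a bisimulation by case analysis on the transition rules, keeps $\mathcal{M}$ fixed on internal and $P$-moves, and extends it on the fresh symbolic integers introduced by $O$-moves and by symbolic arithmetic. The only organisational difference is that the paper factors the (INT) cases into a separate auxiliary lemma on term-level configurations (its Lemma on soundness of symbolic execution), whereas you handle those reductions inline; the content of the argument is the same.
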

\begin{proof}[Proof (sketch)]
We show that  $\{(\rho,\mathcal{M},\rho')\mid \rho=_{\cal M}\rho'\}$ is a bisimulation.
\end{proof}

{Next, we argue that $O$-refreshing transitions suffice for examining failure of concrete configurations. Indeed, suppose $\tau$ is a trace leading to fail, and where $O$ plays an old name $m$ in argument position in a given move. Then, $\tau$ can be simulated by a trace $\tau'$ that uses a fresh $m'$ in place of $m$.
If $m$ is an $O$-name, we obtain $\tau'$ from $\tau$ by following exactly the same transitions, only that some $P$-calls to $m$ are replaced by calls to $m'$ (and accordingly for returns). If, on the other hand, $m$ is a $P$-name, then the simulation performed by $\tau'$ is somewhat more elaborate: some internal calls to $m$ will be replaced by $P$-calls to $m'$, immediately followed by the required calls to $m$ (and dually for returns).}

\begin{lemma}[O-Refreshing]\label{lem:Orefresh}
Let $\rho$ be a concrete configuration. Then, $\rho$ fails iff it fails using only $O$-refreshing transitions.
\end{lemma}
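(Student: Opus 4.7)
The plan is to handle the two directions separately. The ``if'' direction is immediate, since every $O$-refreshing trace is a trace, so failure via $O$-refreshing transitions implies that $\rho$ fails. For the ``only if'' direction, fix a failing trace $\tau$ from $\rho$, and proceed by induction on the number $n$ of $O$-moves in $\tau$ that play an old name in argument position. The base $n=0$ is trivial. For the inductive step, locate the first offending move $\chi_i=\mathtt{call}/\mathtt{ret}(m_\star,v)$, pick an old name $m$ appearing in $v$, and construct an alternative trace $\tau'$ that coincides with $\tau$ on $\chi_1,\ldots,\chi_{i-1}$, uses a fresh $m'$ in place of $m$ inside $\chi_i$, and simulates the remainder of $\tau$ under the renaming $m\leftrightarrow m'$. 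Since this strictly decreases the offense count, applying the induction hypothesis to $\tau'$ yields the desired $O$-refreshing failing trace.

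The simulation of the tail of $\tau$ splits on whether $m$ is currently an $O$-name ($m\in\abs$) or a $P$-name ($m\in\dom(R)$). If $m\in\abs$, then $P$ cannot distinguish $m$ from $m'$ since \holib\ offers no name-equality test, so $P$'s behaviour in $\tau'$ mirrors that of $\tau$ verbatim after renaming: each $P$-call $\mathtt{call}(m,w)$ in $\tau$ becomes $\mathtt{call}(m',w)$ in $\tau'$, and $O$ replies with the same moves it used in $\tau$, which is admissible because $O$ may freely select any enabled move from an $O$-configuration. If instead $m\in\dom(R)$, then $P$ holds a body for $m$ but not for $m'$, so the internal reduction $E[m\,w]\to E[\evalbox{\cdots}]$ available in $\tau$ is blocked in $\tau'$; we splice in, as $O$'s response to the resulting $P$-question $\mathtt{call}(m',w)$, an $O$-question $\mathtt{call}(m,w)$, valid because method ownership places $m$ in $O$'s public set. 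Now $P$ reduces $m\,w$ exactly as in $\tau$, and when that sub-evaluation returns $v'$ via $P$-answer $\mathtt{ret}(m,v')$ we append $O$-answer $\mathtt{ret}(m',v')$ to restore alignment with $\tau$.

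The invariant threaded through the construction is that, after processing the first $j$ transitions of $\tau$, the configuration reached along $\tau'$ equals the one reached along $\tau$ up to the renaming $m\leftrightarrow m'$; in particular, since a failing configuration has shape $(\eval,E[\assert(0)],\ldots)$ and the integer $0$ is invariant under renaming of names, $\tau'$ also reaches a failing configuration. The main obstacle I anticipate is the bookkeeping of $\pub,\abs$ and of the evaluation stack $\eval$ in the $P$-name case: each spliced-in (OQ)/(PA) pair must meet the side-conditions (\textbf{PC})/(\textbf{OC}) on name ownership and fresh-name accumulation, and the inserted $O$-question pushes a stack frame that must be popped by the matching inserted $O$-answer, so these insertions must nest properly with any further external calls that $P$ makes inside the simulated call to $m$. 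A minor subtlety is that $v$ may hold several old names, which is resolved by iterating the single-name construction, each iteration strictly reducing the count.
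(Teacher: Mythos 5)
Your simulation content --- the case split on whether the old name $m$ is an $O$-name or a $P$-name, the pure renaming in the former case, and the splicing of a $P$-question $\mathtt{call}(m',w)$ answered by an $O$-question $\mathtt{call}(m,w)$ (with the matching return pair) in the latter --- is exactly the construction the paper uses; the paper formalises it via a ``phantom name'' dictionary $\Phi$ mapping each fresh replacement to its original, and proves failure is preserved under $\Phi$. The gap is in your outer induction. You induct on the number of offending $O$-moves and claim each repair strictly decreases that count, but in the $P$-name case the spliced-in $O$-question $\mathtt{call}(m,w)$ replays the argument $w$ that $P$ computed for its internal call, and $w$ may itself contain old, already-disclosed method names (e.g.\ a name $P$ created earlier, or even the very name $m'$ just received). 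Since an $O$-refreshing transition requires \emph{all} names in the payload to be fresh and distinct, each such spliced-in move is a new offence; the same applies to the spliced-in $O$-answer $\mathtt{ret}(m',v')$ when $v'$ carries old names. So repairing one offence can introduce several new ones later in the trace, and your measure is not well-founded as stated.

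The paper avoids this by not inducting per offence: it runs a single simulation by induction on the length of the failing run, in which \emph{every} name $O$ plays --- including those in spliced-in moves --- is refreshed simultaneously, with the global dictionary $\Phi$ maintained as the invariant relating the two runs. You could repair your argument the same way: make the induction on the number of transitions of the original failing run, carry the accumulated renaming as part of the inductive invariant instead of fixing one name at a time, and refresh all old names of each $O$-move at once. Relatedly, your invariant ``the configurations agree up to the renaming $m\leftrightarrow m'$'' needs strengthening in the $P$-name case: there the two configurations differ by more than a renaming, since $m'$ is an $O$-name absent from $R$ while $m$ remains in $R$, and the evaluation stacks differ by the extra spliced frames during the simulated sub-evaluation. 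The dictionary-based invariant (together with the well-bracketing of the inserted call/return pairs, which you correctly flag as the delicate point) is what makes that correspondence precise.
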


With the above, we can prove soundness.

\begin{theorem}[Soundness]\label{sound_symb}
For any $L$, $\sem{L}$ fails iff $\sem{L}_s$ fails.
\end{theorem}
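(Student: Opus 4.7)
My plan is to reduce the theorem directly to the two preceding lemmas, \textbf{sound\_bisim} (bisimilarity of equivalent concrete and symbolic configurations) and \textbf{O-Refreshing}. The key observation is that for a library $L$ that builds to $(\varepsilon,R,S,\pub,\abs)$, the initial concrete configuration $\rho_0 = (\varepsilon,0,R,S,\pub,\abs,0)_o$ and the initial symbolic configuration $\rho_0^s = (\varepsilon,0,R,\pub,\abs,S,\top,0)_o$ are trivially equivalent under the empty model $\mathcal{M}_0 = \emptyset$: no symbolic integers are in play, $S$ coincides with $\sigma \upharpoonright \refs$ interpreted under $\mathcal{M}_0$, and $\mathcal{M}_0 \models \top \land \emptyset^\circ$. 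Hence $\rho_0 \sim_{\mathcal{M}_0} \rho_0^s$ by Lemma~\ref{sound_bisim}, and I can drive both simulations from this common starting point.

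For the forward direction, assume $\sem{L}$ fails, witnessed by some $(\tau,\rho) \in \sem{L}$ with $\rho = (\eval,E[\assert(0)],\cdots)$. By Lemma~\ref{lem:Orefresh}, I can assume without loss of generality that the transition sequence $\rho_0 \xrightarrow{\tau} \rho$ uses only $O$-refreshing transitions. Then I induct on the length of this sequence, repeatedly applying the first clause of the bisimulation definition: at each step, the current pair $(\rho_i, \mathcal{M}_i, \rho_i^s)$ with $\rho_i \xrightarrow{\chi} \rho_{i+1}$ O-refreshing yields an extension $\mathcal{M}_{i+1} \supseteq \mathcal{M}_i$ and a symbolic move $\rho_i^s \xsymto{\chi_s} \rho_{i+1}^s$ with $\chi = \chi_s\{\mathcal{M}_{i+1}\}$ and $\rho_{i+1} \sim_{\mathcal{M}_{i+1}} \rho_{i+1}^s$. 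At the end, equivalence forces $\rho_{i+1}^s$ to have the shape $(\eval,E[\assert(0)],\cdots)$ (since $\assert(0)$ is already concrete and maps to itself under any model), and the accumulated model $\mathcal{M}$ satisfies $\rho_{i+1}^s(pc) \land \rho_{i+1}^s(\sigma)^\circ$, so $(\tau_s, \rho_{i+1}^s) \in \sem{L}_s$ and witnesses failure.

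The backward direction is dual and slightly easier: assume $(\tau_s, \rho_s) \in \sem{L}_s$ is failing, with witnessing model $\mathcal{M} \models \rho_s(\sigma)^\circ \land \rho_s(pc)$. Using the second clause of the bisimulation repeatedly along the symbolic trace $\rho_0^s \xsymto{\tau_s} \rho_s$, I obtain a concrete trace $\rho_0 \xrightarrow{\tau} \rho$ with $\tau = \tau_s\{\mathcal{M}'\}$ for some $\mathcal{M}' \supseteq \mathcal{M}_0$, and $\rho \sim_{\mathcal{M}'} \rho_s$. Equivalence preserves the $\assert(0)$ redex, so $\rho$ is failing, giving $(\tau,\rho) \in \sem{L}$.

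The main obstacle I anticipate is the forward direction: the bisimulation only matches concrete transitions that are $O$-refreshing, so I must justify that restricting to such transitions loses no failing traces. This is precisely what Lemma~\ref{lem:Orefresh} provides, so the argument ultimately hinges on that lemma being correctly established. A secondary subtlety is the bookkeeping of the growing model $\mathcal{M}_i$ along the induction, and ensuring the domain constraint $\dom(\mathcal{M}_i) = (\abs_i' \cup \dom(\sigma_i)) \cap \sints$ is maintained; this follows from the bisimulation clause giving $\mathcal{M}_{i+1} \supseteq \mathcal{M}_i$ and from the symbolic rules only introducing fresh symbolic names via ($\widetilde{\textsc{OQ}}$), ($\widetilde{\textsc{OA}}$) and the arithmetic internal rule, each of which requires $\mathcal{M}$ to be extended on exactly those new names for equivalence to be preserved.
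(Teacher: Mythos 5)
Your proposal is correct and follows essentially the same route as the paper: it reduces the theorem to Lemma~\ref{sound_bisim} (starting the bisimulation from the trivially $=_{\emptyset}$-equivalent initial configurations and unfolding it along the traces in both directions) together with Lemma~\ref{lem:Orefresh} to justify restricting the concrete side to $O$-refreshing transitions. The paper states this in two lines; you merely spell out the induction and model bookkeeping that it leaves implicit.
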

\begin{proof}
Lemma~\ref{sound_bisim} implies that $\sem{L}_s$ fails iff $\sem{L}$ fails with $O$-refreshing transitions, which in turns occurs iff $\sem{L}$ fails, by Lemma~\ref{lem:Orefresh}.
\end{proof}





\subsection{Bounded Analysis for Libraries}

Definition~\ref{def:bmc} states how the symbolic trace semantics can be used to independently {check libraries for errors}. As with the trace semantics in Definition~\ref{def:tracesem}, this is strongly normalising when given an upper limit to the call counters. {As such, $\sem{L}_s$ with counter bounds $k_0,l_0 \in \mathbb{N}$, for $k,l$ respectively}, defines a finite set (modulo selecting of fresh names) of reachable valid configurations within $k \leq k_0, l \leq l_0$, where validity is defined by the satisfiability of the symbolic environment $\sigma$ and the path condition $pc$ of the configuration reached.
By virtue of Theorems~\ref{sound_bisim} and~\ref{SandC},
every valid reachable configuration that is failed (evaluates an invalid assertion) is realisable by some client. And viceversa.

Given a library $L$, taking $\mathcal{F}\sem{L}_s$ to be all reachable final configurations, we have the exhaustive set of paths $L$ can reach. In $\mathcal{F}\sem{L}_s$, every failed configuration $(\tau,\rho)$, i.e.\ such that $\rho$ holds a term $E[\assert(0)]$, defines a reachable assertion violation, where $\tau$ is a true counterexample. Hence, to check $L$ for assertion violations it suffices to produce a finite representation of the set $\mathcal{F}\sem{L}_s$. One approach is to bound the depth of analysis by setting an upper bound to the call counters, using a name generator to make deterministic the creation of fresh names, and then exhaustively search all final configurations for failed elements. In the following section we implement this routine and test it.


\section{Implementation and Experiments}\label{sec:experiments}
We implemented the syntax and symbolic trace semantics (symbolic games) for {\holib} in the $\kfw$ semantic framework~\cite{k:framework} as a proof of concept, and tested it on 70 sample libraries.\footnote{The tool and its benchmarks can be found at: 
\url{https://github.com/LaifsV1/HOLiK}.}
%
Using $\kfw$'s option to exhaustively expand all transitions, $\kfw$ is able to build a closure of all applicable rules. By providing a bound on the call counters, we produce a finite set of all reachable valid symbolic configurations up to the given depth (equivalent to finding every valid $\rho \in \mathcal{F}\sem{L}_s$) which thus implements our bounded symbolic execution.

We wrote and adapted examples of coding errors into a set of 70 sample libraries written in {\holib}, totalling 6,510 lines of code (LoC). Examples adapted from literature include: reentrancy bugs from smart contracts~\cite{smart:surveyAttacks,smart:oyente}; 
variations of the ``awkward example''~\cite{akward}; 
{various programs from the {\mochi} benchmark~\cite{mochi}}; 
and simple implementations related to concurrent programming (e.g. flat combining and race conditions) where errors may occur in a single thread due to higher-order behaviour.
We also combined several libraries, by concatenating refactored method and reference definitions, to generate larger libraries that are harder to solve. 
Combined files range from 150 to 520 LoC.

We ran {\holik} on all sample libraries, lexicographically increasing the bounds from $k\leq 2,l\leq 1$ to $k\leq 5,l\leq 3$ (totalling 78,120 LoC checked), with a timeout set to five minutes per library. We start from $k\leq 2$ because it provides the minimum nesting needed to observe higher-order semantics. All experiments ran on an Ubuntu 19.04 machine with 16GB RAM, Intel Core i7 3.40GHz CPU, with intermediate calls to \textsc{Z3} to prune invalid configurations. Per bound, the number of counterexamples found, the time taken in seconds, and the execution status, i.e. whether it terminated or not, are recorded in Table~\ref{tab:holik}.

\begin{table}
	\begin{center}
		\begin{tabular}{c | c | c | c }
& $l \leq 1$ & $l \leq 2$ & $l \leq 3$ \\
\hline
$k \leq 2$~ & 226/70/45 (555s) & 5708/60/44 (4710s) & 9656/3/23 (12471s) \rule{0pt}{4.5mm}\\[2mm]
$k \leq 3$~ & 1254/67/51 (1475s) & 4092/27/18 (13482s) & 4187/17/12 (16649s) \\[2mm]
$k \leq 4$~ & 3392/63/48 (3180s) & 3069/19/14 (15903s) & 1335/12/10 (17765s) \\[2mm]
$k \leq 5$~ & 3659/57/45 (4787s) & 895/15/10 (16757s) & 215/11/9 (17796s)\\
		\end{tabular}\\
	\text{}\\
	$a$/$b$/$c$ ($d$) for $a$ traces found in $b$ successful runs taking $d$ seconds in total\\
	where $c$ out of 59 unsafe files were found to have bugs, per bound.\\
	59 of 59 unsafe files found to have bugs over the various bounds checked
	\end{center}
	\caption{Table recording performance of \holik{} on our benchmarks}
	\label{tab:holik}
\end{table}

We can observe that independently increasing the bounds for $k$ and $l$ causes exponential growth in the total time taken, which is expected from symbolic execution. Note that the time tends towards 21000 seconds because of the timeout set to 5 minutes for 70 programs. The number of errors found also grows exponentially with respect to the increase in bounds, which can be explained by the exponential growth in paths. With bounds $k \leq 2$ and $l \leq 1$, all 70 programs in our benchmark were successfully analysed, though not all minimal errors were found until the bounds were increased further. Cumulatively, all unsafe programs in our benchmark were correctly identified.

While the table may suggest that increasing bound for $l$ is more beneficial than that for $k$, the number of errors reported does not imply every trace is useful. For instance, increasing the bound for $l$ can lead to errors re-merging in a higher-order version, which suggests potential gain from a partial order reduction. Overall, the $k$ and $l$ counters are incomparable as they keep track of different behaviours. Finally, since $\holik$ was able to handle every file and correctly identified all unsafe files in the benchmark, we conclude that $\holik$, as a proof of concept, captures the full range of behaviours in higher-order libraries. Results suggest that the tool scales up to at least medium-sized programs (<1000 LoC), which is promising because real-world medium-size higher-order programs have been proven infeasible to check with standard techniques (e.g. the DAO withdraw contract was approximately 100 LoC).


\section{Related Work}

Game semantics techniques have been applied to program equivalence verification by reducing program equivalence to language equivalence in a decidable automata class~\cite{AGS1,AGS2}. 
Equivalence tools can be used for reachability but, as they perform full verification, they can only cover lower-order recursion-free language fragments. 
For example, the {\coneqct}~\cite{coneqct} tool can verify the simplified DAO attack, but cannot check higher-order or recursive functions (e.g.\ the ``file lock" and ``flat combiner" examples), and {operates on} integers concretely.
Close to our approach is also Symbolic GameChecker~\cite{Dimovski}, which
performs symbolic model checking by using a representation of games based on symbolic finite-state automata. The tool works on recursion-free Idealized Algol with first-order functions, which supports only integer references. On the other hand, it is complete (not bounded)  on the fragment that it covers.

Besides games techniques, a recent line of work on verification of contracts in \racket{} \cite{scv:relatively:complete, scv:total:verification} is the work closest to ours. \racket{} contracts exist in a higher-order setting similar to ours, and generalise higher-order pre and post conditions, and thus specify safety. To verify these, \cite{scv:relatively:complete} defines a symbolic execution based on what they call ``{demonic context}'' in prior work~\cite{scv:demonic:context}. This either returns a symbolic value to a call, or performs a call to a known method within some unknown context, thus approximating all the possible higher-order behaviours, and is equivalent to the role the opponent plays in our games. In \cite{scv:total:verification}, the technique is extended to handle state, and finitised for total verification. The approaches are notionally similar to ours, since both amount to Symbolic Execution for an unknown environment. In substance, the techniques are very different and in particular ours is based on a semantics theory which allows us to obtain compositionality and definability results. On the other hand, \racket{} contracts can be used for richer verification questions than assertion violations. In terms of tool performance, we provide a comparison of the techniques in Appendix~\ref{app:racket}.

Another relevant line of work is that of verifying programs in the Ethereum Platform. Smart contracts call for techniques that handle the environment, with a focus on reentrancy. 
Tools like {\oyente}~\cite{smart:oyente} and {\majan}~\cite{smart:majan} use pre-defined patterns to find bugs in the transaction order, but are not sound or complete. 
{\reguard}~\cite{smart:reguard} finds sound reentrancy bugs using a fuzzing engine to generate random transactions to check with a reentrancy automaton. In principle, it may detect reentrancy faster than symbolic execution (native execution is faster \cite{fuzzers:vs:symbolic}), but, is incomplete even in a bounded setting. More closely related to our approach, \cite{smart:foundationsTools} considers the possibility of an \emph{unknown} contract $c?$ calling a \emph{known} contract $c*$ at each higher call level. This can be generalised in our game semantics as \emph{abstract} and \emph{public} names calling each other, but their focus is on modelling reentrancy, while we handle the full range of higher-order behaviours.

Like {\klee}~\cite{klee} and {\jcute}~\cite{jcute}, our implementation is a symbolic execution tool. These are generally able to find first-order counterexamples, but are unable to produce higher-order traces involving unknown code. Particularly, {\klee} and {\jcute} only handle symbolic calls provided these can be concretised. This partially models the environment, but calls are often impossible to concretise with libraries.
The {\cbmc}~\cite{DBLP:conf/tacas/ClarkeKL04,cprover:cbmc} bounded model checking approach, which also bounds function application to a fixed depth, partially handle calls to unknown code by returning a non-deterministic value to such calls. This is equivalent to a game where only move available to the opponent is to answer questions. This restriction allows {\cbmc} to find some bugs caused by interaction with the environment, but misses errors that arise from transferring flow of control (e.g. reentrancy). The typical BMC approach also misses bugs involving disclosure of names.

Higher-order model checking tools like $\mochi$~\cite{mochi} are also related. $\mochi$ model checks a pure subset of OCaml and is based on predicate abstraction and CEGAR and higher-order recursion scheme model checkers. The modular approach~\cite{mochi:modular} further extends this idea with modular analysis that guesses refinement intersection types for each top-level function. Although generally incomparable, $\holik$ covers program features that $\mochi$ does not: $\mochi$ does not handle references and support for open code is limited (from experiments, and private communication with the authors).

\section{Future Directions} Observing errors resurface deeper in the trace suggests the possibility of defining a partial order for our semantics to obtain equivalence classes for configurations and thus eliminate paths that involve known errors~\cite{partial-order:ample,partial-order:stubborn}. Additionally, while $k$ and $l$ successfully bound infinite behaviour, a notion of bounding can be arbitrarily chosen. In fact, while we chose to directly bound the sources of infinite behaviour in method calls for simplicity of proofs and implementation, the theory does not prevent the generalisation of $k$ and $l$ as a monotonic cost function that bounds the semantics. It may also be worth considering the elimination of bounds entirely for the sake of unbounded verification. For this, one direction is abstract interpretation~\cite{abs:int:cousot,abs:int:wide-narrow}, which amounts to defining overapproximations for values in our language to then attempt to compute a fixpoint for the range of values that assertions may take. However, defining and using abstract domains that maintain enough precision to check higher-order behaviours, such as reentrancy, is not a simple extension of the theory. Another direction, similar to $\coneqct$ \cite{coneqct}, is to define a push-down system for our semantics. Particularly, the approach in \cite{coneqct} is based on the decidability of reachability in fresh-register pushdown automata, and would require overapproximations for methods and integers. As with abstract interpretation, this would require defining abstract domains for methods and integers. While methods could be approximated using a finite set of names, as with $k$-CFA~\cite{kcfa}, an extension using integer abstract domains would need refinement to tackle reentrancy attacks. Finally, $\mochi$~\cite{mochi} shows that it is possible to use CEGAR and higher-order recursion schemes for unbounded verification of higher-order programs. However, an extension of the $\mochi$ approach to include references and open code is not obvious.




\bibliography{symbolic}

\newpage

\appendix 
\section{Motivating examples}\label{app:examples}

Our {file lock} example provides a scenario where the library makes it possible for the client to update a file without first reacquiring the lock for it. The library contains an empty private method \texttt{updateFile} that simulates file access. The library also provides a public method \texttt{openFile}, which locks the file, allows the user to update the file indirectly, and then releases the lock.
\begin{lstlisting}[label=filelock]
  import userExec :((unit -> unit) -> unit)
  int lock := 0;
  private updateFile(x:unit) :(unit) = { () };
  public openFile (u:unit) :(unit) = {
    if (!lock) then ()
    else (lock := 1;
         let write = fun(x:unit):(unit) -> (assert(!lock);updateFile()) 
         in userExec(write); lock := 0) };
\end{lstlisting}
The bug here is that \texttt{openFile} creates a \texttt{write} method, which it then passes to the client, via \texttt{userExec(write)}, to use whenever they want. This provides the client indirect access to the private method \texttt{updateFile}, which it can call without first acquiring the lock. Running this example in {\holik} we obtain the following minimal trace:
\begin{align*}
&call\pair{openFile,()} \cdot call\pair{userExec,m_2} \cdot ret\pair{userExec,()}\\
&\qquad\cdot ret\pair{openFile,()} \cdot call\pair{m_2,()}
\end{align*}
where $m_2$ is the method \emph{name} generated by the library and bound to the variable \texttt{write}. This example serves as a representative of a class of bugs caused by revealing methods to the environment, a higher-order problem, in this case involving the second-order method \texttt{userExec} revealing $m_2$.

Next, we simulate double deallocation using a global reference \texttt{addr} as the memory address. The library defines private methods \texttt{alloc} and \texttt{free} to simulate allocation and freeing. The empty private method \texttt{doSthing} serves as a placeholder for internal computation that does not free memory.
\begin{lstlisting}[label=dfree]
  import getInput :(unit -> int)
  int addr := 0; // 0 means address is free
  private alloc (u:unit) :(unit) = { 
    if not(!addr) then addr := 1 else () };
  private free (u:unit) :(unit) = { 
    assert(!addr); addr := 0 };
  private doSthing (i:int) :(unit) = { () };
  public run (u:unit) :(unit) = {
    alloc(); doSthing(getInput ()); free() };
\end{lstlisting}
The error occurs in line 9, which calls the client method \texttt{getInput}. This passes control to the client, who can now call \texttt{run} again, thus causing \texttt{free} to be called twice. Executing the example on {\holik}, we obtain the following trace:
\begin{align*}
&call\pair{run,()} \cdot call\pair{getInput,()} \cdot call\pair{run,()}\cdot call\pair{getInput,()} \\
&\qquad\cdot ret\pair{getInput,x_1} \cdot ret\pair{run,()} \cdot ret\pair{getInput,x_2}
\end{align*}
As with the DAO attack, this is a reentrancy bug.

Finally, we have an unsafe implementation of a flat combiner. The library defines two public methods: \texttt{enlist}, which allows the client to add procedures to be executed by the library, and \texttt{run}, which lets the client run all procedures added so far. The higher-order global reference \texttt{list} implements a list of methods.
\begin{lstlisting}[label=dfree]
  private empty(x:int) : (unit) = { () };
  fun list := empty;
  int cnt := 0; int running := 0;
  public enlist(f:(unit -> unit)) :(unit) = {
    if (!running) then ()
    else 
      cnt := !cnt + 1;
      (let c = !cnt in let l = !list in 
       list := (fun(z:int):(unit) -> if (z == c) then f() else l(z)))};
  public run(x:unit) :(unit) = {
    running := 1;
    if (0 < !cnt) then 
      (!list)(!cnt);
      cnt := !cnt - 1; assert(not (!cnt < 0)); run()
    else (list := empty; running := 0) };
\end{lstlisting}
The bug here is also due to a reentrant call in line~13. However, this is a much tougher example as it involves a higher-order reference \texttt{list}, a recursive method \texttt{run}, and a second-order method \texttt{enlist} that reveals client names to the library. With {\holik}, we obtain the following minimal counterexample:
\begin{align*}
&call\pair{enlist,m_1} \cdot ret\pair{enlist,()} \cdot call\pair{run,()}\cdot call\pair{m_1,()} \\
&\qquad\cdot call\pair{run,()} \cdot call\pair{m_1,()} \cdot ret\pair{m_1,()} \cdot ret\pair{run,()} \cdot ret\pair{m_1,()}
\end{align*}
where $m_1$ is a client name revealed to the library. In the trace above, \texttt{enlist} reveals the method $m_1$ to the library. This name is then added to the list of procedures to execute. In \texttt{run}, the library passes control to the client by calling $m_1$. At this point, the client is allowed to call \texttt{run} again before the list is updated.


\section{Comparison with \racket{} Contract Verification}\label{app:racket}

We shall consider the latest version of the tool \cite{scv:total:verification} since it handles state, which we refer to as \scv{} (Software Contract Verifier). A small benchmark (19 programs) based on \holik{} and \scv{} benchmarks was used for testing. Programs were manually translated between \holib{} and \racket{}. Care was taken to translate programs whilst maintaining their semantics: contracts enforcing an input-output relation were translated into \holib{} using wrapper functions that define the relation through an if statement.
In the other direction, since contracts do not directly access references inside a term, stateful functions were translated from HOLi to return any references we wish to reason about.

Table~\ref{tab:holik:scv} records the comparison. On one hand, \holik{} only found real errors, whereas \scv{} reported several spurious errors--a third of all errors were spurious. On the other hand, \scv{} was able to prove total correctness of 3 of the 7 safe files present. \scv{} also scales much better than \holik{} with respect to program size, which is in exchange of precision. The difference in time for small programs is mainly due to initialisation time. 
Subtle differences in the nature of each tool can also be observed. e.g., \holik{} reports 1 real error for \texttt{ack-simple-e}, whereas \scv{} reports 2 errors. The difference is because \scv{} takes into account constraints for integers (e.g. $>0$ and $=0$). More interestingly, for \texttt{various}, \holik{} reports 19 ways to reach assertion violations, whereas \scv{} reports only 6 real ways to violate contracts. The difference is because \holik{} reports paths through the execution tree that reach errors, whereas \scv{} reports a set of terms that may violate the contracts. For instance, independently safe methods $A$ and $B$ that may call an unsafe method $C$ would be, from testing, reported as three valid traces ($call\pair{A} \cdot call\pair{C}$, $call\pair{B} \cdot call\pair{C}$ and $call\pair{C}$) by \holik{}. In contrast, \scv{} reports a single contract violation blaming $C$. Finally, \texttt{ack} failed to run on \scv{} due to unknown errors; \racket{} reported an error internal to the tool. Further testing proved the file is a valid \racket{} program that can be executed manually.

\begin{table}[t]\small
	\begin{center}
		\begin{tabular}{c | c c c | c c c c}
			Program & LoC & Traces & Time (s) & LoC & Errors & Time (s) & False Errors\\
			\hline
			ack  & 17 & 0 & 6.0 & 9 & N/A & 2.4 & N/A\\
			ack-simple  & 13 & 0 & 6.5 & 9 & 0 & 2.4 & 0\\
			ack-simple-e  & 13 & 1 & 6.5 & 9 & 2 & 2.5 & 0\\
			dao  & 10 & 0 & 5.0 & 15 & 1 & 2.6 & 1\\
			dao-e  & 16 & 1 & 5.5 & 15 & 1 & 2.7 & 0\\
			dao-various  & 85 & 5 & 22.5 & 122 & 10 & 3.0 & 5\\
			dao2-e  & 85 & 10 & 23.5 & 122 & 10 & 2.9 & 0\\
			escape  & 9 & 0 & 5.0 & 9 & 0 & 2.6 & 0\\
			escape-e  & 9 & 2 & 5.0 & 10 & 1 & 2.7 & 0\\
			escape2-e  & 10 & 14 & 6.0 & 10 & 1 & 2.7 & 0\\
			factorial  & 10 & 0 & 5.0 & 9 & 0 & 2.2 & 0\\
			mc91  & 12 & 0 & 5.0 & 9 & 1 & 2.2 & 1\\
			mc91-e  & 12 & 1 & 5.0 & 8 & 1 & 2.4 & 0\\
			mult  & 14 & 0 & 5.0 & 11 & 2 & 2.7 & 2\\
			mult-e  & 14 & 1 & 5.0 & 11 & 2 & 2.4 & 0\\
			succ  & 7 & 0 & 5.0 & 7 & 1 & 2.5 & 1\\
			succ-e  & 7 & 1 & 5.0 & 7 & 1 & 2.8 & 0\\
			various & 116 & 19 & 14.0 & 108 & 11 & 6.2 & 5\\
			total & 459 & 55 & 140.5 & 500 & 45 & 49.8 & 15
		\end{tabular}
	\end{center}
	\caption{Comparison of \holik{} (left) and \scv{} (right). N/A is recorded for \texttt{ack} as in our attempts \scv{} crashed due to unknown reasons.}
	\label{tab:holik:scv}
\end{table}


\section{ML-like References}\label{general:refs}

\holib\ has global higher-order references. These are enough for coding all of our examples and, moreover, allow us to prove {completeness} (every error has a realising client).
We here
present a sketch of how games can be extended with (locally created, scope extruding) ML-like references, following e.g.~\cite{OGS2,OGS3}.
First, the following extension to types and terms are required.
\begin{align*}
\theta ::= \cdots \mid \mathtt{ref}~\theta &&
  M ::= \dots \mid {!} M \mid \mathtt{ref}~M\mid M=M && v ::= \dots \mid r
\end{align*}
The term ${!} M$ allows dereferencing terms $M$ which evaluate to references, while $\mathtt{ref}~v$ creates dynamically a fresh name $r \in \refs_{\theta}$ (if $v:\theta$), and the semantic purpose is to update the store $S\uplus\{r \mapsto v\}$ when evaluating $\mathtt{ref}~v$.
Note that this allows us to store references to references, etc.
Finally, the construct $M=M$ is for comparing references for name equality.

With terms handling general references concretely and symbolically, we extend game configurations with sets $\mathcal{L}_p,\mathcal{L}_o \subseteq \refs$ that keep track of reference names disclosed by the proponent and opponent respectively.
References being passed as values means that the client can update the references belonging to the client, and viceversa.
When making a move, for each reference $r$ they own that is passed, the proponent adds $r$ to $\mathcal{L}_p$.
Passing of names in a move can be done either by method argument and return value, but also via the common part of the store (i.e.\ via the references known to both players). Similarly, opponent passes names in their moves, which are added to $\mathcal{L}_o$. Concretely, when the opponent passes control, all references in $\mathcal{L}_p$ are updated with opponent values. Symbolically, the references $r$ are updated with distinct fresh symbolic integers $\kappa$ if $r \in \refs_\ints$, distinct fresh method names if $r\in\refs_{\theta_1\to\theta_2}$, or to arbitrary reference names if $r\in\refs_{\refs_\theta}$.


\shorten{\newcommand\code[1]{\sharp(#1)}

\section{Definability}\label{app:def}

In this section we show that every trace $\tau$ in the semantics of a library $L$ has a corresponding good client that realises the same trace in its semantics. 

Let $L$ be a library with public names $\pub$ and abstract names $\abs$.
Given a trace  $\tau$  produced by $L$, with  $\pub'$ and $\abs'$  respectively the public and abstract names introduced in $\tau$, we set:
\begin{align*}
  {\cal N} &= \pub \cup \pub' \cup \abs \cup \abs'\\
  \Theta_v &= \{\theta \mid \exists  m\in{\cal N}.\ m:\theta'\land \theta \text{ a syntactic subtype of }\theta'\} \\
  \Theta_m &= \{\theta\in\Theta \mid \theta \text{ a method type}\}
\end{align*}
Note that the above sets are finite, since $\tau,\pub,\abs$ are finite. We assume a fixed enumeration of ${\cal N}=\{m_1,m_2,\cdots,m_n\}$. Moreover, for each type $\theta$, we let $\mathbf{defval}_\theta$ be a default value, and $\mathbf{diverge}_\theta$ a term that on evaluation diverges by infinite recursion.
We then construct a client $C_{\tau,\pub,\abs}$ as in Figure~\ref{fig:defin}.
\begin{figure}[t]
\begin{center}
  \begin{lstlisting}[escapeinside={(*}{*)},basicstyle=\small]
  global cnt := 0
  global meth := 0
  global ref(*$_i$*) := (*$m_i$*)          # (*for each $m_i\in\pub$*)
  global ref(*$_i$*) := defval      # (*for each $m_i\in\pub'$*)
  global val(*$_\theta$*) := defval      # (*for each $\theta\in\Theta_v$*)
  public (*$m_i$*) = lambdax.            # (*for each $m_i\in\abs$*)
      cnt++; meth:=i; val(*$_{\theta_1}$*):=x; oracle()
  (*$m_i$*) = lambdax.                   # (*for each $m_i\in\abs'$*)
      cnt++; meth:=i; val(*$_{\theta_1}$*):=x; oracle()
  oracle = lambda().
      match (!cnt) with    # (*number of P-moves played so far (max $|\tau|/2$)*)
      | i -> 
        # (*if $i>0$ and $i$-th P-move of $\tau$ is $\mathtt{cr}\, m_j(v)$, with $m_j:\theta_1\to\theta_2$, then*)
        # (*- if $\mathtt{cr}=\mathtt{ret}$ then $d = 0$ and $\theta=\theta_2$*)
        # (*- if $\mathtt{cr}=\mathtt{call}$ then $d = j$  and $\theta=\theta_1$*)
        # (*diverge if the last P-move played is different from $\mathtt{cr}\, m_j(v)$*)
        if not (!meth = d and !val(*$_\theta$*) (*$\overset{\land}{=}_\theta$*) v) then diverge
        else for (*$m_i$*) in (*$fresh$*)(!val(*$_\theta$*)) do ref(*$_i$*) := (*$m_i$*)
        # (*if $(i+1)$-th O-move of $\tau$ is $\mathtt{cr}'\, m_k(u)$, with $m_k:\theta_1\to\theta_2$, then*)
        # (*- if $\mathtt{cr}'=\mathtt{ret}$ then $c = 0$*)
        # (*- if $\mathtt{cr}'=\mathtt{call}$ then $c=k$*)
        if c then let x = (!ref(*$_k$*))u in   # (*call $m_k(u)$*)
                  cnt++; meth:=0; val(*$_{\theta_2}$*):=x; oracle(); !val(*$_{\theta_2}$*)
        else val(*$_{\theta_2}$*):=u                   # (*return $u$*)
  main = oracle()
\end{lstlisting}
\end{center}\caption{The client $C_{\tau,\pub,\abs}$.}\label{fig:defin}
\end{figure}

\noindent
The code is structured as follows.
\begin{enumerate}
\item We start off by defining global references:
\begin{itemize}
\item $cnt$ counts the number of $P$ (Library) moves played so far;
\item $meth$ stores an index that records the move made by P: if the move was a return then $meth$ stores 0; if it was call to $m_i$ then $meth$ stores $i$;
\item each $ref_i$ will store the method $m_i\in\pub\cup\pub'$, either since the beginning (if $m_i\in\pub$), or once P plays it (if $m_i\in\pub'$);
\item each $val_\theta$ will be used for storing the value played by P in their last move. \end{itemize}
In the latter case above, there is a light abuse of syntax as $\theta$ can be a product type, of which \holib\ does not have references. But we can in fact simulate references of arbitrary type by several \holib\ references.
\item For each $m_i : \theta_1 \to \theta_2 \in \abs$, we define a public method $m_i$ that simulates the behaviour of O whenever $m_i$ is called in $\tau$:
  \begin{itemize}
  \item it starts by increasing $cnt$, as a call to $m_i$ corresponds to a P-move being played;
    \item it continues by storing $i$ and $x$ in $meth$ and $val_{\theta_1}$ respectively;
  \item it calls the private method $oracle$, which is tasked with simulating the rest of $\tau$ and storing the value that $m_i$ will return in $val_{\theta_2}$;
  \item it returns the value in $val_{\theta_2}$.
  \end{itemize}
\item For each $m_i:\theta_1\to\theta_2\in\abs'$ we produce a method just like above, but keep it private (for the time being).
\item The method $oracle$ performs the bulk of the computations, by checking that the last move played by P was the expected one and selecting the next move to play (and playing it if is a call).
  \begin{itemize}
  \item The oracle is called after each P-move is played, so it starts with increasing $cnt$.
  \item It then performs a case analysis on the value of $cnt$, which  above we  denote collectively by assuming the value is $i$ -- this notation hides the fact that we have one case for each of the finitely many values of $i$.

    For each such $i$,
    the oracle first checks if the previous P-move (if there was one), was the expected one. If the move was a call, it checks whether the called method was the expected one (via an appropriate value of $d$), and also whether the value was the expected one. Value comparisons ($\overset{\land}{=}_\theta$) only compare the integer components of $\theta$, since we cannot compare method names.
      If this check is successful, the oracle extracts from $u$ any method names played fresh by P and stores them in the corresponding $ref_i$.

      Next, the oracle prepares the next move. If, for the given $i$, the next move is a call, then the oracle issues the call, stores the return value of that call, increases $cnt$  and recurs to itself -- when the issued call returns, it would be through a P-move. If, on the other hand, the next move is a return, the oracle simply stores the value to be returned in the respective $val$ reference -- this would allow to the respective $m_i$ to return that value.
    \end{itemize}
\item The $\mathbf{main}$ method simply calls the oracle.
\end{enumerate}

We can then show the following (proof provided in full version~\cite{FULL}). 
For any library $L$ and $(\tau,\rho)\in\sem{L}$, $C_\tau$ is  such that $(\tau,\rho')\in\sem{C_\tau}$ for some $\rho'$.



}{\section{Soundness and Completeness}\label{apx:SC}

We prove here that the trace semantics for libraries is sound and complete: for any error that can be reached in the trace semantics there is a client such that linking the library with the client reaches the same value/error. And viceversa. In the following sections, we prove compositionality of our modified trace semantics. We use a bisimulation argument similar to~\cite{DBLP:journals/corr/MurawskiT16a}.

\subsection{Semantic Composition}

We start by defining a notion of composition that combines the traces produced by two configurations. These are supposed to correspond to a library and a client, but for now we will only require that the configurations satisfy a set of compatibility conditions.

We say configurations $\rho$ and $\rho'$ of \emph{opposite polarity} (one is $p$ if the other is $o$) are compatible ($\rho \asymp \rho'$) if:
\begin{itemize}
\item their stores are disjoint: $\refs(\rho) \cap \refs(\rho') = \varnothing$
\item $\rho$ closes and is closed by $\rho'$: $\pub = \abs'$ and $\pub' = \abs$
\item undisclosed names of $\rho$ do not occur in $\rho'$ and vice versa: $(\meths(\rho) \setminus (\abs \cup \pub)) \cap \meths(\rho') = \varnothing$
\item their evaluation stacks are compatible, written $\eval \asymp \eval'$, which means:
\begin{itemize}
\item $\eval = \eval'=\varepsilon$; or
\item $\eval = (m,l)::\eval_1$ and $\eval' = (m,E)::\eval_1'$, and $\eval_1 \asymp \eval_1'$; or
\item $\eval = (m,E)::\eval_1$ and $\eval' = (m,l)::\eval_1'$, and $\eval_1 \asymp \eval_1'$.
\end{itemize}
\end{itemize}
Note that compatibility of evaluation stacks expects that compatible configurations are always of opposite polarity. This reflects the fact that we compose libraries with closing clients.

\cutout{may seem questionable at first since general compositionality allows a client to have an environment of its own, independent of the library, and thus requires games between the composed configurations--as two opponent configurations--and an external client. However, since we only aim to model check libraries, we assume libraries and clients close each other, meaning that the client captures the whole environment and would include any external interactions as internal moves.}

With these definitions, we follow by defining different notions of composition.

Let $\rho_1,\rho_2,\rho_1',\rho_2'$ be game configurations. The following rules define the semantic composition of two configurations.
\begin{align*}
{\begin{prooftree}
\Hypo{ \rho_1 \to' \rho_1' }
\Hypo{ \rho_2' = \rho_2 }
\Infer2[ $\textsc{Int}_L$ ]
{ \rho_1 \semcomp \rho_2 \to' \rho_1' \semcomp \rho_2' }
\end{prooftree}}
\quad
{\begin{prooftree}
\Hypo{ \rho_2 \to' \rho_2' }
\Hypo{ \rho_1' = \rho_1 }
\Infer2[ $\textsc{Int}_C$ ]
{ \rho_1 \semcomp \rho_2 \to' \rho_1' \semcomp \rho_2' }
\end{prooftree}}
\end{align*}
\begin{align*}
{\begin{prooftree}
\Hypo{ \rho_1 \xrightarrow{\mathtt{call }(m,v)}{\!\!}^\prime \rho_1' }
\Hypo{ \rho_2 \xrightarrow{\mathtt{call }(m,v)}{\!\!}^\prime \rho_2' }
\Infer2[ $\textsc{Call}$ ]
{ \rho_1 \semcomp \rho_2 \to' \rho_1' \semcomp \rho_2' }
\end{prooftree}}
\end{align*}
\begin{align*}
{\begin{prooftree}
\Hypo{ \rho_1 \xrightarrow{\mathtt{ret }(m,v)}{\!\!}^\prime \rho_1' }
\Hypo{ \rho_2 \xrightarrow{\mathtt{ret }(m,v)}{\!\!}^\prime \rho_2' }
\Infer2[ $\textsc{Ret}$ ]
{ \rho_1 \semcomp \rho_2 \to' \rho_1' \semcomp \rho_2' }
\end{prooftree}}
\end{align*}

\subsection{Composite Semantics and Internal Composition}

We now introduce the notion of composing game configurations \boldemph{internally}, which occurs when merging two compatible game configurations into a single composite semantics configuration. We first refine the operational semantics and produce a \boldemph{composite semantics}. This is necessary for our compositionality argument since there is an asymmetry between the call counters of the opponent and proponent configurations. Proponent configurations count calls internally while opponent configurations have no internal counters, and thus only count calls when playing moves. This requires that we keep track of two pairs of counters, one for each component, which may change at different rates.

With this in mind, to define the composite semantics, we extend the term configurations to obtain tuples of the following form:
\[ (M,R_1,R_2,S,k_1,k_2,l_1,l_2) \text{ for which we shall write } (M,\vec{R},S,\vec{k},\vec{l}) \]
where $R_1$ and $R_2$ are the library and client methods respectively, such that $dom(R_1) \cap dom(R_2) = \varnothing$, $S$ is the combined store, and $k_1,l_1$ and $k_2,l_2$ are counters managed by the library and client.
All operators tagged with $i$ will be operating on the $i$th component; e.g. $\vec{R}[m \mapsto M]_i$ states that $R_i[m \mapsto M]$ in $\vec{R}$. We also extend $M$ by tagging all method names (written $m^i$) as well as all lambda-abstractions (written $\lambda^i$) with $i\in\{1,2\}$ to show whether they are being called from the library (1) or the client (2). We write $M^i$ to be the term $M$ with all its methods and lambdas tagged with $i$. Evaluation contexts are also extended to mark methods which are being called from the opposite polarity:
\[ E ::= \dots \mid \evalbox{E}^i \mid \evaltag{E}{i,l}\]
Intuitively, $i$ is the component that is currently at a proponent configuration in the equivalent game semantics, while $l$ in $\evaltag{E}{i,l}$ is the opponent counter for component $3-i$. This will be used particularly when evaluating a method call $m^i v$ when $m \nin dom(R_i)$. Applying these changes, we define the semantics for composite terms ($\compsemto$).
 \begin{flalign*}
 &(E[\assert(i)],\vec{R},S,\vec{k},\vec{l}) \compsemto (E[()],\vec{R},S,\vec{k},\vec{l}) \quad (i \neq 0)\\
 &(E[!r],\vec{R},S,\vec{k},\vec{l}) \compsemto  (E[S(r)],\vec{R},S,\vec{k},\vec{l})\\
 &(E[r:=v],\vec{R},S,\vec{k},\vec{l}) \compsemto (E[()],\vec{R},S[r\mapsto v],\vec{k},\vec{l})\\
 &(E[\pi_j\pair{v_1,v_2}],\vec{R},S,\vec{k},\vec{l}) \compsemto (E[v_j],\vec{R},S,\vec{k},\vec{l})\\
 &(E[i_1 \oplus i_2],\vec{R},S,\vec{k},\vec{l}) \compsemto (E[i],\vec{R},S,\vec{k},\vec{l})\quad (i=i_1\oplus i_2)\\
 &(E[\lambda^i x.M],\vec{R},S,\vec{k},\vec{l}) \compsemto (E[m],\vec{R}[m\mapsto \lambda x.M]_i,S,\vec{k},\vec{l})\quad (m \not\in dom(\vec{R}))\\
 &(E[\ifm{i}{M_1}{M_0}],\vec{R},S,\vec{k},\vec{l}) \compsemto (E[M_j],\vec{R},S,\vec{k},\vec{l})\quad (j = 1 \text{ iff } i \neq 0)\\
 &(E[\letm{x=v}{M}],\vec{R},S,\vec{k},\vec{l}) \compsemto (E[M\{v/x\}],\vec{R},S,\vec{k},\vec{l})\\
 &(E[\letrec{f=\lambda^i x.M}{M'}],\vec{R},S,\vec{k},\vec{l}) \compsemto (E[M'\{m/f\}],\vec{R}[m \mapsto \lambda x.M\{m/f\}]_i,S,\vec{k},\vec{l})\\
 &(E[m^i v],\vec{R},S,\vec{k},\vec{l}) \compsemto (E[\evalbox{M\{v/y\}^i}^i],\vec{R},S,\vec{k}+_i 1,\vec{l})\quad(R_i(m) = \lambda y.M)\\
 &(E[m^i v],\vec{R},S,\vec{k},\vec{l}) \compsemto (E[\evaltag{m^{3-i}v}{i,l+_{3-i}1}],\vec{R},S,\vec{k},\vec{l}[l_i \mapsto 0])\quad(R_{3-i}(m) = \lambda y.M)\\
 &(E[\evalbox{v}^i],\vec{R},S,\vec{k}+_i 1,\vec{l}) \compsemto (E[v^i],\vec{R},S,\vec{k},\vec{l})\\
 &(E[\evaltag{v}{i,l}],\vec{R},S,\vec{k},\vec{l}) \compsemto (E[v^i],\vec{R},S,\vec{k},\vec{l}[l_{3-i} \mapsto l,l_i \mapsto last(E)])\\
 &\qquad \text{if $last(E)$ is defined,}\\
 &\qquad \text{and $last(E) = \hat{l}$ if $E = E_1[\evaltag{E_2}{j,\hat{l}}]$ provided $E_2$ has no tags $\evaltag{\bullet}{j',\hat{l}'}$}
 \end{flalign*}

We continue by defining the \boldemph{internal composition} of compatible configurations $\rho_1 \asymp \rho_2$. We define the internal composition $\rho_1 \confcomp \rho_2$ to be a configuration in our new composite semantics by pattern matching on the configuration polarity and evaluation stacks according to the following rules. For clarity, we annotate opponent and proponent configurations with $o$ and $p$ respectively.

\paragraph*{Initial Configuration:}
\begin{align*}
\rho_1 &= ([],-,R_1,S_1,\pub_1,\abs_1,0,0)_o\\
\rho_2 &= ([],M_0,R_2,S_2,\pub_2,\abs_2,0,-)_p\\
\rho_1 \confcomp \rho_2 &= (\evaltag{\bullet}{1,0}[M_0^2],R_1,R_2,S_1 \uplus S_2,0,0,0,0)
\end{align*}

\paragraph*{Interim Configuration (case OP):}
\begin{align*}
\rho_1 &= (\eval_1,-,R_1,S_1,\pub_1,\abs_1,k_1,l_1)_o\\
\rho_2 &= (\eval_2,M,R_2,S_2,\pub_2,\abs_2,k_2,-)_p\\
\eval_1 &= (m,E)::\eval_1' \quad \eval_2 = (m,l_2)::\eval_2'\\
\rho_1 \confcomp \rho_2 &= ((\eval_1' \confcomp \eval_2')[E^1[\evaltag{M^2}{1,l_2}]],R_1,R_2, S_1 \uplus S_2, k_1,k_2,l_1,l_2)
\end{align*}

\paragraph*{Interim Configuration (case PO):}
\begin{align*}
\rho_1 &= (\eval_1,M,R_1,S_1,\pub_1,\abs_1,k_1,-)_p\\
\rho_2 &= (\eval_2,-,R_2,S_2,\pub_2,\abs_2,k_2,l_2)_o\\
\eval_1 &= (m,l_1)::\eval_1' \quad \eval_2 = (m,E)::\eval_2'\\
\rho_1 \confcomp \rho_2 &= ((\eval_1' \confcomp \eval_2')[E^2[\evaltag{M^1}{2,l_1}]],R_1,R_2,S_1 \uplus S_2, k_1,k_2,l_1,l_2)
\end{align*}
where $\eval_1' \confcomp \eval_2'$ is a single evaluation context resulting from the composition of compatible stacks $\eval_1'$ and $\eval_2'$, which we define as follows:
\begin{align*}
\varepsilon \confcomp \varepsilon &= \bullet\\
((m',E)::\eval_1'') \confcomp ((m',l)::\eval_2'') &= (\eval_1'' \confcomp \eval_2'')[E^1[\evaltag{\bullet}{1,l}]]\\
((m',l)::\eval_1'') \confcomp ((m',E)::\eval_2'') &= (\eval_1'' \confcomp \eval_2'')[E^2[\evaltag{\bullet}{2,l}]]
\end{align*}
Notice that there is only one case for initial configurations, and that is because the game must start from an opponent-proponent configuration where stacks are empty.

\subsection{Bisimilarity of Semantic and Internal Composition}
 We begin by defining bisimilarity for the semantic and internal composition. A set $\mathcal{R}$ with elements of the form $(\rho_1,\rho_2)$, {where $\rho_1$ is a configuration of the form $\rho_1'\semcomp\rho_1''$ and $\rho_2$ is from the composite semantics}, is a \boldemph{bisimulation} if for all $(\rho_1,\rho_2) \in R$:
\begin{itemize}
\item if $\rho_1 \to' \rho_1'$ then $\rho_2 \compsemto^* \rho_2'$ and $(\rho_1',\rho_2') \in \mathcal{R}$;
\item if $\rho_2 \compsemto \rho_2'$ then $\rho_1 \to'^* \rho_1'$ and $(\rho_1',\rho_2') \in \mathcal{R}$.
\end{itemize}
We say that two game configurations $\rho,\rho'$ are \emph{bisimilar}, and write $\rho \sim \rho'$, if there is a bisimulation $\mathcal{R}$ such that $\rho \mathcal{R} \rho'$.

Lemma \ref{bisim} states that, given game configurations, it is possible to obtain the composite semantics ($\compsemto$) from the semantic composition of the corresponding compatible configurations, and vice versa.

{\lemma{\label{bisim}}{Given game configurations $\rho \asymp \rho'$, it is the case that $(\rho \semcomp \rho') \sim (\rho \confcomp \rho')$.}}

\begin{proof}
We want to show that $\mathcal{R} = \{(\rho_1 \semcomp \rho_2, \rho_1 \confcomp \rho_2) \mid \rho_1 \asymp \rho_2\}$ is a bisimulation. Suppose $(\rho_1 \semcomp \rho_2, \rho_1 \confcomp \rho_2) \in \mathcal{R}$. We begin with case analysis on the transitions available to the semantic composite. If $(\rho_1 \semcomp \rho_2) \to' (\rho_1' \semcomp \rho_2')$, then $\rho_1' \asymp \rho_2'$. Now, by cases of the transitions, we prove that composite semantics can be obtained from the semantic composition.
\begin{enumerate}
\item If $(\rho_1 \semcomp \rho_2) \to' (\rho_1' \semcomp \rho_2')$ is an ($\textsc{Int}_L$) move, then we have internal moves in the execution of $\rho_1$ up to $\rho_1'$. Since the composite semantics is concrete and, by construction, equivalent to operational semantics when no methods of opposite polarity are called, we can see that $(\rho_1 \confcomp \rho_2) \compsemto (\rho_1' \confcomp \rho_2)$.

\item If $(\rho_1 \semcomp \rho_2) \to' (\rho_1' \semcomp \rho_2')$ is a ($\textsc{Call}$) move, then we have that $\rho_1 \xrightarrow{\mathtt{call}(m,v)}{\!\!}^\prime \rho_1'$ and $\rho_2 \xrightarrow{\mathtt{call}(m,v)}{\!\!}^\prime \rho_2'$. We thus have two cases: (1) $m$ is defined in $R_1$ and (2) it is in $R_2$. In case (1), we have the following semantics for $\rho_1$ and $\rho_2$ where the evaluation stacks are not equal:
\begin{align*}
&((m',E')::\eval_1,-,R_1,S_1,\pub_1,\abs_1,k_1,l_1)_o\\
&\quad \xrightarrow{\mathtt{call}(m,v)}{\!\!}^\prime((m,l_1 +1)::(m',E')::\eval_1,mv,R_1,S_1,\pub_1,\abs_1',k_1,-)_p\\
&((m',l_2)::\eval_2, E[mv], R_2, S_2, \pub_2, \abs_2, k_2, -)_p\\
&\quad \xrightarrow{\mathtt{call}(m,v)}{\!\!}^\prime((m,E)::(m',l_2)::\eval_2,-,R_2,S_2,\pub_2',\abs_2,k_2,l_0)_o
\end{align*}
We thus have:
\begin{align*}
&\rho_1 \confcomp \rho_2 = ((\eval_1 \confcomp \eval_2)[E'^1[\evaltag{E^2[m^2v]}{1,l_2}]],\vec{R},S_1 \cup S_2,\vec{k},\vec{l}) \\
&\rho_1' \confcomp \rho_2' = ((\eval_1 \confcomp \eval_2)[E'^1[\evaltag{E^2[\evaltag{m^1v}{2,l_1+1}]}{1,l_2}]],\\
&\qquad \qquad \quad \vec{R},S_1 \cup S_2,\vec{k},\vec{l}[l_2\mapsto 0]+_1 1)
\end{align*}
From the composite semantics evaluating $\rho_1 \confcomp \rho_2$ we have:
\begin{align*}
&((\eval_1 \confcomp \eval_2)[E'^1[\evaltag{E^2[m^2v]}{1,l_2}]],\vec{R},S_1 \cup S_2,\vec{k},\vec{l})\\
&\compsemto ((\eval_1 \confcomp \eval_2)[E'^1[\evaltag{E^2[\evaltag{m^1 \hat{v}}{2,l_1+1}]}{1,l_2}]],\\
&\qquad \qquad \vec{R},S_1 \cup S_2,\vec{k},\vec{l}[l_2\mapsto 0]+_{1}1)
\end{align*}
Since $v = \hat{v}$ by determinism of the operational semantics, we have that $(\rho_1 \confcomp \rho_2) \compsemto (\rho_1' \confcomp \rho_2')$. In addition, we can observe that the case for equal evaluation stacks is proven by substituting the initial stacks with equal ones, which results in an empty evaluation context. Similarly, the dual case (2), where $m$ is defined in $R_1$, is identical but with polarities swapped--i.e. shown by the polar complement of $(\rho_1 \confcomp \rho_2) \compsemto (\rho_1' \confcomp \rho_2')$.

\item If $(\rho_1 \semcomp \rho_2) \to' (\rho_1' \semcomp \rho_2')$ is a ($\textsc{Ret}$) move, then we have that $\rho_1 \xrightarrow{\mathtt{ret}(m,v)}{\!\!}^\prime \rho_1'$ and $\rho_2 \xrightarrow{\mathtt{ret}(m,v)}{\!\!}^\prime \rho_2'$. As with the \textsc{Call} case,  if $m \in dom(R_2)$ and stacks are not equal, we have:
\begin{align*}
&((m,E)::\eval_1, - , R_1,S_1, \pub_1, \abs_1, k_1, l_1)_o\\
&\quad \xrightarrow{\mathtt{ret}(m,v)}{\!\!}^\prime(\eval_1, E[v],R_1,S_1,\pub_1,\abs_1', k_1, -)_p\\
&((m,l_2)::\eval_2,v,R_2,S_2,\pub_2,\abs_2,k_2,-)_p\\
&\quad \xrightarrow{\mathtt{ret}(m,v)}{\!\!}^\prime(\eval_2, - , R_2,S_2, \pub_2', \abs_2, k_2, l_2)_o
\end{align*}
Here, we have two cases: $\eval_1 = \eval_2$, and otherwise. We start with the case where $\eval_1 \neq \eval_2$, since the opposite case is a simpler version of it. Again, we have the following composite configurations:
\begin{align*}
&\rho_1 \confcomp \rho_2 = ((\eval_1 \confcomp \eval_2)[E^1[\evaltag{v^2}{1,l_2}]],\vec{R},S_1 \cup S_2,\vec{k},\vec{l})\\
&\rho_1' \confcomp \rho_2' = ((\eval_1' \confcomp \eval_2')[E'^2[\evaltag{E^1[v^1]}{2,l_1'}]],\\
&\qquad \qquad \quad \vec{R},S_1 \cup S_2,\vec{k},l_1',l_2)
\end{align*}
where $\eval_1 = (m',l_1')::\eval_1'$ and $\eval_2 = (m',E')::\eval_2$.
	
Now, from the composite semantics, we have:
\begin{align*}
&((\eval_1 \confcomp \eval_2)[E^1[\evaltag{v^2}{1,l_2}]],\vec{R},S_1 \cup S_2,\vec{k},\vec{l})\\
&\compsemto ((\eval_1 \confcomp \eval_2)[E^1[\hat{v}^1]],\vec{R},S_1 \cup S_2,\vec{k},last((\eval_1 \confcomp \eval_2)[E^1[\bullet]]),l_2)\\
&~= ((\eval_1' \confcomp \eval_2')[E'^2[\evaltag{E^1[\hat{v}^1]}{2,l_1'}]],\vec{R},S_1 \cup S_2,\vec{k},l_1',l_2)
\end{align*}

We can observe that $last(E) = l_1'$ since $E$ comes directly from the evaluation stack and is, thus, untagged, and the top-most counter is $l_1'$ since 
\[(\eval_1' \confcomp \eval_2')[E'^2[\evaltag{E^1[\bullet]}{2,l_1'}]] = (\eval_1 \confcomp \eval_2)[E^1[\bullet]]\]

Finally, we have that $k_2 = k_2'$ when returning a value since, from Lemma~\ref{kbounds}, $k$ must always decrease back to its original value after evaluating a method call.

We thus have $(\rho_1 \confcomp \rho_2) \compsemto (\rho_1' \confcomp \rho_2')$. As previously, the case for empty stacks is a simpler version of this, while the dual case (2) is the polar complement of the configurations.
\end{enumerate}
	
Having shown that external composition produces composite semantics transitions, we continue with the other direction of the argument, which aims to show that the external composition can be produced from composite semantics transitions. We now derive the corresponding semantic compositions by case analysis on the composite semantics rules.
\begin{enumerate}
\item If we have an untagged transition, or one where the redex involves no names of opposite polarity being called, then we have an exact correspondence with internal moves, since the composite semantics are identical to the operational semantics on closed terms.

\item If the transition involves a method called from an opposite polarity, we have a transition of the form
\[(E[m^i v],\dots,\vec{l}) \compsemto (E[\evaltag{m^{3-i} v}{i,l_{3-i}+1}],\dots,\vec{l}[l_i \mapsto 0]+_{3-i}1)\]
which corresponds to evaluating the semantics on an initial configuration $\rho_1 \confcomp \rho_2$ with the following cases:
\begin{enumerate}
\item for an OP configuration, we have the following: 
\[\rho_1 = (\eval_1, -, R_1, S_1, k_1, l_1)_o\] 
\[\rho_2 = (\eval_2, E[m v], R_2, S_2, k_2, -)_p\]
where $\eval_1 = (m',E')::\eval_1'$ and $\eval_2 = (m',l_2)::\eval_2'$. Let us set $E[m^i v] = (\eval_1' \confcomp \eval_2')[E'^1[\evaltag{M^2}{1,l_2}]]$ and $M^2 = E''[m^i v]$, where $m \nin R_2$, $i=2$, and $E''$ is untagged. We therefore have:
\begin{align*}
&((\eval_1' \confcomp \eval_2')[E'^1[\evaltag{M^2}{1,l_2}]],\vec{R},S_1 \cup S_2,\vec{k},\vec{l})\\
&\compsemto (E[\evaltag{m^{1} v}{2,l_1+1}],\vec{R},S_1 \cup S_2,\vec{k},\vec{l}[l_2 \mapsto 0] +_1 1)
\end{align*}
We now want to show that semantically composing the configurations results in an equivalent transition $\rho_1 \semcomp \rho_2 \to' \rho_1' \semcomp \rho_2'$. Since this is a $\textsc{Call}$ move, we know that $\rho_1 \xrightarrow{\mathtt{call}(m,v)}{\!\!}^\prime\rho_1'$ and $\rho_2 \xrightarrow{\mathtt{call}(m,v)}{\!\!}^\prime\rho_2'$. Evaluating those transitions, we have that
\[\rho_1' = ((m,l_1+1)::\eval_1,m v, \dots, k_1, -)_o\]
\[\rho_2' = ((m,E'')::\eval_2, -, \dots, k_2, 0)_p\]
which, when syntactically composed, form the configuration 
\[((\eval_1 \confcomp \eval_2)[E''^2[\evaltag{(m v)^1}{2,l_1+1}]], \vec{R}, S_1 \cup S_2, \vec{k}, \vec{l}[l_2 \mapsto 0] +_1 1)\]
We can observe that the resulting configurations are equivalent since $E'' = E''^2$, which follows from $E''[m^i v] = M^2$. Additionally, since
\[(\eval_1' \confcomp \eval_2')[E'^1[\evaltag{E''^2[\bullet]}{1,l_2}]] = (\eval_1 \confcomp \eval_2)[E''^2[\evaltag{\bullet}{2,l_1+1}]]\]
it suffices to show $(m v)^1 = m^1 v$, particularly that $v = v^1$. Now, since the composite semantics ensures that $v$ will be tagged with $1$ when called from a method $m^1$, as it reduces to $M\{v/y\}^1$, we have that $v = v^1$, meaning that the transitions are equal.
			
\item for a PO configuration, the polar complement of case (a) suffices.
			
\item for an initial configuration OP, we have a simpler version of case (a) where the evaluation stacks are equal, resulting in an empty evaluation context $\eval_1' \confcomp \eval_2' = \bullet$.
\end{enumerate}
		
\item If the transition involves a tagged value and is of the form
\begin{align*}
&(E[\evaltag{v}{i,l}],\vec{R},S_1 \cup S_2,\vec{k},\vec{l}) \\
&\compsemto (E[v^{i}],\vec{R},S_1 \cup S_2,\vec{k},\vec{l}[l_{3-i}\mapsto l, l_i \mapsto last(E)])
\end{align*}
then we want to show an equivalence to a \textsc{Ret} move in the semantic composite. As with case (2), we start by defining this transition as the syntactic composite transition $(\rho_1 \confcomp \rho_2) \compsemto (\rho_1' \confcomp \rho_2')$. Then, by case analysis on $\rho_1 \confcomp \rho_2$:
\begin{enumerate}
\item for an OP configuration, we have the following:
\[\rho_1 = (\eval_1, -, R_1,S_1, k_1, l_1)_o\] 
\[\rho_2 = (\eval_2, v, R_2,S_2, k_2, -)_p\]
where $\eval_1 = (m,E')::\eval_1'$ and $\eval_2 = (m,l_2)::\eval_2'$. Let $E[v] = (\eval_1' \confcomp \eval_2')[E'^1[\evaltag{v^2}{1,l_2}]]$. We thus have:
\begin{align*}
&(E[\evaltag{v^2}{1,l_2}],\vec{R},S_1 \cup S_2,\vec{k},\vec{l})\compsemto(E[v^1],\vec{R},S_1 \cup S_2,\vec{k},last(E),l_2)
\end{align*}
We then show that semantic composition produces an equivalent transition $\rho_1 \semcomp \rho_2 \to' \rho_1' \semcomp \rho_2'$. Given we have a \textsc{Ret} move, we know that $\rho_1 \xrightarrow{\mathtt{ret}(m,v)}{\!\!}^\prime \rho_1'$ and $\rho_2 \xrightarrow{\mathtt{ret}(m,v)}{\!\!}^\prime \rho_2'$, such that:
\[\rho_1' = (\eval_1', E'[v], \dots, k_1, -)_p\]
\[\rho_2' = (\eval_2', -, \dots, k_2, l_2)_o\]
where $\eval_1' = (m',l_1')::\eval_1''$ and $\eval_2' = (m',E')::\eval_2''$. Internally composing these resulting configurations, we have:
\[((\eval_1'' \confcomp \eval_2'')[E''[\evaltag{E'^1[v^1]}{2,l_1'}]], \vec{R'}, S_1 \cup S_2, \vec{k}, l_1',l_2)\]
Since $(\eval_1'' \confcomp \eval_2'')[E''[\evaltag{\bullet}{2,l_1'}]] = (\eval_1' \confcomp \eval_2')[\bullet]$, we have that $(\eval_1' \confcomp \eval_2')[E'^1[v^1]]$, from which we have $(\eval_1'' \confcomp \eval_2'')[E''[\evaltag{E'^1[v^1]}{2,l_1'}]] = E[v^1]$, and that $last(E) = l_1'$ since $E_1'$ is untagged. Thus, the transition produces the composition.
			
\item for a PO configuration, we have the polar complement of (a) as previously.
			
\item for an initial OP configuration, we again have a simplification of (a), where the evaluation stacks are equal and the resulting evaluation context is empty.
\end{enumerate}		
\end{enumerate}
	
With this, we are done showing the equivalence of transitions. Lastly, we can observe that $\rho$ is final iff $\rho'$ is final since they are both leaf nodes generated by equivalent terminal rules. We therefore have $(\rho \semcomp \rho') \sim (\rho \confcomp \rho')$.
\end{proof}
	
\subsection{Syntactic Composition and Compositionality}

To prove compositionality of the modified trace semantics, we want to show that syntactic composition can be obtained from semantic counterpart and vice versa. We have bisimilarity between semantic and internal composition, we only need to show that internal composition is related to syntactic composition under some notion of equivalence. 

\paragraph*{\boldemph{Lemma~\ref{simsim}}}
For any library $L$ and compatible good client $C$, $\sem{L \comp C}$ fails if and only if there exist $(\tau_1,\rho_1) \in \sem{L}$ and $(\tau_2,\rho_2) \in \sem{C}$ such that $\tau_1=\tau_2$ and $\rho_1=(\eval,E[\assert(0)],\cdots)$.
\begin{proof}
We have a case for each direction.
\paragraph*{\bf $(1 \implies 2)$:}
\begin{enumerate}
	\item Consider $L\comp C$ that reaches $\chi$. 
	\item By inspection of the composite semantics, we have that $\sem{L} \confcomp \sem{C}$ reaches $\chi$.
	\item By bisimilarity (Lemma~\ref{bisim}) we have that $\sem{L} \semcomp \sem{C}$ reaches $\chi$.
	\item By definition of semantic composition, we know there are traces $\tau \in \sem{L}$ and $\tau^\bot \in \sem{C}$  such that $\sem{L} \xtwoheadrightarrow{\tau}' \chi$.
\end{enumerate}

\paragraph*{\bf $(2 \implies 1)$:}
\begin{enumerate}
	\item Consider traces $\tau \in \sem{L}$ and $\tau^\bot \in \sem{C}$  such that $\sem{L} \xtwoheadrightarrow{\tau}' \chi$.
	\item By definition of semantic composition we have that $\sem{L} \semcomp \sem{C}$ reaches $\chi$.
	\item By bisimilarity (Lemma~\ref{bisim}) we have that $\sem{L}\confcomp \sem{C}$ reaches $\chi$.
	\item By inspection of the composite semantics, we know $L\syncomp C$ reaches $\chi$. 
\end{enumerate}

\end{proof}

\newcommand\code[1]{\sharp(#1)}

\subsection{Definability}\label{app:def:full}

In this section we show that every trace $\tau$ in the semantics of a library $L$ has a corresponding good client that realises the same trace in its semantics. 

Let $L$ be a library with public names $\pub$ and abstract names $\abs$.
Given a trace  $\tau$  produced by $L$, with  $\pub'$ and $\abs'$  respectively the public and abstract names introduced in $\tau$, we set:
\begin{align*}
  {\cal N} &= \pub \cup \pub' \cup \abs \cup \abs'\\
  \Theta_v &= \{\theta \mid \exists  m\in{\cal N}.\ m:\theta'\land \theta \text{ a syntactic subtype of }\theta'\} \\
  \Theta_m &= \{\theta\in\Theta \mid \theta \text{ a method type}\}
\end{align*}
Note that the above sets are finite, since $\tau,\pub,\abs$ are finite. We assume a fixed enumeration of ${\cal N}=\{m_1,m_2,\cdots,m_n\}$. Moreover, for each type $\theta$, we let $\mathbf{defval}_\theta$ be a default value, and $\mathbf{diverge}_\theta$ a term that on evaluation diverges by infinite recursion.
We then construct a client $C_{\tau,\pub,\abs}$ as in Figure~\ref{fig:defin}.
\begin{figure}[t]
\begin{center}
  \begin{lstlisting}[escapeinside={(*}{*)},basicstyle=\small]
  global cnt := 0
  global meth := 0
  global ref(*$_i$*) := (*$m_i$*)          # (*for each $m_i\in\pub$*)
  global ref(*$_i$*) := defval      # (*for each $m_i\in\pub'$*)
  global val(*$_\theta$*) := defval      # (*for each $\theta\in\Theta_v$*)

  public (*$m_i$*) = lambdax.            # (*for each $m_i\in\abs$*)
      cnt++; meth:=i; val(*$_{\theta_1}$*):=x; oracle()

  (*$m_i$*) = lambdax.                   # (*for each $m_i\in\abs'$*)
      cnt++; meth:=i; val(*$_{\theta_1}$*):=x; oracle()

  oracle = lambda().
      match (!cnt) with    # (*number of P-moves played so far (max $|\tau|/2$)*)
      | i -> 
        # (*if $i>0$ and $i$-th P-move of $\tau$ is $\mathtt{cr}\, m_j(v)$, with $m_j:\theta_1\to\theta_2$, then*)
        # (*- if $\mathtt{cr}=\mathtt{ret}$ then $d = 0$ and $\theta=\theta_2$*)
        # (*- if $\mathtt{cr}=\mathtt{call}$ then $d = j$  and $\theta=\theta_1$*)
        # (*diverge if the last P-move played is different from $\mathtt{cr}\, m_j(v)$*)
        if not (!meth = d and !val(*$_\theta$*) (*$\overset{\land}{=}_\theta$*) v) then diverge
        else for (*$m_i$*) in (*$fresh$*)(!val(*$_\theta$*)) do ref(*$_i$*) := (*$m_i$*)
        
        # (*if $(i+1)$-th O-move of $\tau$ is $\mathtt{cr}'\, m_k(u)$, with $m_k:\theta_1\to\theta_2$, then*)
        # (*- if $\mathtt{cr}'=\mathtt{ret}$ then $c = 0$*)
        # (*- if $\mathtt{cr}'=\mathtt{call}$ then $c=k$*)
        if c then let x = (!ref(*$_k$*))u in   # (*call $m_k(u)$*)
                  cnt++; meth:=0; val(*$_{\theta_2}$*):=x; oracle(); !val(*$_{\theta_2}$*)
        else val(*$_{\theta_2}$*):=u                   # (*return $u$*)

  main = oracle()
\end{lstlisting}
\end{center}\caption{The client $C_{\tau,\pub,\abs}$.}\label{fig:defin}
\end{figure}

\noindent
The code is structured as follows.
\begin{enumerate}
\item We start off by defining global references:
\begin{itemize}
\item $cnt$ counts the number of $P$ (Library) moves played so far;
\item $meth$ stores an index that records the move made by P: if the move was a return then $meth$ stores 0; if it was call to $m_i$ then $meth$ stores $i$;
\item each $ref_i$ will store the method $m_i\in\pub\cup\pub'$, either since the beginning (if $m_i\in\pub$), or once P plays it (if $m_i\in\pub'$);
\item each $val_\theta$ will be used for storing the value played by P in their last move. \end{itemize}
In the latter case above, there is a light abuse of syntax as $\theta$ can be a product type, of which \holib\ does not have references. But we can in fact simulate references of arbitrary type by several \holib\ references.
\item For each $m_i : \theta_1 \to \theta_2 \in \abs$, we define a public method $m_i$ that simulates the behaviour of O whenever $m_i$ is called in $\tau$:
  \begin{itemize}
  \item it starts by increasing $cnt$, as a call to $m_i$ corresponds to a P-move being played;
    \item it continues by storing $i$ and $x$ in $meth$ and $val_{\theta_1}$ respectively;
  \item it calls the private method $oracle$, which is tasked with simulating the rest of $\tau$ and storing the value that $m_i$ will return in $val_{\theta_2}$;
  \item it returns the value in $val_{\theta_2}$.
  \end{itemize}
\item For each $m_i:\theta_1\to\theta_2\in\abs'$ we produce a method just like above, but keep it private (for the time being).
\item The method $oracle$ performs the bulk of the computations, by checking that the last move played by P was the expected one and selecting the next move to play (and playing it if is a call).
  \begin{itemize}
  \item The oracle is called after each P-move is played, so it starts with increasing $cnt$.
  \item It then performs a case analysis on the value of $cnt$, which  above we  denote collectively by assuming the value is $i$ -- this notation hides the fact that we have one case for each of the finitely many values of $i$.

    For each such $i$,
    the oracle first checks if the previous P-move (if there was one), was the expected one. If the move was a call, it checks whether the called method was the expected one (via an appropriate value of $d$), and also whether the value was the expected one. Value comparisons ($\overset{\land}{=}_\theta$) only compare the integer components of $\theta$, since we cannot compare method names.
      If this check is successful, the oracle extracts from $u$ any method names played fresh by P and stores them in the corresponding $ref_i$.

      Next, the oracle prepares the next move. If, for the given $i$, the next move is a call, then the oracle issues the call, stores the return value of that call, increases $cnt$  and recurs to itself -- when the issued call returns, it would be through a P-move. If, on the other hand, the next move is a return, the oracle simply stores the value to be returned in the respective $val$ reference -- this would allow to the respective $m_i$ to return that value.
    \end{itemize}
\item The $\mathbf{main}$ method simply calls the oracle.
\end{enumerate}

Let us begin with useful definitions. First, let us consider the game semantics for $\holib$ with all call counters removed since they do not affect computation. Let $L$ be a library with public names $\pub$ and abstract names $\abs$ that produces a trace $\tau$. Let $C_{\tau,\pub,\abs}$ be the client constructed from $\tau$, which we shall shorthand as $C_\tau$ assuming the correct name sets have been provided. Finally, let us annotate every move in $\tau$ with subscripts $O$ and $P$ for its polarity, starting from $O$ since libraries are always called first.

\begin{definition}[Client O-configurations]
Let library trace $\tau$ be of the form $\tau_{1} \tau_2$, where $\tau_1$ is the portion of $\tau$ that has been played so far. We define the set of opponent configurations $\mathtt{Conf}_{\tau_2}$ that play the remainder trace $\tau_2$ of trace $\tau$ to be
\[(\eval_{\tau_1}, R, S_{\tau_1}, \pub_{\tau_1}, \abs_{\tau_1}) \in \mathtt{Conf}_{\tau_2}\]
where
\begin{itemize}
\item $R$ is the initial repository obtained from client $C_{\tau}$;
\item $S_{\tau_1}$ has the same domain as the initial store $S$ obtained from client $C_\tau$ and defines values $\mathtt{cnt} \mapsto len(\tau_1) / 2$ and $\mathtt{ref}_i \mapsto m_i$ for all $m_i$ revealed in $\tau_1$;
\item $\pub_{\tau_1} = \abs \uplus \{ m_in \in \abs' \mid m_i \in \tau_1 \}$, for $\abs,\abs'$ as defined initially in $C_{\tau}$;
\item $\abs_{\tau_1} = \pub \uplus \{ m_in \in \pub' \mid m_i \in \tau_1 \}$, for $\pub,\pub'$ as defined initially in $C_{\tau}$;
\item and $\eval_{\tau_1} = f( {\lceil {\tau_1} \rceil} )$ where $\lceil \tau \rceil$ removes all closed calls in $\tau$ as defined in 
\[\lceil \tau \rceil = \begin{cases}
\lceil \tau' \tau''' \rceil & \text{if $\tau$ is of the form $\tau' \mathtt{call}(m,v) \tau'' \ret(m,v) \tau'''$}\\
\tau &\text{otherwise}
\end{cases}\]
and 
\begin{align*}
f&(\tau' \mathtt{call}(m,v)_o) = \\
&\quad(\letm{x = \bullet}{\mathtt{cnt}\!+\!\!+; \mathtt{meth} := 0; \mathtt{val}_{\theta_2} := x; \mathtt{oracle}()};!\mathtt{val}_{\theta_2},m) :: f(\tau')\\
f&(\mathtt{call}(m,v)_o) =\\ 
&\quad(\letm{x = \bullet}{\mathtt{cnt}\!+\!\!+; \mathtt{meth} := 0; \mathtt{val}_{\theta_2} := x; \mathtt{oracle}();!\mathtt{val}_{\theta_2}},m)::[]\\
f&(\tau' \mathtt{call}(m,v)_p) =  m :: f(\tau')\\
f&(\mathtt{call}(m,v)_p) = m::[]
\end{align*}
\end{itemize}
\end{definition}

\begin{lemma}\label{lem:clientconfigs}
Let library trace $\tau_L$ be of the form $\tau_1 \tau_2$, such that $\tau_1$ is a prefix of $\tau_L$. For all configurations $\mathbb{C}_{\tau_2} \in \mathtt{Conf}_{\tau_2}$, $\mathbb{C}_{\tau_2}$ produces $\tau_2$.
\end{lemma}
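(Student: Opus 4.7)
The plan is to proceed by induction on the length of $\tau_2$. The base case $\tau_2 = \varepsilon$ is immediate, since any configuration vacuously produces the empty trace. For the inductive step, write $\tau_2 = \chi \tau_2'$. Since $\mathbb{C}_{\tau_2}$ is an opponent configuration and moves in $\tau$ strictly alternate starting from an O-move, $|\tau_1|$ must be odd, so $\chi$ is a $_p$-move from the library's perspective, which is an opponent move for the client. I would then case-split on whether $\chi$ is a call or a return, applying (OQ) or (OA) respectively.

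In the (OQ) case $\chi = \mathtt{call}(m,v)_p$, the method $m$ lies in $\pub_{\tau_1}$: either it is originally in $\abs$ (hence defined as a public method $m_i$ in $C_\tau$), or it was disclosed by some $_p$-move in $\tau_1$ and, by the update performed in the oracle at that step, its name is now held in $\mathtt{ref}_i$. Applying (OQ) and performing internal reductions, the body of $m_i$ increments $\mathtt{cnt}$, writes the tag $i$ into $\mathtt{meth}$ and the argument into $\mathtt{val}_{\theta_1}$, and calls $\mathtt{oracle}()$. The oracle's branch for the new $\mathtt{cnt}$ value verifies these stored values against those expected from $\chi$ (they match by the construction of $C_\tau$), absorbs any fresh method names appearing in $v$ into the appropriate $\mathtt{ref}_k$, and then plays the next O-move of $\tau$---namely the first move of $\tau_2'$, if one exists---either via a P-call through $\mathtt{ref}_k u$ (client PQ) or by storing $u$ in $\mathtt{val}_{\theta_2}$ and letting the body of $m_i$ return $!\mathtt{val}_{\theta_2}$ (client PA). The (OA) case $\chi = \mathtt{ret}(m,v)_p$ is symmetric: by the clause of $f$ for $\tau' \mathtt{call}(m,v)_o$, the top frame of $\eval_{\tau_1}$ is exactly the $\letm{x{=}\bullet}{\mathtt{cnt}\texttt{++};\,\mathtt{meth}:=0;\,\mathtt{val}_{\theta_2}:=x;\,\mathtt{oracle}();\,!\mathtt{val}_{\theta_2}}$ context that was planted when the client earlier made the matching P-call; (OA) plugs $v$ into this context, the oracle is invoked with $\mathtt{meth}=0$ and $\mathtt{val}_{\theta_2}=v$, and again plays the next O-move of $\tau$.

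In either case, the configuration reached after $\chi$ and the ensuing internal reductions (including the next P-move played by the oracle, say $\chi'$) is of the form required by the definition of $\mathtt{Conf}_{\tau_2''}$, where $\tau_2 = \chi\chi'\tau_2''$, so the induction hypothesis applied to the strictly shorter $\tau_2''$ completes the argument. If $\tau_2' = \varepsilon$, the oracle finds no further case matching the new $\mathtt{cnt}$ and terminates without emitting a move, which suffices.

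The main obstacle is the bookkeeping: one must verify at each step that the updated $\eval$, $S$, $\pub_{\tau_1\chi\chi'}$ and $\abs_{\tau_1\chi\chi'}$ coincide with those demanded by $\mathtt{Conf}_{\tau_2''}$. In particular, handling the fresh method names played by P---which the oracle binds into the $\mathtt{ref}_k$'s via $\mathit{fresh}(!\mathtt{val}_\theta)$---must be shown to keep the client's public name set in step with the library's abstract side, and the well-bracketed alternation between the two shapes of stack frames produced by $f$ must be shown to be preserved by every (OQ)/(PA) and (PQ)/(OA) pair fired during an oracle call.
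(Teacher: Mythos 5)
Your proposal is correct and follows essentially the same route as the paper: induction on the length of the remaining suffix, firing the pending client-$O$ move via (OQ)/(OA), letting the oracle's internal reductions emit the following $P$-move, and closing the loop by showing the resulting configuration lies in $\mathtt{Conf}$ of the two-move-shorter suffix (the paper merely packages this as four explicit two-move inductive cases with single-move base cases). One cosmetic slip worth fixing: a method in $\abs'$ is disclosed by a \emph{client} move (an $_o$-move in the library's annotation) and is already present in the client's repository as a private definition, rather than being held in a $\mathtt{ref}_i$ register --- those registers store library-owned names from $\pub\cup\pub'$ --- but this does not affect the validity of the argument.
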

\begin{proof}
Let $\tau_L$ be a library trace of the form $\tau_p \tau$. We prove that $\mathbb{C}_{\tau}$ produces $\tau$ for all $\mathbb{C}_{\tau} \in \mathtt{Conf}_\tau$ by induction on the length of $\tau$.
\paragraph*{\boldemph{Base Cases:}}
\begin{itemize}
\item if $\tau = \mathtt{call}(m,v)$, then we know $\mathbb{C}_\tau \to (m::\eval_{\tau_p}, m v, \dots)_p$ produces a valid OQ move since $m$ must have been revealed as an initial public name or in $\tau_p$ for it to appear as a call at this point in the trace.
\item if $\tau = \mathtt{ret}(m,v)$, then we know $\mathbb{C}_\tau \to (\eval', v, \dots)_p$, where $\eval_{\tau_p} = \mathtt{call}(m,v')::\eval'$, produces a valid OA move since $m$ must appear at the top of the evaluation stack for a return to appear at this point in the trace.
\end{itemize}
We thus have base cases for \emph{odd length suffixes}.
\paragraph*{\boldemph{Inductive Cases:}}
\begin{itemize}
\item if $\tau = \mathtt{call}(m,v) \mathtt{call}(m',v') \tau'$, then we have the OQ move 
\[\mathbb{C}_\tau \to (m::\eval_{\tau_p}, m v, \dots)_p\twoheadrightarrow (m::\eval_{\tau_p},\mathtt{oracle}();!\mathtt{val}_{\theta_2}, \dots)_p \to (\dots, E[m' v'],\dots)_p\] 
where $E$ is $(E');!\mathtt{val}_{\theta_2}$ and $E'$ is defined from line 26 to line 28 in the client code, which correctly updates the store. So far, $\mathbb{C}_\tau$ produces the same trace up to the next move. We then have the PQ move
\[(m::\eval_{\tau_p},E[m' v'],\dots)_p \to ((E,m')::m::\eval_{\tau_p},\dots)_o\]
which produces the next valid move. At this point, we can observe that $((E,m')::m::\eval_{\tau_p},\dots)_o \in \mathtt{Conf}_\tau'$, so we know $\tau'$ is produced by the inductive hypothesis. Thus, $\tau$ is produced.

\item if $\tau = \mathtt{call}(m,v) \mathtt{ret}(m',v') \tau'$, since we have a return move as the second move this time, we have the OQ move
\[\mathbb{C}_\tau \to (m::\eval_{\tau_p}, m v, \dots)_p\twoheadrightarrow (m::\eval_{\tau_p},\mathtt{val}_{\theta_2} := v';!\mathtt{val}_{\theta_2}, \dots)_p \to (\dots,v',\dots)_p\]
which produces the first move. We then have the PA move
\[(\eval_{\tau_p},v',\dots)_p \to (\eval',\dots)_o\]
which produces the second move since $\eval_{\tau_p}$ must be of the form $m'::\eval'$. As before, since the store has been correctly updated by internal moves, $(\eval',\dots)_o \in \mathtt{Conf}_\tau'$, so we know $\tau'$ is produced by the inductive hypothesis. Thus, $\tau$ is produced.

\item if $\tau = \mathtt{ret}(m,v) \mathtt{call}(m',v') \tau'$, then it must be the case that $\eval_\tau = (\letm{x = \bullet}{\mathtt{cnt}\!+\!\!+; \mathtt{meth} := 0; \mathtt{val}_{\theta_2} := x; \mathtt{oracle}()},m)::\eval'$. We have the OA move
\[\mathbb{C}_\tau \to (\eval', \letm{x = v}{\dots}, \dots)_p\twoheadrightarrow (\eval',\mathtt{oracle}();!\mathtt{val}_{\theta_2}, \dots)_p \to (\eval',E[m' v'],\dots)_p\]
where $E$ is the context for $\mathtt{oracle}$, which produces the first move. From here we have OQ move
\[(\eval',E[m' v'],\dots)_p \to ((E,m')::\eval',\dots)_o\]
which produces the second move. Since the store is correctly updated internally, we know $((E,m')::\eval',\dots)_o \in \mathtt{Config}_\tau'$, so $\mathbb{C}_{\tau'}$ produces $\tau'$ by the inductive hypothesis. Thus, $\tau$ is produced. 

\item if $\tau = \mathtt{ret}(m,v) \mathtt{ret}(m',v') \tau'$, we have the OA move
\[\mathbb{C}_\tau \to (\eval', \letm{x = v}{\dots}, \dots)_p\twoheadrightarrow (\eval',!\mathtt{val}_{\theta_2}, \dots)_p \to (\eval',v',\dots)_p\]
which produces the first move. From here, we have PA move
\[(\eval',v',\dots)_p \to (\eval'',\dots)\]
since $\eval'$ must have been of the form $m'::\eval''$ for a return to $m'$ to appear on the trace.
Since the internal moves correctly update the store, we know that $(\eval'',\dots) \in \mathtt{Config}_\tau'$, so $\mathbb{C}_{\tau'}$ produces $\tau'$ by the inductive hypothesis. Thus $\tau$ is produced.

\end{itemize}
If $\tau'$ is empty, these serve as base cases for \emph{even length suffixes}. With all cases proven (odd and even base cases, and the inductive cases), we have that $\tau$ is always possible to produce with any $\mathbb{C}_\tau \in \mathtt{Conf}_\tau$.
\end{proof}

{\paragraph*{\boldemph{Theorem \ref{thm:definability} (Definability)}}
	Let $L$ be a library and $(\tau,\rho)\in\sem{L}$. There is a good client compatible with $L$ such that $(\tau,\rho')\in\sem{C}$ for some $\rho'$.}\\

\begin{proof}
Given a library $L$ and trace produced $\tau$, we construct client $C_{\tau}$. Since $C_\tau$ has a main method, we begin from a proponent configuration $(\mathtt{oracle}(),[],R,\abs,\pub)_p$. Since the library cannot return without being called first, we know the next move is a call, so $\tau$ is of the form $\mathtt{call}(m,v) \tau'$. Thus, we have the following transitions
\[([],\mathtt{oracle}(),R,\abs,\pub)_p \twoheadrightarrow ([],E[m v],R,\abs,\pub)_p \to ((E,m)::[],R,\abs',\pub)_o \]
From this point, if $\tau'$ is empty, we have shown that $\tau$ can be produced by $C_\tau$. If $\tau'$ is not empty, we have a trace $\tau$ with suffix $\tau'$ and prefix $\mathtt{call}(m,v)$. By Lemma~\ref{lem:clientconfigs}, we know that $\tau'$ can be produced by any configuration in $\mathtt{Config}_{\tau'}$. Since $((E,m)::[],R,\abs',\pub)_o \in \mathtt{Config}_{\tau'}$, we know that $((E,m)::[],R,\abs',\pub)_o$ is able to produce $\tau'$. We thus have that $C_\tau$ can produce $\tau$.
\end{proof}

\subsection{Extensional Equivalence of O-Refreshing Moves}

\paragraph*{\boldemph{Lemma \ref{lem:Orefresh} (O-Refreshing)}}
	Given a concrete configuration $\rho$, the following are equivalent:
	\begin{enumerate}
\item $\rho$ fails using any kinds of transitions
\item $\rho$ fails using only $O$-refreshing transitions
	\end{enumerate}

\begin{proof}
Let us consider two games starting from $\rho$: (A) is allowed to play any kind of moves, while (B) is only allowed to play $O$-refreshing moves. We thus want to show that (A) and (B) are both allowed to reach an assertion violation.

\paragraph*{\boldemph{(2)$\implies$(1):}} We know that (A) is allowed to play all the moves that (B) can play since (A) can play any moves, including $O$-refreshing moves. Thus, this direction holds.

\paragraph*{\boldemph{(1)$\implies$(2):}} Since we start from the same $\rho$ in (A) and (B), by Lemma~\ref{lem:phantom}, we know $\rho$ fails in (B) if it fails in (A). Given we know (A) fails by assumption, this direction holds.
\end{proof}

The above result requires the following lemma, which in turn requires some definitions. First, we call a name \emph{phantom} if it is an opponent name created by refreshing a proponent name through an $O$-refreshing transition that has some equivalent original name in the non-refreshing semantics. We assume a method to identify phantom names by keeping track of them with regard to the non-refreshing semantics as computation progresses. We thus say that a configuration $\rho$ that is reached through $O$-refreshing transitions has a corresponding phantom names dictionary $\Phi$ that maps all phantom names $m$ in $\rho$ to their proponent-owned original names $\hat{m}$ in $\Phi(\rho)$. Let us also define a set $\abs_\Phi \subseteq \abs$ for all the phantom names in $\abs$.

\begin{lemma}\label{lem:phantom}
Given a configuration $\rho$ with corresponding phantom names $\Phi$, it is the case that $\rho$ fails through $O$-refreshing transitions if $\Phi(\rho)$ fails.
\end{lemma}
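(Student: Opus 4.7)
The plan is to proceed by induction on the length $n$ of a failing transition sequence $\Phi(\rho) = \rho_0 \to \rho_1 \to \cdots \to \rho_n$ in the concrete semantics, where $\rho_n$ contains a term of the form $E[\assert(0)]$. The base case $n=0$ is immediate: since $\rho$ and $\Phi(\rho)$ differ only by the substitution of phantom method names for their originals, and the failing redex $\assert(0)$ involves only the integer $0$, the failure is present in $\rho$ as well.

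For the inductive step, I simulate the first transition $\Phi(\rho) \to \rho_1$ by a (possibly empty, or longer than one) sequence of $O$-refreshing transitions from $\rho$ reaching some $\tilde\rho$, where $\tilde\rho$ admits an extended phantom dictionary $\Phi'$ relating it to $\rho_1$; the inductive hypothesis then yields an $O$-refreshing failing sequence from $\tilde\rho$. The simulation proceeds by case analysis on the first transition. Internal reductions and $P$-moves (PQ, PA) do not depend on which names O has refreshed and are reproduced verbatim by $\rho$. For OQ and OA transitions whose carried value $v$ contains only fresh names, $\rho$ performs the same $O$-refreshing move. For such transitions where $v$ contains a reused name $m$, the $O$-refreshing move instead plays a fresh $m'$: if $m$ was an old O-name then $m'$ is merely an alpha-renaming and is semantically indistinguishable, whereas if $m$ was a disclosed P-name (i.e.\ $m \in \dom(R)$) then $m'$ is recorded in the extended dictionary as a phantom of $m$.

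The main obstacle lies in this last sub-case, where P subsequently uses the phantom $m'$. In the original trace, P's next use of $m$ takes the form of an internal reduction $E[m v] \to E[\evalbox{M\{v/x\}}]$ with $R(m) = \lambda x.M$, but in the refreshing semantics the fact that $m' \in \abs$ forces $\rho$ to emit a PQ move instead. I resolve this by inserting a PQ--OQ detour: $\rho$ performs $\mathtt{call}(m', v)$, then opponent replies with $\mathtt{call}(m, v)$, which is legal since the original $m$ remains in $\pub$, after which the internal reduction proceeds exactly as in the original trace. This detour adds a pair of matching frames to the evaluation stack and leaves the store and the computed term unchanged up to $\Phi'$; a symmetric PA--OA detour later unwinds these frames at the matching return. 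Formalising this requires a simulation invariant tolerant of such balanced virtual frames on the refreshing side, but because those frames preserve every other component up to $\Phi'$, the eventual $\assert(0)$ of the original trace is still triggered in the refreshing trace, completing the inductive step.
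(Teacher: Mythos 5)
Your proposal matches the paper's own argument: the paper likewise inducts over the failing run of $\Phi(\rho)$, reproduces internal and $P$-moves verbatim, refreshes reused $O$-names via the phantom dictionary, and handles a $P$-use of a phantom of a disclosed $P$-name by exactly the PQ--OQ (and dually PA--OA) detour you describe, extending $\Phi$ across the indirect call. The only difference is presentational\,---\,you are slightly more explicit that the simulation invariant must tolerate the balanced extra stack frames introduced by the detours, a point the paper leaves implicit.
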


\begin{proof}
Let (A) be a standard semantics where any moves are allowed. Let (B) be a semantics where only $O$-refreshing transitions are allowed. Suppose (B) starts from a configuration $\rho$ and has phantom names $\Phi$. We show this by induction on the number of steps to reach $\rho$. Let us consider proponent moves first, so $\rho = (\eval,M,R,S,\pub,\abs)_p$. Suppose $\Phi(\rho) \twoheadrightarrow{\tau} (\dots,\assert(0),\dots)$ in (A), by case analysis on $M$, we have the following.

\begin{enumerate}
\item $M$ is not of the form $E[mv]$ or is of the form $E[mv]$ where $m\in\pub$:

Let $\Phi(\rho) \to \hat\rho'$ via (A) semantics. Since $\rho$ is a proponent configuration, and the language features no name comparison, we know that the semantics are not affected by opponent names. Thus, we know $\hat\rho' = \Phi(\rho')$, so $\rho \to \rho'$ via (B). By the inductive hypothesis on $\hat\rho'$ and $\rho'$, we know (A) and (B) both fail.

\item $M$ is of the form $E[mv]$ and $m \in (\abs \setminus \abs_\Phi)$ ($m$ is not a phantom name):

Let $\Phi(\rho) \xrightarrow{\mathtt{call}(m,\hat v)} \hat\rho'$ in (A). It must be the case $\hat\rho' \xrightarrow{cr(\hat m',\hat v')} \hat \rho''$ for some call or return $cr$, since $\hat\rho'$ cannot fail without passing control to the proponent.

With (B), we know $\rho \xrightarrow{\mathtt{call}(m,v)} \rho' \xrightarrow{cr(m',v')} \rho''$. Extending $\Phi$, we get $\Phi' = \Phi[m_i' \mapsto \hat m_i']$ for every $m_i',\hat m_i \in v',\hat v'$. Thus, we have $\Phi'(\rho'') = \hat\rho''$. By the inductive hypothesis on $\rho''$, $\hat\rho''$ and $\Phi'$, we know (A) and (B) fail.

\item $M$ is of the form $E[mv]$ where $m \in \abs_\Phi$ ($m$ is a phantom name):

Let $\Phi(m) = \hat m$. We have two cases on $\hat m$:
\begin{enumerate}
\item If $\hat m \in \abs$, then we have the same situation as before.
\item If $\hat m \in \pub$, then we know $\hat\rho \to (\dots,\hat E[(R(\hat m)) \hat v],\dots)$ in (A). In (B), we have $\rho \xrightarrow{\mathtt{call}(m,v)} \rho'$. Since $\hat m$ must have been revealed to the opponent at some point in order for it to have been refreshed by (B), we have $\rho' \xrightarrow{\mathtt{call}(m,v)} (\dots,E[R(\hat m) v'],\dots)$. Extending $\Phi$ to account for the indirect call of $\hat m$, we have $\Phi' = \Phi[m_i \mapsto \hat m_i]$ for every $m_i \in v'$ and $\hat m_i \in \Phi(v)$. Thus, we have $\Phi'(\dots,E[R(\hat m) v'],\dots) = (\dots,\hat E[(R(\hat m)) \hat v],\dots)$, so by the inductive hypothesis on them, we know (B) fails.
\end{enumerate}
\end{enumerate}

For the opponent moves, the cases are captured for every move $\hat\rho \xrightarrow{cr(m,\hat v)} \hat\rho'$ in (A) and every move $\rho \xrightarrow{cr(m,v)} \rho'$ in (B) by extending $\Phi$ to be $\Phi' = \Phi[m_i \mapsto \hat m_i]$ for every name $m_i \in v$ and $\hat m_i \in \hat v$ introduced in the move. With this, by the inductive hypothesis on $\rho'$, $\hat \rho'$ and $\Phi'$, we know (B) fails in all the opponent cases. With this, we know (B) fails if (A) fails under $\Phi$.

\end{proof}


}
\shorten{}{\section{Soundness of Symbolic Games}\label{apx:sound}
In this section we look into more detail into soundness of our symbolic semantics. 
\\\\
{\textbf{Lemma~\ref{sound_bisim}}~~{\em
Let $\rho,\rho'$ be a concrete and symbolic configuration respectively, and let $\cal M$ be a model such that $\rho=_{\cal M}\rho'$. Then, $\rho\sim_{\cal M}\rho'$.
\begin{proof}
We want show that $\mathcal{R} = \{ (\rho,\mathcal{M},\rho_s) \mid \rho =_{\mathcal{M}} \rho_s \}$ is a bisimulation. First, we show that if $\rho \to \rho'$, being $O$-refreshing, then $\rho_s \to_s \rho_s'$ such that $(\rho',\mathcal{M}',\rho_s')$ is in $\mathcal{R}$ for some $\mathcal{M}' \supseteq \mathcal{M}$. By cases on the transition $\rho \to \rho'$:
\begin{enumerate}
\item If $\rho \to \rho'$ is one of the return moves, then we have the following possible transitions:
\begin{enumerate}
\item If $(\eval,E[\assert(0)],R,S,\pub,\abs,k)_p \not\to$, then we have the corresponding symbolic final configuration:
\[(\eval,E'[\assert(0)],R,\pub,\abs,\sigma,pc,k)_p\]
From the assumptions, we know that $\mathcal{M}\vDash pc \land \sigma^\circ$. It is also the case that $E'\{\mathcal{M}\}$ is equivalent to $E$, and $\rho'$ and $\rho_s'$ are equivalent terminal configurations.
\item If $(\emptyset,v,R,S,\pub,\abs,k)_p \not\to$, the proof is similar to (a).
\end{enumerate}
\item If $\rho \to \rho'$ is an (\textsc{Int}) move, we have that $\rho_s \to_s \rho_s'$ such that $\rho' \sim \rho_s'$ by soundness of the symbolic execution (Lemma~\ref{sound_symex}). 
\item If $\rho \to \rho'$ is a (\textsc{Pq}) move, then we have the following transition
\[(\eval,E[mv],R,S,\pub,\abs,k)_p \xrightarrow{\mathtt{call}(m,v)} ((m,E)::\eval,l_0,R',S,\pub',\abs,k)_o\]
with its corresponding symbolic equivalent
\[(\eval',E'[mv'],\dots,\sigma,pc,k)_p \xrightarrow{\mathtt{call}(m,v')}_s ((m,E')::\eval',l_0,\dots,\sigma,pc,k)_o\]
From the assumptions, we know $\mathcal{M}(v') = v$. In addition, since $E'[mv'] = E[mv]$ under $\mathcal{M}$, we have that $(m,E')::\eval' = (m,E)::\eval$, and similarly for other components, so $\rho' =_{\mathcal{M}} \rho_s'$, meaning $(\rho',\mathcal{M},\rho_s') \in \mathcal{R}$.
\item If $\rho \to \rho'$ is a (\textsc{Pa}) move, then we have the following transition
\[((m,l)::\eval,v,R,S,\pub,\abs,k)_p \xrightarrow{\mathtt{ret}(m,v)} (\eval,l,R',S,\pub',\abs,k)_o\]
with its corresponding symbolic equivalent
\[((m,l)::\eval',v',\dots,\sigma,pc,k)_p \xrightarrow{\mathtt{ret}(m,v')}_s (\eval',l,\dots,\sigma,pc,k)_o\]
From the assumptions, we know $\mathcal{M}(v') = v$. Since the original stacks are equivalent under $\mathcal{M}$, we have that $\eval =_{\mathcal{M}} \eval'$, and similarly for other components, so $\rho' =_{\mathcal{M}} \rho_s'$, meaning $(\rho',\mathcal{M},\rho_s') \in \mathcal{R}$. 
\item If $\rho \to \rho'$ is an (\textsc{Oq}) move, $O$-refreshing, then we have the following transition
\[(\eval,l,R,S,\pub,\abs,k)_o \xrightarrow{\mathtt{call}(m,v)} ((m,l+1)::\eval,mv,R,S,\pub,\abs',k)_p\]
with its corresponding symbolic equivalent
\[(\eval',l,\dots,\sigma,pc,k)_o \xrightarrow{\mathtt{call}(m,v')}_s ((m,l+1)::\eval',mv',\dots,\sigma,pc,k)_p\]
Let us choose $\mathcal{M}' = \mathcal{M}[v' \mapsto v]$. Since the original stacks are equivalent under $\mathcal{M}$, we have that $((m,l+1)::\eval) =_{\mathcal{M}} ((m,l+1)::\eval')$, and similarly for other components, so $\rho' =_{\mathcal{M}'} \rho_s'$, meaning $(\rho',\mathcal{M}',\rho_s') \in \mathcal{R}$. 
\item If $\rho \to \rho'$ is an (\textsc{Oa}) move, $O$-refreshing, then we have the following transition
\[((m,E)::\eval,l,R,S,\pub,\abs,k)_o \xrightarrow{\mathtt{ret}(m,v)} (\eval,E[v],R,S,\pub,\abs',k)_p\]
with its corresponding symbolic equivalent
\[((m,E')::\eval',l,\dots,\sigma,pc,k)_o \xrightarrow{\mathtt{ret}(m,v')}_s (\eval',E'[v'],\dots,\sigma,pc,k)_p\]
Let us choose $\mathcal{M}' = \mathcal{M}[v' \mapsto v]$. Since the original stacks are equivalent under $\mathcal{M}$, we have that $\eval =_{\mathcal{M}} \eval$. Additionally, since $\mathcal{M}'$ extends $\mathcal{M}$, we know that $E[v]=E'[v']$ under $\mathcal{M}'$, and similarly for the remaining components, so $\rho' =_{\mathcal{M}'} \rho_s'$, meaning $(\rho',\mathcal{M}',\rho_s') \in \mathcal{R}$.
\end{enumerate}
The opposite direction is treated with similarly.
\end{proof}}

{\lemma[Soundness of symbolic execution]{\label{sound_symex}}{
For any concrete configuration $\eta = (M,R,S,k)$ and symbolic configuration $\eta' = (M',R',\sigma,pc,k)$, given an assignment $\mathcal{M} \vDash pc \land \sigma^\circ$ such that $M =_\mathcal{M} M'$, it is the case that $\eta \sim \eta'$.
\begin{proof}
Let $\mathcal{R} = \{(\eta,\mathcal{M},\eta_s) \mid \eta =_\mathcal{M} \eta_s \}$ for any concrete configuration $\eta$ and symbolic configuration $\eta_s$. We want to show that $\mathcal{R}$ is a bisimulation. We now show that $\eta_s \to \eta_s'$ if $\eta \to \eta'$. By cases on $\eta \to \eta'$:
\begin{enumerate}
\item If we have a terminal rule, then we have the following cases.
\begin{enumerate}
\item for $(E[\assert(0)],R,S,k) \not\to$ we have the equivalent final configuration
\[(E'[\assert(0)],R',\sigma,pc,k)\]
Since $\eta =_\mathcal{M} \eta'$, and $\eta' =_\mathcal{M} \eta_s'$ since they are equivalent terminal configurations, it is the case that $(\eta',\mathcal{M},\eta_s')\in \mathcal{R}$.
\item for $(v,R,S,k) \not\to$ we have a similar proof to (a).
\end{enumerate}
\item If $(E[\assert(i)],R,S,k) \to (E[()],R,S,k)$ where $(i \neq 0)$, then we have the equivalent symbolic transition
\[(E'[\assert(i)],R',\sigma,pc,k) \to (E'[()],R',\sigma,pc,k)\]
By assumption, we know $E =_\mathcal{M} E'$ and $R =_\mathcal{M} R'$, and similarly for other components, so $\eta' =_\mathcal{M} \eta_s'$. As such, we know $(\eta',\mathcal{M},\eta_s')\in \mathcal{R}$. 
\item If $(E[!r],R,S,k) \to (E[S(r)],R,S,k)$, then we have the equivalent symbolic transition
\[(E'[!r],R',\sigma,pc,k) \to (E'[\sigma(r)],R',\sigma,pc,k)\]
Since $\eta =_\mathcal{M} \eta_s$, we know that $S =_\mathcal{M} \sigma$, meaning that $\sigma(r)\{\mathcal{M}\} = S(r)$. Thus, $(\eta',\mathcal{M},\eta_s') \in \mathcal{R}$. 
\item If $(E[r:=v],R,S,k) \to (E[()],R,S[r\mapsto v],k)$, then we have the equivalent symbolic transition
\[(E'[r:=v'],R',\sigma,pc,k) \to (E'[()],R',\sigma[r \mapsto \sigma(v')],pc,k)\]
Since $\eta =_\mathcal{M} \eta_s$, we know that $S =_\mathcal{M} \sigma$ and $v' =_\mathcal{M} v$, meaning that $\sigma[r\mapsto v']\{\mathcal{M}\} = S[r \mapsto v]$. Thus, $(\eta',\mathcal{M},\eta_s') \in \mathcal{R}$. 
\item If $(E[\pi_j\pair{v_1,v_2}],R,S,k) \to (E[v_j],R,S,k)$, then we have the equivalent symbolic transition
\[(E'[\pi_j\pair{v_1' ,v_2'}],R',\sigma,pc,k) \to (E'[v_j'],R',\sigma,pc,k)\]
Since $\eta =_\mathcal{M} \eta_s$, we know that $\pair{v_1,v_2} =_\mathcal{M} \pair{v_1',v_2'}$, so $v_j'\{\mathcal{M}\} = v_j$. Thus, $(\eta',\mathcal{M},\eta_s') \in \mathcal{R}$. 
\item If $(E[i_1 \oplus i_2],R,S,k) \to (E[i],R,S,k)$ where $i = i_1 \oplus i_2$, prove as above.
\item If $(E[\lambda x.M],R,S,k) \to (E[m],R[m \mapsto \lambda x.M],S,k)$, then we have the equivalent symbolic transition
\[(E'[\lambda x.M'],R',\sigma,pc,k) \to (E'[m],R'[m \mapsto \lambda x.M'],\sigma,pc,k)\]
Since $\eta =_\mathcal{M} \eta_s$, we know that $E[m] =_\mathcal{M} E[m']$, so $v_j'\{\mathcal{M}\} = v_j$. Additionally, we know $M = M'\{\mathcal{M}\}$, so $R'[m \mapsto \lambda x.M'] =_\mathcal{M} R[m \mapsto \lambda x.M]$. Thus, $(\eta',\mathcal{M},\eta_s') \in \mathcal{R}$. 
\item If $(E[\ifm{0}{M_1}{M_0}],R,S,k) \to (E[M_0],R,S,k)$, then we have the equivalent symbolic transition
\[(E'[\ifm{0}{M_1'}{M_0'}],R',\sigma,pc,k) \to (E'[M_0'],R',\sigma,pc,k)\]
Since $\eta =_\mathcal{M} \eta_s$, we know that $E[M_0] =_\mathcal{M} E[M_0']$. Thus, $(\eta',\mathcal{M},\eta_s') \in \mathcal{R}$. 
\item If $(E[\ifm{i}{M_1}{M_0}],R,S,k) \to (E[M_1],R,S,k)$ where $i \neq 0$, prove as above.
\item If $(E[\letm{x=v}{M}],R,S,k) \to (E[M\{v/x\}],R,S,k)$, then we have the equivalent symbolic transition
\[(E'[\letm{x=v'}{M'}],R',\sigma,pc,k) \to (E'[M'\{v'/x\}],R',\sigma,pc,k)\]
Since $\eta =_\mathcal{M} \eta_s$, we know that $E[M] =_\mathcal{M} E[M']$ and $v'\{\mathcal{M}\} = v$, so $E[M\{v/x\}] =_\mathcal{M} E[M'\{v'/x\}]$. Thus, $(\eta',\mathcal{M},\eta_s') \in \mathcal{R}$. 
\item If 
$\begin{aligned}[t]
&(E[\letrec{f=\lambda x.M'}{M}],R,S,k)\\ 
&\quad \to (E[M\{m/f\}],R[m \mapsto \lambda x.M'\{m/f\}],S,k)
\end{aligned}$ \\
prove by combining cases (7) and (10).
\item If $(E[mv],R,S,k) \to (E[\evalbox{M\{v/y\}}],R,S,k+1)$, prove like (10).
\item If $(E[\evalbox{v}],R,S,k) \to (E[v],R,S,k-1)$, then we have the equivalent symbolic transition
\[(E'[\evalbox{v'}],R',\sigma,pc,k) \to (E'[v'],R',\sigma,pc,k-1)\]
Since $v =_\mathcal{M} v'$, it is the case that $(\eta',\mathcal{M},\eta_s') \in \mathcal{R}$. 
\end{enumerate}
In the opposite direction, all cases are treated similarly to the ones above, but we now additionally have symbolic branching cases not directly covered by the previous cases.
\begin{enumerate}
\item If $(E[\assert(\kappa)],R,\sigma,pc,k) \to (E[\assert(0)],\sigma,pc \land (\sigma(\kappa) = 0))$, then there exists $\mathcal{M}$ such that $E[\assert(\kappa)]$ evaluates to $E[\assert(0)]$, which requires it to satisfy $(\sigma(\kappa) = 0)$. As such, we know $\mathcal{M} \vDash \sigma(\kappa) = 0$, meaning that $0 =_\mathcal{M} \kappa$. We thus have the following equivalent concrete configuration
\[(E'[\assert(0)],R',S,k) \not\to\]
which holds since $\eta'$ and $\eta_s'$ are equivalent terminal configurations.
\item If $(E[\assert(\kappa)],R,\sigma,pc,k) \to (E[()],\sigma,pc \land (\sigma(\kappa) \neq 0))$, prove as above.
\item If $(E[v_1 \oplus v_2],R,\sigma,pc,k) \to (E[\kappa],R,\sigma[\kappa \mapsto \sigma(v_1) \oplus \sigma(v_2)],pc,k)$, then we have the following equivalent concrete transition
\[(E'[i_1 \oplus i_2],R',S,k) \to (E'[i],R',S,k)\]
From the assumption, we know $i_1 \oplus i_2 =_\mathcal{M} \sigma(v_1) \oplus \sigma(v_2)$, so by choosing $\mathcal{M}' = \mathcal{M}[\kappa \mapsto i]$, we have that $\eta'$ and $\eta_s'$ are equivalent under $\mathcal{M}$. As such, this case holds.
\item If $(E[\ifm{\kappa}{M_1}{M_0}],R,\sigma,pc,k) \to (E[M_0],R,\sigma,pc \land (\sigma(\kappa)=0),k)$, then there must exist a model $\mathcal{M}\vDash \kappa = 0$. We thus have the following equivalent concrete transition
\[(E'[\ifm{0}{M_1'}{M_0'}],R',S,k) \to (E'[M_0'],R',S,k)\]
From the assumption, we know $M_0 =_\mathcal{M} M_0'$, so $\eta'$ and $\eta_s'$ are equivalent under $\mathcal{M}$. As such, this case holds.
\item If $(E[\ifm{\kappa}{M_1}{M_0}],R,\sigma,pc,k) \to (E[M_1],R,\sigma,pc \land (\sigma(\kappa)\neq 0),k)$, prove as above.
\end{enumerate}
\end{proof}
}}


}
\shorten{}{\section{Correctness of call counters}\label{app:bounds}
We prove our game semantics can be bounded, that is, games on independent components will always terminate if we bound the call counters. More precisely, Lemma~\ref{termination} states that our game semantics is strongly normalising when call counters are bounded, meaning that every transition sequence produced from a given configuration is finite. To do this, we will first define classes for ordering of moves. 

For any transition sequence $\rho_0 \to \dots \to \rho_i \to \dots$ and each $i>0$, we have the following two classes of configurations:
\begin{enumerate}[label=(\Alph*)]
\item either $\sizeof{\rho_{i}} < \sizeof{\rho_{i-1}}$, or
\item there exists $j < i-1$ such that $\sizeof{\rho_i} < \sizeof{\rho_j}$
\end{enumerate}
where $\sizeof{\rho} = (k_0 - k,\sizeof{M},l_0 - l)$ is the \emph{size} of $\rho$, and $\sizeof{\rho} < \sizeof{\rho'}$ is defined by the lexicographic ordering of the triple $(k_0 - k,\sizeof{M},l_0 - l)$, with bounds $k_0$ and $l_0$ such that $k \leq k_0$ and $l \leq l_0$ for semantic transitions to be applicable. If not present in the configuration, we look at the evaluation stack $\eval$ to find the top-most missing component. In other words, opponent configurations will have size $(k_0 - k,\sizeof{M},l_0 - l)$ where $E$ is the top-most one in $\eval$, whereas proponent configurations will have size $(k_0 - k,\sizeof{M},l_0 - l)$ where $l$ is the top-most one in $\eval$.

{\paragraph*{\boldemph{Theorem~\ref{termination}}}\em
For any concrete game configuration $\rho$ with bounds $k_0$ and $l_0$ for their corresponding counters $k$ and $l$, the semantics of $\rho$ is strongly normalising.}

\begin{proof}
We approach the proof two steps: (1) classify all possible transitions $\rho$ can make, thus classifying all reachable configurations, and (2) prove that the classes form a terminating sequence. For (1), considering all moves available to $\rho$, we have the following cases.
\begin{enumerate}
\item If $\rho \to \rho'$ is an (\textsc{Int}) move, we have two possibilities.
\begin{enumerate}
	\item For a transition $(E[\evalbox{v}],R,S,k) \to (E[v],R,S,k+1)$, where $k+1 \leq k_0$,
	we have a class (B) configuration since there must be a $(E[mv],R,S,k)$ such that $(E[mv],R,S,k) \to^* (E[v],R,S,k)$ which is lexicographically ordered since $\sizeof{v} < \sizeof{mv}$.
	\item Every other transition sequence is class (A) since they reduce the size of the term.
\end{enumerate}
\item If $\rho \to \rho'$ is a (\textsc{Pq}) move, we have that $\rho'$ is a class (A) configuration since $(k,\sizeof{E},l_0) < (k,\sizeof{E[mv]},l_0 - l)$ by lexicographic ordering.
\item If $\rho \to \rho'$ is an (\textsc{Oa}) move, we have a transition
\[((m,E)::\eval,l,\dots,k)_o\xrightarrow{ret(m,v)}(\eval,E[v],\dots,k)_p\]
which must be a result of the prior proponent question
\[(\eval,E[mv],\dots,k)_p\xrightarrow{call(m,v)}((m,E)::\eval,l_0,\dots,k)_o\]
where $\eval$ has an $l'$ on top. We thus have the following sequence
\[(\eval,E[mv],\dots,k)_p\to^*(\eval,E[v],\dots,k)_o\]
where $(k,\sizeof{E[v]},l) < (k,\sizeof{E[mv]},l')$, so $\rho'$ is a class (B) configuration.

\item If $\rho \to \rho'$ is an (\textsc{Oq}) move, we have the transition
\begin{align*}
(\eval,l, \dots,k)_o &\xrightarrow{call(m,v)} ((m,l+1)::\eval,mv,\dots,k)_p \\
&\to ((m,l+1)::\eval,\evalbox{M\{v/x\}},\dots,k+1)
\end{align*}
Ignoring the configuration in between, we take 
\[(\eval,l, R,S,\pub,\abs,k)_o \xrightarrow{call(m,v)} ((m,l+1)::\eval,\evalbox{M\{v/x\}},R,S,\pub,\abs,k+1)_p\] 
to be our new transition. We thus have that $\rho'$ is a class (A) configuration since $(k_0 - (k+1),\sizeof{\evalbox{M\{v/x\}}},l_0 - (l+1)) < (k_0 - k,\sizeof{E},l_0 - l)$ by lexicographic ordering.
\item If $\rho \to \rho'$ is a (\textsc{Pa}) move, we have the transition
\[((m,l)::\eval,v,\dots,k)_p \xrightarrow{ret(m,v)} (\eval,l, \dots,k)_o \]
which must be the result of a prior opponent question
\begin{align*}
(\eval,l + 1, \dots,k)_o&\xrightarrow{call(m,v)} ((m,l)::\eval,\evalbox{M\{v/x\}},\dots,k+1)_p\\ 
&\to^* ((m,l)::\eval,\evalbox{v},\dots,k+1)_p\\
&\to ((m,l)::\eval,v,\dots,k)_p\\
&\xrightarrow{ret(m,v)} (\eval,l,\dots,k)_o
\end{align*}
where $E'$ is the topmost evaluation context in $\eval$. We thus have that $(k_0 - k,E',l_0 - l)<(k_0 - k,E',l_0 - (l+1))$, so $\rho'$ is a class (B) configuration.
\end{enumerate}
Now, for part (2), let us assume there is an infinite sequence
\[\rho_0 \to \dots \to \rho_j \to \dots \to \rho_i \to \dots\]
Since all reachable configurations fall into either (A) or (B) class, we know that the sequence must comprise only (A) and (B) configurations. In this infinite sequence, we know that all sequences of (A) configurations are in descending size, so (A) sequences cannot be infinite. We also observe that (B) configurations are padded with (A) sequences. For instance, if $\rho_i$ is a (B) configuration, and $\rho_j$ is its matching configuration, there may have nested (B) configurations between $\rho_j$ and $\rho_i$, as well as (A) sequences padding these.

Additionally, these (B) configurations can only occur as a return to a call, so we know they only occur together with the introduction of evaluation boxes $\evalbox{\bullet}$. Since these brackets occur in pairs and are introduced in a nested fashion, we know $\eval$ can only contain evaluation contexts with well-bracketed evaluation boxes, meaning that there cannot be interleaved sequences of (B) configurations where their target configurations intersect. More specifically, the sequence 
\[\rho_0 \to \dots \to \rho_j \to \dots \to \rho_j' \to \dots \to \rho_i \to \dots \to \rho_i' \to \dots\]
where $\rho_i'$ matches $\rho_j'$ and $\rho_i$ matches $\rho_j$ is not possible.

Now, ignoring all (A) and nested (B) sequences, we are left with an infinite stream of top-level (B) sequences which are also in descending order. Since starting size is finite, we cannot have an infinite stream of (B) sequences. Thus, the assumption does not hold, so our semantics is strongly normalising.
\end{proof}

{\lemma[Call counters preserved after application]{\label{kbounds}}{Given the following sequences of game moves:
		\begin{align*}
		(1)&~(\eval,E[M],R,S,\pub,\abs,k)_p \twoheadrightarrow (\eval,E[v],R',S',\pub',\abs',k')_p\\
		(2)&~((m,E)::\eval,l,R,S,\pub,\abs,k)_o \twoheadrightarrow (\eval,E[v],R',S',\pub',\abs',k')_p
		\end{align*}
		where in both (1) and (2) we apply $\twoheadrightarrow$ until we reach the first occurrence of $\eval$ and $E[\evalbox{\bullet}]$ in the sequence of moves, and $\twoheadrightarrow$ is the reflexive transitive closure of game transitions ($\to$), it must be the case that $k=k'$ in both (1) and (2).}
	\begin{proof}
		Suppose we have the following transition sequences
		\begin{align*}
		(1)&~(\eval,E[M],R,S,\pub,\abs,k)_p\twoheadrightarrow (\eval,E[v],R',S',\pub',\abs',k')_p\\
		(2)&~((m,E)::\eval,l,R,S,\pub,\abs,k)_o\twoheadrightarrow (\eval,E[v],R',S',\pub',\abs',k')_p
		\end{align*}
		By induction on the length of the transition sequence (1) and mutually on the length of (2), we have the following cases, where we say $IH_p$ and $IH_o$ for the inductive hypotheses of (1) and (2) respectively:
		
		\paragraph*{\bf Base cases:}
		\begin{itemize}
			\item \textbf{Case (1):} If $M = v$, then $(\eval,E[v],R,S,\pub,\abs,k)_p$ is a zero-step transition. This case holds since $k=k$.
			\item \textbf{Case (2):} If the opponent returns, then we have a one-step transition
			\begin{align*}
			&((m,E)::\eval,l,R,S,\pub,\abs,k)_o\\
			&\quad \xrightarrow{\mathtt{ret}(m,v)}
			(\eval,E[v],R',S,\pub,\abs',k)_p
			\end{align*}
			This case holds since $k=k$.
		\end{itemize}
		
		\paragraph*{\bf Inductive cases (1):}
		\begin{itemize}
			\item if the sequence contains only internal moves, i.e. no call to the opponent is made, then we have the following transition sequence by the assumption in (1) that a value is reached.
			\begin{align*}
			(\eval,E[M],R,S,\pub,\abs,k)_p\twoheadrightarrow (\eval,E[v],R',S',\pub',\abs',k')_p
			\end{align*}
			By the inductive hypothesis $IH_p$, we know that $k=k'$.
			\item if the sequence of internal moves gets stuck, i.e. a call to the opponent is made, then we have the following transition sequence where $m \notin dom(R')$.
			\begin{align*}
			(\eval,E[M],R,S,\pub,\abs,k)_p
            &\twoheadrightarrow(\eval,E[E'[mv]],R',S',\pub',\abs',k')_p\\
			&\xrightarrow{\mathtt{call}(m,v)}((m,E[E'[\bullet]])::\eval,l,R',S',\pub'',\abs',k')_o
			\end{align*}
			where $\eval$ is of the form $(m,l)::\eval'$. By our assumption in (1) and (2), we know that the configuration must eventually lead to a value $v$. As such, the following transition must eventually occur.
			\begin{align*}
			&((m,E[E[\bullet]])::\eval,l,R',S',\pub'',\abs',k')_o\\
			&\quad\twoheadrightarrow (\eval,E[E[v]],R',S',\pub',\abs',k'')_p
			\end{align*}
			By the inductive hypothesis $IH_o$, we know that $k'=k''$. In addition, by our assumption that a value must be reached, it is the case that the following transition occurs.
			\begin{align*}
			&(\eval,E[E[v]],R',S',\pub',\abs',k'')_p\\
			&\quad \twoheadrightarrow (\eval,E[v],R'',S'',\pub'',\abs'',k''')_p
			\end{align*}
			By the inductive hypothesis $IH_p$, we know that $k=k'''$.
		\end{itemize}
		
		\paragraph*{\bf Inductive cases (2):}
		\begin{itemize}
			\item if a call to the proponent is made, then we have the following transition.
			\begin{align*}
			&(\eval',l,R,S,\pub,\abs,k)_o\\ 
			&\quad \xrightarrow{\mathtt{call}(m',v)}((m',l+1)::\eval',m'v,R',S,\pub,\abs',k)_p
			\end{align*}
			from the assumption that a value must be reached, we know that the following transition occurs.
			\begin{align*}
			&((m',l+1)::\eval',m'v,R',S,\pub,\abs',k)_p\\
			&\quad \twoheadrightarrow ((m',l+1)::\eval',v,R'',S',\pub',\abs'',k')_p
			\end{align*}
			From the inductive hypothesis $IH_p$, we know that $k=k'$.
		\end{itemize}		
\end{proof}}


}



\end{document}